\pgfplotsset{compat=1.16}
\newtheorem{theorem}{Theorem}
\newtheorem{conjecture}{Conjecture}
\newtheorem{lemma}{Lemma}
\newtheorem{definition}{Definition}
\newtheorem{assumption}{Assumption}
\newtheorem{remark}{Remark}
\newcommand{\xb}{\mathbf{x}}
\newcommand{\xt}{\xb_T}
\newcommand{\bb}{\mathbf{b}}
\newcommand{\yb}{\mathbf{y}}
\newcommand{\yt}{\yb_T}
\newcommand{\zb}{\mathbf{z}}
\newcommand{\wb}{\mathbf{w}}
\newcommand{\ji}{\xb_T^{j|i}}
\newcommand{\xtj}{\xb_T^{(j)}}
\newcommand{\ytj}{\yb_T^{(j)}}
\newcommand{\mR}{\mathcal{R}}
\newcommand{\mV}{\mathcal{V}}
\newcommand{\mF}{\mathcal{F}}
\newcommand{\mE}{\mathcal{E}}
\newcommand{\mG}{\mathcal{G}}
\newcommand{\mH}{\mathcal{H}}
\newcommand{\mC}{\mathcal{C}}
\newcommand{\mS}{\mathcal{S}}
\newcommand{\mN}{\mathcal{N}}
\newcommand{\mT}{\mathcal{T}}
\newcommand{\mL}{\mathcal{L}}
\newcommand{\eps}{\varepsilon}
\title{A Fragile multi-CPR Game\thanks{Research was supported by the Hellenic Foundation for Research and Innovation (H.F.R.I.) under the ?First Call for H.F.R.I. Research Projects to support Faculty members and Researchers and the procurement of high-cost research equipment grant? (Project Number: HFRI-FM17-2436).}}
\author{
Christos~Pelekis\thanks{School of Electrical and Computer Engineering, National Technical University of Athens, Zografou, Greece, 15780, e-mail: pelekis.chr@gmail.com} \and   Panagiotis~Promponas\thanks{School of Electrical and Computer Engineering, National Technical University of Athens, Zografou, Greece, 15780, e-mail: ppromponas@mail.ntua.gr} \and 
Juan Alvarado\thanks{KU Leuven, Department of Computer Sciences,  Celestijnenlaan 200A, 3001, Belgium, e-mail: Juan.Alvarado@cs.kuleuven.be}
\and    Eirini~Eleni~Tsiropoulou\thanks{Department
of Electrical and Computer Engineering, University of New Mexico, New Mexico,
USA, 87131, e-mail: eirini@unm.edu} \and  Symeon~Papavassiliou\thanks{School of Electrical and Computer Engineering, National Technical University of Athens, Zografou, Greece, 15780, e-mail: papavass@mail.ntua.gr}}
\begin{document}

\maketitle

\begin{abstract}
A Fragile CPR Game is an instance of a resource sharing game where a common-pool resource, which is prone to failure due to overuse, is shared among several players. Each player has a fixed initial endowment and is faced with the task of investing in the common-pool resource without forcing it to fail. The return from the common-pool resource is subject to uncertainty and is perceived by the players in a prospect-theoretic manner. It is shown in Hota et al.~\cite{Hota} that, under some mild assumptions, a Fragile CPR Game admits a unique Nash equilibrium.  In this article we investigate an extended version of a Fragile CPR Game, in which  players are allowed to share multiple common-pool resources that are also prone to failure due to overuse. We refer to this game as a Fragile multi-CPR Game. Our main result states that, under some mild assumptions, 
a Fragile multi-CPR Game admits a Generalized Nash equilibrium. Moreover, we show that, when there are more players than common-pool resources, 
the set consisting of all Generalized Nash 
equilibria of 
a Fragile multi-CPR Game is of Lebesgue measure zero. 
\end{abstract}

\noindent \emph{Keywords and phrases}: CPR games; prospect theory; Generalized Nash equilibrium 

\noindent \emph{MSC(2010)}: 91A06; 90C25

\section{Prologue, related work and main results}

In this article we shall be concerned with a    
\emph{resource sharing game}. Such  games model 
instances in which a common-pool resource (henceforth CPR), which is prone to failure due to overuse, is shared among several users who are addressing the problem of choosing how much to exploit from / invest in  the CPR  without forcing it to fail. Resource sharing games arise in a variety of problems ranging from economics to computer science. 
Examples of CPRs include arable lands, forests, fisheries, groundwater basins, spectrum and computing resources, the atmosphere, among many others. Such CPRs are, on the one hand, usually regenerative but, on the other hand, subject to 
failure when several agents exploit the resource in  an unsustainable manner.  Each agent exploits / invests in the CPR  
in order to obtain an individual benefit. However, 
it has been observed that actions which are 
individually rational  (e.g. Nash equilibria) may result in outcomes that are collectively irrational, thus giving rise to 
a particular social dilemma known as ``the tragedy of the commons" (see~\cite{Hardin}). 
It is thus of interest to 
investigate equilibrium points of resource sharing games, 
in order to better 
understand situations where such a social dilemma arises. 
This is a topic that has drawn considerable attention, both from a theoretical and a practical perspective.  
We refer the reader to~\cite{Aflaki, Budescu_et_al, Hota, Hota_Sundaram_2, Keser_Gardner, Ostrom, Ostrom_et_al, Vamvakas, VamvakasTP, Walker_Gardner} for   applications, variations, and for further references on resource sharing games. 
Let us remark that most results in the literature 
appear to focus on games 
in which players invest on a single CPR. 
In this article we investigate a  resource sharing 
game in which players are allowed to invest in 
several CPRs, whose performances are 
mutually independent. To the best of our knowledge our work appears to be 
among the first to consider resource sharing games on more than one CPR.

We shall be interested in a multi-version  of a particular resource sharing game, which is referred to as a  \emph{Fragile CPR Game}. It is initially  introduced in~\cite{Hota} and  is played by several players, each of whom has a fixed initial endowment and must decide how much to invest in the CPR without forcing it to fail. The return from the CPR is subject to uncertainty, and  is perceived by the players in a prospect-theoretic manner.  
It is shown in~\cite{Hota} that a Fragile CPR Game admits a unique Nash equilibrium.  In this article we 
focus on  an extended version of a Fragile CPR Game in which players are allowed to share multiple CPRs. 
We refer to the corresponding game as a  
\emph{Fragile multi-CPR Game} and investigate its 
\emph{Generalized Nash equilibria}. 
Our main result states that the set consisting of all Generalized Nash equilibria of a  Fragile multi-CPR Game is non-empty and, when there are more players than CPRs, ``small" in a measure-theoretic sense.  
In the next subsection we introduce  the 
Fragile CPR Game and state the main result from~\cite{Hota}. We then proceed, in Subsection~\ref{def:multiCPR}, with defining the Fragile multi-CPR Game, which is the main target of this work, and stating our main results.

\subsection{Fragile CPR game}

Throughout the text, given a positive integer $n$, we denote by $[n]$ the set  $\{1,\ldots, n\}$.
In this article we extend a particular resource sharing game  to the case where the players are allowed to share multiple resources, by determining how to distribute/invest their initial fixed endowment in the available CPRs. The resource sharing game under consideration is referred to as a   \emph{Fragile CPR Game},  
and may be seen as a prospect-theoretic version of the \emph{Standard CPR Game}, introduced in~\cite[p.~109]{Ostrom_et_al}. 

The Fragile CPR Game is introduced in~\cite{Hota}, and is played by $n$ players, who are assumed to be indexed by the set $[n]$.  It is also assumed that there is a single CPR, and  each player has to decide  how much to invest in the CPR. Each player has an available endowment, which, without loss of generality, is assumed to be equal to $1$.  
Every player, say $i\in [n]$, invests  an amount $x_i\in [0,1]$ in the CPR. The total investment of all players in the CPR is denoted $\xt = \sum_{i\in [n]} x_i$. 
The return from the CPR is  subject to uncertainty, that is there is a probability $p(\xt)$ that the CPR will 
fail, and this probability depends on the total investment of the players in the CPR.  In case the CPR fails, 
the players lose their investment in the CPR. In case the CPR does not fail, then there is a \emph{rate of return} from the CPR which depends on the total investment of all players, and is denoted by $\mR(\xt)$. The rate of return is assumed to satisfy $\mR(\xt)>1$, for all $\xt$. 

In other words, player $i\in [n]$ gains $x_i\cdot \mR(\xt)-x_i$  with probability $1-p(\xt)$, and gains $-x_i$ with probability $p(\xt)$. 
The situation is modelled through a prospect-theoretic perspective, in the spirit of~\cite{Kahneman_Tversky}. More precisely, let 
$x^{(i)} = \sum_{j \in [n]\setminus \{i\}} x_j$; hence it holds  $x_i + x^{(i)} = \xt$. Then the utility of player $i\in [n]$ is given by the following utility function:

\begin{equation}\label{utility_Hota}
\mV_i(x_i, x^{(i)})=
\begin{cases}
(x_i\cdot (\mR(\xt)-1))^{a_i}, &\text{ with probability } 1-p(\xt),\\
-k_i x_i^{a_i}, &\text{ with probability } p(\xt).
\end{cases}
\end{equation}

The parameters $k_i$ and $a_i$ are fixed and player-specific.  
Let us  note that the  parameter $k_i$ may be thought of as capturing the ``behaviour" of each player. 
More precisely, when $k_i >1$ then a player weighs losses more than gains, a behaviour which is referred to  as 
``loss averse". On the other hand, when $k_i\in [0,1]$
then a player weighs gains more than losses, a behaviour which is referred to as ``gain seeking".  Capturing  behaviours of this type among players constitutes a central aspect of prospect theory (see, for example,~\cite{Wakker}). Notice that when $k_i=1$ and $a_i=1$ then player $i\in [n]$ is \emph{risk neutral}.

Each player of the Fragile CPR game is an expected utility maximizer, and therefore chooses $x_i\in [0,1]$ that maximizes the expectation 
of $\mV(x_i,x^{(i)})$, i.e, that maximizes the \emph{utility} of player $i\in [n]$ which is given by 
\[
\mathbb{E}\left(\mV_i(x_i, x^{(i)})\right) = x_i^{a_i} \cdot \mF_i(\xt) \, ,
\]
where 
\begin{equation}\label{eff_rate}
\mF_i(\xt) =  (\mR(\xt)-1)^{a_i} \cdot (1-p(\xt)) - k_i\cdot p(\xt) 
\end{equation}
is the~\emph{effective rate of return} to payer $i\in [n]$. 

The main result in~\cite{Hota} establishes, among other things, the existence of a unique Nash equilibrium for the Fragile CPR game, 
provided the following hold true. 

\begin{assumption}\label{first_ass}
Consider a Fragile CPR game that satisfies the following properties. 
\begin{enumerate}
\item It holds $p(0)=0$ and $p(\xt)=1$, whenever $\xt\ge 1$.  
\item $a_i \in (0,1]$ and $k_i >0$, for all $i\in [n]$. 
\item For all $i\in [n]$ and all $\xt\in (0,1)$ it holds $\frac{\partial}{\partial \xt} \mF_{i}(\xt), \frac{\partial^2}{\partial \xt^2} \mF_{i}(\xt) <0$, where $\mF_i$ is given by~\eqref{eff_rate}.  
\end{enumerate}
\end{assumption}

In other words, the third condition in 
Assumption~\ref{first_ass} states that the effective rate 
of return of all players is a strictly decreasing and concave function.
An example of an effective rate of return $\mF_{i}$ satisfying the conditions of Assumption~\ref{first_ass} is obtained 
by choosing $a_i < 1/2$,  $p(\xt) =\xt^2$, and $\mR(\xt)=2-e^{\xt-1}$, as can be easily verified. 

Before proceeding with the main result from~\cite{Hota}, let us recall here the notion of Nash equilibrium, adjusted to the setting of the Fragile CPR Game. 

\begin{definition}(Nash Equilibrium)
\label{def:NE}
A \emph{Nash equilibrium} for a Fragile CPR Game is a 
strategy profile $(x_1,\ldots,x_n)\in [0,1]^n$ such that for all $i\in [n]$ it holds:
\[
\mathbb{E}\left(\mV(x_i,x^{(i)}) \right) \ge \mathbb{E}\left(\mV(z_i,x^{(i)}) \right) \, , \text{ for all } \, z_i \in [0,1] \, .
\]
\end{definition}

In other words, $(x_1,\ldots,x_n)\in [0,1]^n$ is a Nash equilibrium for a Fragile CPR Game if no player can increase her utility  by unilaterally changing 
strategy. The main result in Hota et al.~\cite{Hota} reads as follows. 

\begin{theorem}[\cite{Hota}]
\label{Hota}
Consider a Fragile CPR Game that satisfies Assumption~\ref{first_ass}. Then the game admits a \emph{unique} Nash equilibrium. 
\end{theorem}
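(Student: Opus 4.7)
The plan is to combine a standard concavity-based existence argument with a reduction of uniqueness to a one-dimensional fixed-point problem in the aggregate investment $\xt$.

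For existence, I would fix $x^{(i)}$ and differentiate $\mV_i(x_i;x^{(i)})=x_i^{a_i}\mF_i(x_i+x^{(i)})$ twice in $x_i$:
\[
\frac{\partial^2\mV_i}{\partial x_i^2}=a_i(a_i-1)x_i^{a_i-2}\mF_i(\xt)+2a_i x_i^{a_i-1}\mF_i'(\xt)+x_i^{a_i}\mF_i''(\xt).
\]
Under Assumption~\ref{first_ass}, whenever $\mF_i(\xt)>0$ all three summands are non-positive and the last is strictly negative, so $\mV_i(\cdot;x^{(i)})$ is strictly concave; wherever $\mF_i(\xt)\le 0$ throughout the feasible range of $x_i$, the corner $x_i=0$ is clearly optimal since $x_i^{a_i}\mF_i(\xt)\le 0$. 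Hence the best-response correspondence is single-valued and continuous in $x^{(i)}$, and Brouwer's fixed-point theorem applied to the best-response map on $[0,1]^n$ yields a Nash equilibrium.

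For uniqueness, I would exploit that each $\mV_i$ depends on opponents only through the aggregate. An interior best response of player $i$ satisfies the first-order condition $a_i\mF_i(\xt)+x_i\mF_i'(\xt)=0$, which yields
\[
x_i=h_i(\xt):=-\frac{a_i\mF_i(\xt)}{\mF_i'(\xt)}\quad\text{if }\mF_i(\xt)>0,\qquad h_i(\xt):=0\quad\text{otherwise.}
\]
Since $\mF_i(0)=(\mR(0)-1)^{a_i}>0$ and $\mF_i(1)=-k_i<0$, strict monotonicity of $\mF_i$ gives a unique threshold $\xt_i^0\in(0,1)$ separating the two regimes. Any Nash equilibrium corresponds to an aggregate $\xt^*\in[0,1]$ satisfying the consistency relation $\xt^*=\sum_{i\in[n]}h_i(\xt^*)$, so uniqueness reduces to showing that $g(\xt):=\sum_i h_i(\xt)-\xt$ has a unique zero in $[0,1]$.

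To establish strict monotonicity of $g$, on $[0,\xt_i^0)$ a short computation gives
\[
h_i'(\xt)=-a_i\,\frac{\mF_i'(\xt)^2-\mF_i(\xt)\mF_i''(\xt)}{\mF_i'(\xt)^2},
\]
and the assumptions $\mF_i>0$, $\mF_i''<0$ make the numerator strictly positive, so $h_i'<0$ there; beyond $\xt_i^0$, $h_i$ is identically zero and continuous at the threshold. Hence $g$ is strictly decreasing on $[0,1]$, and since $g(0)>0$ and $g(1)=-1<0$, it has a unique root $\xt^*\in(0,1)$ that determines the unique equilibrium profile $(h_i(\xt^*))_{i\in[n]}$. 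The main obstacle I anticipate is the careful treatment of the boundary: one must verify that the upper corner $x_i=1$ is never binding at equilibrium (so that only the interior and $x_i=0$ regimes arise), and one must confirm continuity and monotonicity of the piecewise-defined $h_i$ across $\xt_i^0$. These points are routine but essential for the one-dimensional reduction to be valid.
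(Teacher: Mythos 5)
Your proposal is correct and follows essentially the same route as the proof in \cite{Hota} that this paper cites, summarizes, and later generalizes: existence via strict concavity of $x_i\mapsto x_i^{a_i}\mF_i(\xt)$ on the region where $\mF_i>0$ (hence a single-valued continuous best response) plus Brouwer, and uniqueness via the first-order condition $x_i=-a_i\mF_i(\xt)/\mF_i'(\xt)$ together with strict monotonicity of this auxiliary function in the aggregate --- precisely the function $\mG_{ij}$ of Section~\ref{sec:5} and the comparison argument of Theorem~\ref{gen_one_CPR}. The boundary points you flag (continuity of $h_i$ at the threshold where $\mF_i$ vanishes, and that $x_i=1$ is never a best response) are genuine but routine, and are handled here by Lemma~\ref{VTP_lemma} and the first condition of Assumption~\ref{first_ass}.
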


We now proceed with defining the \emph{Fragile multi--CPR Game}, whose equilibria are the main target of the present article.  

\subsection{Fragile multi-CPR game}\label{def:multiCPR}

In this article we introduce and investigate a multi-version of the Fragile CPR game.  
In order to be more precise, we need some extra piece of notation. 
If $m$ is a positive integer, let $C_m$ denote the set:
\begin{equation}\label{strategy_space}
C_m = \left\{ (x_1,\ldots, x_m)\in [0,1]^m : \sum_{i\in [m]} x_i \le 1 \right\} \, . 
\end{equation}
Moreover, let $\mC_n$ denote the Cartesian product $\prod_{i\in [n]} C_m$ and let $\mC_{-i} = \prod_{[n]\setminus \{i\}}C_m$ denote the Cartesian product 
obtained from $\mC_n$ by deleting its $i$-th component.  Elements in $\mC_{-i}$ are denoted by $\xb_{-i}$, as is customary, and  
an element $\xb=(\xb_1,\ldots,\xb_n)\in\mC_n$ is 
occasionally written $\xb=(\xb_i,\xb_{-i})$, for $i\in [n]$, $\xb_i\in C_m$ and $\xb_{-i}\in\mC_{-i}$. We now proceed with defining the \emph{Fragile multi-CPR 
Game}.

Suppose that there are $n$ players, indexed by the set $[n]$, each having an \emph{initial endowment} equal to $1$. Assume further 
that there are $m$ available CPRs, where $m\ge 1$ is an integer. Every player has to decide how much to invest in each CPR. More precisely, 
every player, say $i\in [n]$, chooses an element $\xb_i = (x_{i1},\ldots, x_{im}) \in C_m$ and  invests $x_{ij}$
in the $j$-th CPR. 
Given strategies $\xb_i=(x_{i1},\ldots,x_{im})\in C_m, i\in [n]$, of the players and an integer $j\in [m]$, set   
\begin{equation}\label{sum_xs}
\xtj = \sum_{i\in [n]} x_{ij} \quad \text{ and } \quad \ji = \sum_{\ell\in [n]\setminus \{i\}}x_{\ell j}  \, . 
\end{equation}
Hence it holds $\xtj = x_{ij} + \ji$, for all $i\in [n]$. 
In other words, $\xtj$ equals the total investment of 
the players in the $j$-th CPR and $\ji$ equals the total investment of all players except player $i$ in the $j$-th CPR.  
As in the case of the Fragile CPR Game, we assume that the performance of each CPR is subject to uncertainty, and that each CPR has a corresponding rate of return, both depending on the total investment of the players in each CPR. More precisely, 
for $j\in [m]$, let $\mR_j(\xtj)$ denote the \emph{return rate} of the $j$-th CPR and let $p_j(\xtj)$ denote the \emph{probability that the $j$-th CPR fails}. We assume that $\mR_j(\xtj)>1$ holds true, for all $\xtj$. 

The \emph{utility} of player $i\in [n]$ from the $j$-th CPR is given, as in the case of the Fragile CPR game, via the following prospect-theoretic utility function:

\begin{equation}\label{utility_2}
\mV_{ij}(x_{ij}, \ji)=
\begin{cases}
(x_{ij}\cdot (\mR_j(\xtj)-1))^{a_i}, &\text{ with probability } 1-p_j(\xtj),\\
-k_i x_{ij}^{a_i}, &\text{ with probability } p_j(\xtj).
\end{cases}
\end{equation}

We assume that the performance of each CPR is \emph{independent} of the performances of all  remaining CPRs. 
Players in the Fragile multi-CPR Game are expected utility maximizers. 
If player $i\in [n]$ plays the vector $\xb_i = (x_{i1},\ldots, x_{im})\in C_m$, and the rest of the players play $\xb_{-i}\in\mC_{-i}$ then her 
expected utility from the $j$-th CPR is equal to 
\begin{equation}\label{1_dimensional}
\mE_{ij}(x_{ij};\ji) :=\mathbb{E}\left(\mV_{ij}(x_{ij}, \ji)\right) =  x_{ij}^{a_i}\cdot \mF_{ij}(\xtj) \, ,
\end{equation}
where 
\begin{equation}\label{effective_rr}
\mF_{ij}(\xtj):= (\mR_{j}(\xtj)-1)^{a_i} (1-p_j(\xtj)) - k_i  p_j(\xtj)
\end{equation}
is the \emph{effective rate of return} to the $i$-th player from the $j$-th CPR.
Notice that, since we assume that the performance of each CPR is independent of the performances of the remaining CPRs, $\mE_{ij}$ depends only on the values of $x_{ij},\ji$ and does not depend on the values of $x_{ik},\mathbf{x}_{T}^{k|i}$, for $k\neq j$. In other words, 
the (total) prospect-theoretic \emph{utility} of player $i\in [n]$ in the Fragile multi-CPR Game is given by: 
\begin{equation}\label{utility}
\mV_i(\xb_i; \xb_{-i}) = \sum_{j\in [m]} \mE_{ij}(x_{ij};\ji) \, .
\end{equation}

In this article we establish the existence of a Generalized Nash equilibrium for the Fragile multi-CPR game, provided the following holds true.  

\begin{assumption}\label{ass_prob}
Consider a Fragile multi-CPR Game that satisfies the following properties:
\begin{enumerate}
\item For every $j\in [m]$ it holds $p_j(0)=0$ and  $p_j(\xtj) = 1$, whenever 
$\xtj\ge 1$. 
\item It holds $a_i \in (0,1]$ and $k_i>0$, for all $i\in [n]$.
\item For all $i\in [n]$ and all $j\in [m]$ it holds $\frac{\partial}{\partial \xtj} \mF_{ij}(\xtj), \frac{\partial^2}{\partial (\xtj)^2} \mF_{ij}(\xtj) <0$, where $\mF_{ij}$ is given by~\eqref{effective_rr}.
\end{enumerate} 
\end{assumption}

Notice that, similarly to the Fragile CPR Game, the third condition in Assumption~\ref{ass_prob} states that  the effective rate of return of every player from any CPR is a \emph{strictly decreasing and concave} function. An example of an effective rate of return satisfying Assumption~\ref{ass_prob} is obtained by 
choosing, 
for $j\in [m]$, the return rate of the $j$-th CPR to be equal to $\mR_j(\xtj) = c_j +1$, where $c_j>0$ is a constant, and 
the  probability that the $j$-th CPR fails to be a strictly increasing and convex, on the interval $[0,1]$, function such that $p_j(\xtj)=1$, when $\xtj\ge 1$. 

Before stating our main result, let us proceed with 
recalling the notion of Generalized Nash equilibrium (see~\cite{Facchinei_Kanzow}). 

Consider the, above-mentioned, Fragile multi-CPR Game, 
denoted $G$.  
Assume further that, for each player 
$i\in [n]$, there exists a correspondence $\vartheta_i: \mC_{-i}\to 2^{C_m}$ mapping every element $\xb_{-i}\in\mC_{-i}$ to a set $\vartheta_i(\xb_{-i})\subset C_m$. The set-valued correspondence $\vartheta_i$ is referred to as a \emph{constraint policy} and may be thought of as determining the set of strategies that are feasible for player $i\in [n]$, given  $\xb_{-i}\in\mC_{-i}$.
We refer to the tuple $(G,\{\vartheta_i\}_{i\in [n]})$
as the \emph{Constrained Fragile multi-CPR Game} with 
constraint policies $\{\vartheta_i\}_{i\in [n]}$. Corresponding to a constrained game is the following notion of \emph{Constrained Nash equilibrium} (or \emph{Generalized Nash equilibrium}):

\begin{definition}[GNE]\label{def:gne}
A \emph{Generalized Nash equilibrium} for a Constrained Fragile multi-CPR Game $(G,\{\vartheta_i\}_{i\in [n]})$ is a strategy profile $\xb^{\ast}=(\xb_1^{\ast},\ldots,\xb_n^{\ast})\in \mC_n$  such that
\begin{enumerate} 
 \item For all $i\in [n]$, it holds $\xb_i^{\ast} \in \vartheta_i(\xb_{-i}^{\ast})$,  for all $i\in [n]$, and 
\item For all $i\in [n]$, it holds $\mV_i(\xb_i^{\ast}; \xb_{-i}^{\ast}) \ge 
 \mV_i(\xb_i; \xb_{-i}^{\ast})$,  for all  $\xb_i\in \vartheta_i(\xb_{-i}^{\ast})$,
where $\mV_i(\,\cdot\, ;\,\cdot\,)$ is the utility function of the $i$-th player in a Fragile multi-CPR Game, given in~\eqref{utility}. 
\end{enumerate} 
\end{definition}

In other words,  $\xb^{\ast}=(\xb_1^{\ast},\ldots,\xb_n^{\ast})\in\mC_n$ is a GNE if no player can increase her 
utility by unilaterally changing her strategy to any other element of the set $\vartheta_i(\xb_{-i}^{\ast})$.
We may now proceed with stating our main results.

\begin{theorem}\label{unique_NE} 
Consider a Fragile multi-CPR game, $G$, with $n\ge 1$ players and $m\ge 1$ CPRs, which  satisfies 
Assumption~\ref{ass_prob}. Then there exist constraint 
policies $\{\vartheta_i\}_{i\in [n]}$ such that the 
Constraint Fragile multi-CPR Game $(G,\{\vartheta_i\}_{i\in [n]})$  admits a Generalized Nash equilibrium.  
\end{theorem}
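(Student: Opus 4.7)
The plan is to establish the existence of a Generalized Nash equilibrium via the Kakutani–Fan–Glicksberg fixed-point theorem, after equipping each player with a constraint policy that confines the joint investments to the regime where every effective rate of return is non-negative. Under Assumption~\ref{ass_prob} one has $\mF_{ij}(0)=(\mR_j(0)-1)^{a_i}>0$, while $\mF_{ij}(1)=-k_i<0$ because $p_j(1)=1$; combined with the strict monotonicity of $\mF_{ij}$, this yields, for every $(i,j)\in[n]\times[m]$, a unique $\eta_{ij}\in(0,1)$ with $\mF_{ij}(\eta_{ij})=0$. Setting $\eta_j:=\min_{i\in[n]}\eta_{ij}$, I would define the joint feasibility set
\[
K \;:=\; \Bigl\{\xb\in\mC_n \;:\; \xtj\le \eta_j \text{ for every } j\in[m]\Bigr\}
\]
together with the constraint policies $\vartheta_i(\xb_{-i}):=\{\xb_i\in C_m : (\xb_i,\xb_{-i})\in K\}$.

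First I would check that $K$ is non-empty (it contains $\mathbf{0}$), convex, and compact, and that for every $\xb_{-i}$ arising from a point of $K$ the set $\vartheta_i(\xb_{-i})$ is non-empty (since $\ji\le\xtj\le\eta_j$ already leaves room for $\xb_i=\mathbf{0}$), convex, and compact. Because the defining inequalities are affine and Slater-type strict feasibility holds at $\mathbf{0}$, the correspondence $\vartheta_i$ is both upper and lower hemicontinuous.

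The analytical heart of the argument is to show that the payoff
\[
\mV_i(\xb_i;\xb_{-i}) \;=\; \sum_{j\in[m]} x_{ij}^{a_i}\,\mF_{ij}(x_{ij}+\ji)
\]
is continuous on $\mC_n$ and concave in $\xb_i$ for every fixed $\xb_{-i}$ with $\xb\in K$. Continuity follows at once from $a_i>0$ and the smoothness of $\mF_{ij}$. For concavity, $\mV_i$ is a sum over $j$ of functions each depending on $\xb_i$ only through $x_{ij}$, so the Hessian in $\xb_i$ is diagonal and the problem reduces to showing that $x_{ij}\mapsto x_{ij}^{a_i}\mF_{ij}(x_{ij}+\ji)$ is concave on $[0,\eta_j-\ji]$. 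Twofold differentiation produces
\[
a_i(a_i-1)\,x_{ij}^{a_i-2}\,\mF_{ij}(\xtj) \;+\; 2a_i\,x_{ij}^{a_i-1}\,\mF_{ij}'(\xtj) \;+\; x_{ij}^{a_i}\,\mF_{ij}''(\xtj),
\]
and on the constraint region each summand is non-positive by the three parts of Assumption~\ref{ass_prob}; the crucial role of the constraint $\xtj\le\eta_j\le\eta_{ij}$ is precisely to make the first summand non-positive, since it relies on $\mF_{ij}(\xtj)\ge 0$.

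Finally I would introduce the best-response correspondence $\Psi:K\to 2^{K}$ by $\Psi(\xb):=\prod_{i\in[n]}\arg\max_{\yb_i\in\vartheta_i(\xb_{-i})}\mV_i(\yb_i;\xb_{-i})$. Continuity of $\mV_i$ and continuity of $\vartheta_i$ yield, through Berge's maximum theorem, that $\Psi$ has a closed graph, while the concavity established above forces $\Psi(\xb)$ to be non-empty and convex. The Kakutani–Fan–Glicksberg theorem then delivers a fixed point $\xb^{\ast}\in K$, which by construction satisfies both clauses of Definition~\ref{def:gne}. I expect the main obstacle to be the concavity verification: without carefully selecting $\eta_j$ so that $\vartheta_i$ restricts play to $\{\mF_{ij}\ge 0\}$, the term $a_i(a_i-1)x_{ij}^{a_i-2}\mF_{ij}$ changes sign, $x_{ij}^{a_i}\mF_{ij}$ fails to be concave, and the best-response correspondence loses convex values, so Kakutani no longer applies.
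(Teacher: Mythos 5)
Your overall strategy is the same as the paper's: cut the strategy space down to the region where every $\mF_{ij}$ is non-negative (your $\eta_{ij}$ is the paper's $\omega_{ij}$), verify that the diagonal Hessian of $\mV_i$ is then non-positive so that the restricted utility is concave, and invoke a fixed-point existence result for constrained games. The concavity computation and the identification of why the constraint is needed are correct and match the paper's Lemma~\ref{VTP_lemma} and Theorem~\ref{concave}.

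However, there is a genuine gap at the fixed-point step. You define $\Psi(\xb):=\prod_{i}\arg\max_{\yb_i\in\vartheta_i(\xb_{-i})}\mV_i(\yb_i;\xb_{-i})$ and claim $\Psi:K\to 2^K$, but $\Psi$ does not map $K$ into subsets of $K$: each $\yb_i\in\vartheta_i(\xb_{-i})$ only guarantees $(\yb_i,\xb_{-i})\in K$, i.e. $y_{ij}\le \eta_j-\ji$ with $\ji$ computed from the \emph{old} profile $\xb$. Summing over $i$ gives $\sum_i y_{ij}\le n\eta_j-(n-1)\xtj$, which exceeds $\eta_j$ whenever $\xtj<\eta_j$ (e.g.\ all players best-responding to $\xb=\mathbf{0}$ can jointly overshoot $\eta_j$). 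So Kakutani--Fan--Glicksberg cannot be applied on $K$ as stated. Nor can you move to the ambient product $\mC_n$, because your $\vartheta_i(\xb_{-i})$ is \emph{empty} whenever $\ji>\eta_j$ for some $j$, which occurs for profiles in $\mC_n\setminus K$; relatedly, your appeal to ``Slater-type strict feasibility at $\mathbf{0}$'' does not give lower semicontinuity near profiles where $\ji=\eta_j$ and the feasible set collapses or vanishes. The paper avoids both problems by defining $\vartheta_i(\xb_{-i})$ on all of $\mC_{-i}$ so that it always contains $\mathbf{0}$ (coordinates with $\ji\ge\omega_{ij}$ are clamped to $\{0\}$ rather than left infeasible), and then applying the Arrow--Debreu/Ichiishi social-equilibrium theorem on $\prod_i C_m$, where the product of individual best responses automatically lands in $\prod_i C_m$; the paper's proof then devotes most of its effort to the upper and lower semicontinuity of this clamped correspondence, which is exactly the point your sketch glosses over. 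Your argument becomes correct if you replace $\eta_j-\ji$ by $\max(0,\eta_{ij}-\ji)$ in the definition of $\vartheta_i$, run the fixed-point argument on $\mC_n$ rather than on $K$, and prove the hemicontinuity of the resulting correspondence honestly.
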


Given Theorem~\ref{unique_NE}, it is natural to ask 
about the ``size" of the set consisting of all  
GNEs of a Fragile multi-CPR Game. 
Let us note that it is a well known fact that  Generalized Nash equilibrium problems 
tend to possess infinitely many GNEs (see~\cite[p.~192]{Facchinei_Kanzow}).  In the case of a single CPR, i.e., when $m=1$, the corresponding Constrained Fragile CPR Game admits a unique GNE. 

\begin{theorem}\label{one_CPR}
Consider a Fragile multi-CPR Game with $n\ge 1$ players and $m=1$ CPR satisfying Assumption~\ref{ass_prob}. 
Then the game admits a unique GNE. 
\end{theorem}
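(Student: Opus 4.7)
The plan is to reduce the $m=1$ case to the Fragile CPR Game of~\cite{Hota} and then invoke Theorem~\ref{Hota}. First, I would observe that when $m=1$ the set $C_1$ defined in~\eqref{strategy_space} is exactly the interval $[0,1]$, so each strategy $\xb_i$ collapses to a single scalar $x_{i1}\in[0,1]$ and $\mC_n=[0,1]^n$. The utility~\eqref{utility} then consists of a single summand
\[
\mV_i(\xb_i;\xb_{-i}) = x_{i1}^{a_i}\,\mF_{i1}\!\left(\xb_T^{(1)}\right),
\]
which coincides term-by-term with the expected utility $\mathbb{E}(\mV_i(x_i,x^{(i)}))$ of~\eqref{utility_Hota}. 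Moreover, Assumption~\ref{ass_prob} reduces, in the case $m=1$, precisely to Assumption~\ref{first_ass}.

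Second, I would choose the trivial constraint policies $\vartheta_i(\xb_{-i})\equiv [0,1]=C_1$ for every $i\in[n]$ and every $\xb_{-i}\in\mC_{-i}$. Under this choice condition~(1) of Definition~\ref{def:gne} is automatic, and condition~(2) becomes exactly the Nash equilibrium condition of Definition~\ref{def:NE}. Hence a GNE of $(G,\{\vartheta_i\}_{i\in[n]})$ is nothing other than a Nash equilibrium of the Fragile CPR Game of~\cite{Hota}, and Theorem~\ref{Hota} immediately yields both existence and uniqueness.

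The only delicate point, and the part I expect to be the main obstacle, is reconciling this trivial choice with whatever (possibly non-trivial) constraint policies the proof of Theorem~\ref{unique_NE} produces in the general multi-CPR setting. Since in the $m=1$ case there is no cross-CPR coupling (the only active coupling in~\eqref{utility} comes from sharing investments across several CPRs), I anticipate that those policies, specialized to $m=1$, either coincide with $[0,1]$ or at least contain the unique equilibrium of~\cite{Hota} in every $\vartheta_i(\xb_{-i}^\ast)$ while preserving best-response optimality on their restricted feasible sets. Verifying this compatibility is a bookkeeping step once the proof of Theorem~\ref{unique_NE} is in hand; with it, the uniqueness claim in the $m=1$ case follows directly from Theorem~\ref{Hota}.
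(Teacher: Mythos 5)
Your proposal takes a genuinely different route from the paper, but as written it has a gap at exactly the point you dismiss as ``bookkeeping.'' In this paper a GNE is defined \emph{relative to the constraint policies}~\eqref{def:constraint_policy}, and Theorem~\ref{one_CPR} is a statement about those policies (they are the ones used in Theorem~\ref{unique_NE} and again in Theorem~\ref{nr_of_GNE}). For $m=1$ these policies do \emph{not} collapse to $[0,1]$: one has $\vartheta_i(\xb_{-i})=[0,1]\cap[0,\omega_{i1}-\xb_T^{1|i}]$ when $\xb_T^{1|i}<\omega_{i1}$, and $\vartheta_i(\xb_{-i})=\{0\}$ otherwise. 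By substituting the trivial policies $\vartheta_i\equiv[0,1]$ you change the object whose uniqueness is being asserted; your first alternative (``those policies \ldots{} coincide with $[0,1]$'') is false, and your second alternative is precisely the claim that needs proof. Concretely, a GNE for the policies~\eqref{def:constraint_policy} only excludes profitable deviations \emph{inside} $\vartheta_i(\xb_{-i}^{\ast})$, so to conclude that it is a Nash equilibrium in the sense of Definition~\ref{def:NE} you must rule out profitable deviations to $z_i>\omega_{i1}-\xb_T^{1|i}$. This can be done: for such $z_i$ Lemma~\ref{VTP_lemma} gives $\mF_{i1}(z_i+\xb_T^{1|i})\le 0$, hence utility $z_i^{a_i}\,\mF_{i1}(z_i+\xb_T^{1|i})\le 0$, which never beats the equilibrium payoff (that payoff is $\ge 0$ since $0\in\vartheta_i(\xb_{-i}^{\ast})$); the converse direction, that the unique Nash equilibrium of~\cite{Hota} lies in each $\vartheta_i(\xb_{-i}^{\ast})$ and is optimal there, is similar. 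But these verifications are the actual content of your reduction, not an afterthought, and without them the appeal to Theorem~\ref{Hota} does not close the argument. (The identification of Assumption~\ref{ass_prob} with Assumption~\ref{first_ass} when $m=1$, and of the utility~\eqref{utility} with~\eqref{utility_Hota}, is indeed immediate.)

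For comparison, the paper does not reduce to~\cite{Hota} at all: it proves the stronger Theorem~\ref{gen_one_CPR} (for arbitrary $m$, there is at most one GNE all of whose coordinates are best responses of Type~I) via monotonicity of the auxiliary functions $\mG_{ij}$, and then observes that for $m=1$ the first condition of Assumption~\ref{ass_prob} forces every coordinate of a GNE to be of Type~I, since no player invests her full endowment in the single CPR. Your reduction, once the GNE--NE equivalence above is actually supplied, would be a legitimate and shorter alternative for $m=1$, but it yields nothing for $m\ge 2$, whereas Theorem~\ref{gen_one_CPR} is part of the machinery reused later in the paper.
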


The proof of Theorem~\ref{one_CPR} is based upon 
a ``first order condition" which is satisfied 
by the best response correspondence in a Fragile 
multi-CPR Game. 
It turns out that the aforementioned 
``first order condition" gives rise to \emph{two types} 
of best responses for the players (see Theorem~\ref{kkt_thm} below).  In fact, we show that 
Theorem~\ref{one_CPR} is a consequence of 
a more general statement (i.e., Theorem~\ref{gen_one_CPR} below) which provides an upper bound on the numbers of GNEs in a Fragile multi-CPR Game, subject to the assumption that  
best response of every player is 
of the first type. 

For general $m$ we are unable to determine 
the exact ``size" of the set of GNEs. We conjecture its size is always finite. 
Our main result, which is valid when there are more 
players than CPRs, states that the set of GNEs is small in a measure-theoretic sense.  

\begin{theorem}\label{nr_of_GNE}
Consider a Fragile multi-CPR game, $G^{(2)}$, with $n\ge 1$ players and $m\ge 1$ CPRs, which  satisfies 
Assumption~\ref{ass_prob}. 
Assume further that $m\le n$, and let $\mN(G^{(2)})$ be the set 
consisting of all Generalized Nash equilibria of $G^{(2)}$. 
Then the $(n\cdot m)$-dimensional Lebesgue measure of $\mN(G^{(2)})$ is equal to zero. 
\end{theorem}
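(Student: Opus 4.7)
The plan is to combine the KKT first-order characterization from Theorem~\ref{kkt_thm} with a decomposition of the GNE set according to the support of the players' strategies. The case $m=1$ is already handled by Theorem~\ref{one_CPR}, which gives a unique GNE, so I focus on $m\ge 2$. For any GNE $\xb^{\ast}\in\mC_n$ there must exist multipliers $\lambda_i\ge 0$ such that, for every player $i\in[n]$ and every CPR $j\in[m]$, the inequality
\[
a_i (x_{ij}^{\ast})^{a_i-1}\mF_{ij}(\xtj) + (x_{ij}^{\ast})^{a_i}\mF_{ij}'(\xtj) \le \lambda_i
\]
holds with equality whenever $x_{ij}^{\ast}>0$, together with complementary slackness $\lambda_i\cdot(1-\sum_{j\in[m]} x_{ij}^{\ast})=0$. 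Partitioning $\mN(G^{(2)})$ by the support $S=\{(i,j):x_{ij}^{\ast}>0\}$ and the tight-budget index set $T=\{i:\lambda_i>0\}$ gives a finite decomposition $\mN(G^{(2)})=\bigcup_{S,T}\mN_{S,T}$.

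For every $(S,T)$ with $S\subsetneq[n]\times[m]$, the piece $\mN_{S,T}$ is contained in the coordinate subspace $\{\xb:x_{ij}=0\text{ for }(i,j)\notin S\}$, which has real dimension $|S|<nm$ and hence has $nm$-dimensional Lebesgue measure zero. So the only case that requires nontrivial work is the interior one, $S=[n]\times[m]$. Here, eliminating each $\lambda_i$ between the FOCs at two different CPRs for player $i$ produces $n(m-1)$ real-analytic equations in the $x_{ij}$'s, while complementary slackness contributes $n$ additional scalar equations (either $\lambda_i=0$, which forces the marginal utility to vanish at each CPR, or the budget-tight equation $\sum_j x_{ij}=1$). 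Altogether the interior piece is contained in the zero set of $nm$ real-analytic equations in the $nm$ unknowns $x_{ij}$.

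The final step is to show that this zero set is a proper real-analytic subvariety of $\mathbb{R}^{nm}$. Under Assumption~\ref{ass_prob} and the hypothesis $m\le n$, I would verify that the Jacobian of the system is nonsingular at every solution, so that by the analytic implicit function theorem the interior piece is discrete; being closed and contained in the compact cube $[0,1]^{nm}$, it must then be finite. The role of $m\le n$ is to ensure that the $n(m-1)$ equating-marginal-utility equations together with the $n$ complementary-slackness equations are independent in a strong enough sense to preclude the collapse of the system onto a higher-dimensional component; the strict monotonicity and strict concavity of $\mF_{ij}$ from Assumption~\ref{ass_prob} are the main analytic inputs. Since there are only finitely many support pairs $(S,T)$ and each contribution $\mN_{S,T}$ has $nm$-dimensional Lebesgue measure zero, so does $\mN(G^{(2)})$.

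The principal obstacle is this last nondegeneracy step. While the dimension count already matches, some subset of the $nm$ KKT equations could a priori be linearly dependent along a positive-dimensional locus, and ruling this out requires a careful inspection of the Jacobian's block structure; it is precisely here that $m\le n$ is expected to enter, by guaranteeing that the aggregates $\xtj$ are distributed across enough players to prevent such cancellations.
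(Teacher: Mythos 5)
Your strategy is genuinely different from the paper's, but it has a gap at exactly the point you flag as ``the principal obstacle,'' and that gap is not cosmetic. After reducing to the full-support stratum you are left with a square system of $nm$ equations in $nm$ unknowns, and you propose to conclude that its solution set is null either because it is a ``proper real-analytic subvariety'' or because the Jacobian is everywhere nonsingular. Neither route is available as stated: Assumption~\ref{ass_prob} only requires $\mF_{ij}$ to have negative first and second derivatives, not to be real-analytic (the paper's own example takes $p_j$ to be an arbitrary strictly increasing convex function), and for merely $C^2$ data the common zero set of $nm$ smooth functions of $nm$ variables can perfectly well have positive measure. So everything rests on the Jacobian nondegeneracy, which you do not prove and for which you offer only a heuristic about how $m\le n$ ``is expected to enter.'' That heuristic also points in the wrong direction: in the paper the hypothesis $m\le n$ is not a rank condition on any Jacobian; it is used (Lemma~\ref{lem:antichain3}) to show that every GNE has at least one player whose best response is of Type~I, via the pigeonhole observation that an all-Type~II profile would force $\sum_j\xtj=n\ge m$ and hence some CPR with total investment at least $1$, contradicting optimality.

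The paper avoids the transversality question entirely. It projects a GNE $\xb$ to its vector of total investments $\mathbf{v}_{\xb}=(\xb_T^{(1)},\ldots,\xb_T^{(m)})\in[0,1]^m$ and shows, using the strict monotonicity of the auxiliary functions $\mG_{ij}$, $\mH_{ij}$ together with the Type~I/Type~II dichotomy, that the image $W_N$ of the GNE set is an \emph{antichain} in $[0,1]^m$ (Lemma~\ref{lem:antichain5}): no two distinct total-investment vectors of GNEs are coordinatewise comparable with a strict total inequality. Antichains in the cube have $m$-dimensional measure zero (Lemma~\ref{sperner}), and since the projection $f(\xb)=\mathbf{v}_{\xb}$ has nowhere-vanishing gradient, the preimage $f^{-1}(W_N)\supset\mN(G^{(2)})$ has $(nm)$-dimensional measure zero (Lemma~\ref{0_property}). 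If you want to salvage your approach, you would need to actually exhibit the invertibility of the Jacobian of your KKT system on its solution set --- which, if it worked, would prove the stronger finiteness statement of Conjecture~\ref{conj:1}, something the authors explicitly leave open. As it stands, your argument establishes the easy strata ($S\subsetneq[n]\times[m]$) but not the theorem.
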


As mentioned already, and despite the fact that GNE problems tend to possess infinitely many solutions, we speculate that the  ``size" of the set $\mN(G^{(2)})$ in Theorem~\ref{nr_of_GNE} can be  reduced significantly. 

\begin{conjecture}\label{conj:1}
The set $\mN(G^{(2)})$ is finite. 
\end{conjecture}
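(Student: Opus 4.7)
The plan is to establish finiteness by partitioning $\mN(G^{(2)})$ along combinatorial attributes of each equilibrium and exhibiting isolated solutions within each piece. For any $\xb^{\ast}\in\mN(G^{(2)})$, associate a triple consisting of its active set $S(\xb^{\ast})=\{(i,j): x_{ij}^{\ast}>0\}$, the set of players $B(\xb^{\ast})\subset[n]$ whose budget constraint is binding (i.e., $\sum_{j} x_{ij}^{\ast}=1$), and, using the classification promised by Theorem~\ref{kkt_thm}, the best-response type employed by each player on each active CPR. Since these attributes take only finitely many values, $\mN(G^{(2)})$ decomposes as a finite union of pieces, and it suffices to show that each piece is finite.

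Within a fixed combinatorial type, the Karush--Kuhn--Tucker stationarity, complementarity, and primal feasibility conditions collapse to a smooth square system in the positive coordinates $\{x_{ij}^{\ast}\}_{(i,j)\in S}$ together with the multipliers $\{\lambda_i\}_{i\in B}$: the remaining coordinates and multipliers vanish identically, and stationarity reads $\partial\mE_{ij}/\partial x_{ij}=\lambda_i$ for each active $(i,j)$, with $\ji$ determined by the other players' positive coordinates. The core of the argument is to compute the Jacobian of this system at any solution, organize it as a block matrix indexed by players, and use Assumption~\ref{ass_prob} to argue non-singularity at every solution of the given type. Non-singularity implies that solutions are isolated; combined with compactness of $\mC_n$, this yields finiteness of each piece, hence of $\mN(G^{(2)})$. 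If the block matrix can be shown to have a uniform sign structure (e.g., negative dominant diagonal), then one obtains in addition an explicit combinatorial upper bound on $|\mN(G^{(2)})|$ in terms of $n$ and $m$.

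The principal obstacle is verifying non-singularity of this Jacobian. The intra-player blocks involve second partial derivatives of $x_{ij}^{a_i}\mF_{ij}(\xtj)$ and combine the strict concavity of $\mF_{ij}$ granted by Assumption~\ref{ass_prob} with curvature contributions coming from $a_i\le 1$, while inter-player entries link players only through a common CPR and inherit their sign from $\mF_{ij}'<0$. A plausible route is to adapt the uniqueness argument underlying Theorem~\ref{Hota} in order to control the single-CPR blocks separately, and then combine across CPRs via a diagonal-dominance or P-matrix argument reflecting the fact that each player's budget constraint couples her active coordinates linearly; the first-order classification in Theorem~\ref{kkt_thm} should be used to rule out the boundary-type configurations that could destroy non-singularity. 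A structurally different fallback, available whenever $\mR_j$ and $p_j$ are real-analytic, is to note that $\mN(G^{(2)})$ is then a compact semi-analytic subset of $\mC_n$; by the stratification theorem, it decomposes into finitely many real-analytic strata, and Theorem~\ref{nr_of_GNE} rules out strata of positive dimension (since any such stratum would have positive Lebesgue measure in its own manifold, contradicting the measure-zero conclusion when projected appropriately), forcing $\mN(G^{(2)})$ to be a finite union of points.
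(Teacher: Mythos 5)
You should first be aware that the statement you are proving is stated in the paper as Conjecture~\ref{conj:1} and is left \emph{open} there: the authors prove only that $\mN(G^{(2)})$ has $(n\cdot m)$-dimensional Lebesgue measure zero when $m\le n$ (Theorem~\ref{nr_of_GNE}, via the antichain property of the vector of total investments), that each fiber of the total-investment map contains at most $2^{n\cdot(m+1)}$ equilibria (Theorem~\ref{w_set}), and that finiteness of $\mN(G^{(2)})$ is equivalent to finiteness of the antichain $W_N$ (Conjecture~\ref{conj:2}). Your proposal does not close this gap. The entire argument rests on non-singularity of the Jacobian of the type-restricted KKT system at every solution, and you explicitly leave this unproven, offering only a ``plausible route'' via diagonal dominance or a P-matrix structure. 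That is precisely where the difficulty of the conjecture sits: the inter-player coupling through the common terms $\frac{\partial}{\partial \xtj}\mF_{ij}(\xtj)$ and through the budget multipliers could in principle produce a singular Jacobian, and nothing in Assumption~\ref{ass_prob} visibly excludes this. A secondary defect: even granting non-singularity at all solutions of a fixed combinatorial type, the piece of that type need not be closed, so its points being isolated does not by itself yield finiteness --- a sequence of type-$\tau$ solutions could accumulate at a boundary configuration where some coordinate tends to $0$ and the type changes; note also that the stationarity conditions involve $x_{ij}^{a_i-1}$, which blows up there when $a_i<1$, so the ``smooth square system'' does not even extend continuously to such limits.

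The fallback argument is incorrect as stated. First, real-analyticity of $\mR_j$ and $p_j$ is not part of Assumption~\ref{ass_prob}, so it would at best treat a special case. More importantly, Theorem~\ref{nr_of_GNE} does \emph{not} rule out positive-dimensional strata: a stratum of dimension $d$ with $1\le d<n\cdot m$ already has zero $(n\cdot m)$-dimensional Lebesgue measure, so it is perfectly compatible with the measure-zero conclusion, and there is no ``appropriate projection'' that converts positive $d$-dimensional measure on a lower-dimensional manifold into positive $(n\cdot m)$-dimensional measure in $\mC_n$. Even the sharper structural fact proved in the paper --- that the image $W_N$ of $\mN(G^{(2)})$ under the total-investment map is an antichain in $[0,1]^m$ (Lemma~\ref{lem:antichain5}), hence of measure zero by Lemma~\ref{sperner} --- does not suffice, since an antichain can be infinite and even one-dimensional (a strictly decreasing curve in $[0,1]^2$, for instance); this is exactly why the authors must leave Conjecture~\ref{conj:2} open. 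In sum, your proposal reduces the conjecture to an unverified non-degeneracy claim and supplements it with a fallback that fails; it is not a proof.
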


\subsection{Brief outline of the proofs of main results}

The proofs of our main results are inspired from  the proof of Theorem~\ref{Hota}, given in~\cite{Hota}. Having said that, it should also be mentioned that in  
a Fragile multi-CPR Game certain additional technicalities arise that are substantially  different from those addressed in the 
proof of Theorem~\ref{Hota} in~\cite{Hota}. First and foremost, 
in a Fragile multi-CPR Game the strategy space of each player consists of  $m$-dimensional vectors, a setting which requires concepts and ideas from multi-variable calculus. 

In~\cite{Hota} the existence of a Nash equilibrium in a Fragile CPR Game is established in two ways: the first approach employs  Brouwer's fixed point theorem, and the second approach  
employs ideas from a particular class of games known as \emph{Weak Strategic Substitute Games} (see~\cite{Dubey}).  The first approach requires, among other things, the best response correspondence to be single-valued. 
The second approach requires the best-response correspondence to be decreasing. Both requirements may 
fail to hold true in a Fragile multi-CPR Game. Instead, 
we establish the existence of a Generalized Nash equilibrium for the Fragile multi-CPR Game by showing that it belongs to a particular class of ``convex constrained  games" which are known to possess Generalized Nash equilibria.

In~\cite{Hota} the uniqueness of the Nash equilibrium for a Fragile CPR Game is established by showing that a particular auxiliary function, corresponding to the fact that the best response correspondence satisfies a particular ``first order condition" (see~\cite[Eq.~(6), p.~142]{Hota} for the precise formulation of the condition), is decreasing. Similar auxiliary functions are employed in the proofs of Theorems~\ref{one_CPR} and~\ref{nr_of_GNE}.  However, the corresponding ``first order conditions" are more delicate to characterise, and we do so by 
employing the KKT conditions to the optimization program corresponding to the best response correspondence (i.e., Problem~\eqref{opt_problem2} below). This allows to describe the best responses via a system of equations, having unique solution, and results in two types of ``first order conditions" (see Theorem~\ref{kkt_thm} below). Having established the first order conditions in a Fragile multi-CPR Game, we complete the proofs of our main results by employing monotonicity properties of certain auxiliary functions, 
in a way which may be seen as an extension of the approach taken in the proof of Theorem~\ref{Hota} in~\cite{Hota}.

\subsection{Organization} 

The remaining part of our article is organised as follows. In Section~\ref{sec:2} we show that the utility function of each player in a Fragile multi-CPR Game is concave on a particular subset of the strategy space. 
In Section~\ref{sec:3} we prove Theorem~\ref{unique_NE}, namely, we show that a Fragile multi-CPR Game  admits a GNE. In Section~\ref{sec:4} we show that the best response of each player in a Fragile multi-CPR Game satisfies certain ``first order conditions", which are then used, in Section~\ref{sec:5}, in order to define suitable auxiliary functions whose monotonicity 
properties play a key role 
in the proofs of Theorems~\ref{one_CPR} and~\ref{nr_of_GNE}.  Theorem~\ref{one_CPR} 
is proven in Section~\ref{sec:6} and Theorem~\ref{nr_of_GNE} is proven in Section~\ref{sec:7}. 
In Section~\ref{sec:7.5} we show that a ``restricted" 
version of a Fragile multi-CPR Game admits finitely 
many GNEs, a result which is then employed in 
order to formulate a conjecture which is 
equivalent to Conjecture~\ref{conj:1}. 
Our paper ends with  
Section~\ref{sec:8} which includes some  concluding remarks and conjectures.

\section{Concavity of utility function}\label{sec:2}

In this section we show that the utility function, given by~\eqref{utility}, of each player in a Fragile multi-CPR Games is concave in some 
particular subset of $C_m$. Before proceeding 
with the details let us mention that this particular 
subset will be used to define the constraint policies 
in the corresponding Constrained Fragile multi-CPR Game.

We begin with the 
following result, which readily follows from~\cite[Lemma~1]{Hota}.   
Recall the definition of $\xtj$ and $\ji$, given in~\eqref{sum_xs}, and the 
definition of the effective rate of return, $\mF_{ij}$, given in~\eqref{effective_rr}. 

\begin{lemma}[see~\cite{Hota}, Lemma~1]
\label{VTP_lemma}
Let $i\in [n]$ and $\xb_{-i}\in\mC_{-i}$ be fixed. Then, for every $j\in [m]$, there exists a real number  
$\omega_{ij}\in (0,1)$ such that  
$\mF_{ij}(\xtj )>0$, whenever 
$\xtj \in (0,\omega_{ij})$, and $\mF_{ij}(\omega_{ij})=0$. 
Furthermore, provided that  $\ji< \omega_{ij}$, 
the function $\mE_{ij}(\,\cdot\, ; \ji)$ is concave in the interval  $(0,\omega_{ij}-\ji)$.
\end{lemma}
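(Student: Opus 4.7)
The plan is to reduce the statement to a one-variable calculation, since once the opponent strategies $\xb_{-i}$ are fixed, the quantities $\ji$ become fixed constants for each $j\in [m]$ and $\mE_{ij}(\,\cdot\,;\ji)$ is a function of the single real variable $x_{ij}$. The first part is essentially an intermediate value argument, and the second part is a two-line second-derivative computation, leveraging the sign information encoded in Assumption~\ref{ass_prob}.

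For the first part, I would evaluate $\mF_{ij}$ at the endpoints of $(0,1)$. At $\xtj=0$, using $p_j(0)=0$ and $\mR_j>1$, one gets $\mF_{ij}(0)=(\mR_j(0)-1)^{a_i}>0$. At $\xtj=1$, using $p_j(1)=1$, one gets $\mF_{ij}(1)=-k_i<0$. By continuity of $\mF_{ij}$ on $(0,1)$ (which follows from Assumption~\ref{ass_prob}(3), since differentiability implies continuity), the intermediate value theorem produces a root $\omega_{ij}\in(0,1)$. Strict monotonicity of $\mF_{ij}$ (again Assumption~\ref{ass_prob}(3)) then guarantees both uniqueness of $\omega_{ij}$ and positivity of $\mF_{ij}$ on $(0,\omega_{ij})$.

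For the concavity part, I would write $\mE_{ij}(x_{ij};\ji)=f(x_{ij})\cdot g(x_{ij})$ where $f(x)=x^{a_i}$ and $g(x)=\mF_{ij}(x+\ji)$, and differentiate twice to obtain
\[
\mE_{ij}''(x_{ij};\ji)=f''(x_{ij})\,g(x_{ij})+2f'(x_{ij})\,g'(x_{ij})+f(x_{ij})\,g''(x_{ij}).
\]
On the interval $x_{ij}\in(0,\omega_{ij}-\ji)$, one has $f(x_{ij})>0$, $f'(x_{ij})>0$, and $f''(x_{ij})=a_i(a_i-1)x_{ij}^{a_i-2}\le 0$ since $a_i\in(0,1]$. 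Moreover, since $\xtj=x_{ij}+\ji\in(\ji,\omega_{ij})\subset(0,\omega_{ij})$, the first part of the lemma yields $g(x_{ij})=\mF_{ij}(\xtj)>0$, while Assumption~\ref{ass_prob}(3) gives $g'(x_{ij})<0$ and $g''(x_{ij})<0$. Each of the three summands is therefore non-positive, and the middle term is strictly negative, so $\mE_{ij}''<0$ on the interval, yielding (strict) concavity.

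Since the paper explicitly attributes the lemma to~\cite{Hota}, I do not anticipate any substantial obstacle; the only care needed is to check that the single-variable argument of~\cite{Hota} transfers unchanged once the other coordinates of $\xb_{-i}$ and the other CPR indices are frozen, which is automatic because $\mE_{ij}$ depends on $\xb_{-i}$ only through the scalar $\ji$ and is decoupled across $j$ by the independence assumption on the CPRs.
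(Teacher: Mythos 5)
Your proposal is correct and follows essentially the same route as the paper: evaluate $\mF_{ij}$ at the endpoints, apply the intermediate value theorem together with strict monotonicity from Assumption~\ref{ass_prob}(3), and then establish concavity via the product-rule expansion of $\frac{\partial^2}{\partial x_{ij}^2}\mE_{ij}$ into the three terms $f''g+2f'g'+fg''$ with the identical sign analysis. No gaps.
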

\begin{proof} 
We repeat the proof for the sake of completeness. 
Notice that $\mF_{ij}(0)>0$. 
Moreover, Assumption~\ref{ass_prob} implies that $\mF_{ij}(1)<0$. 
Since $\mF_{ij}$ is continuous, the intermediate value theorem implies that there exists $\omega_{ij}\in (0,1)$ such that $\mF_{ij}(\omega_{ij})=0$. Since $\mF_{ij}$ is assumed to be decreasing, the first statement follows, and we proceed with the proof of the second statement. 
To this end, notice that~\eqref{1_dimensional} yields  
\[
\frac{\partial^2}{\partial x_{ij}^2}
\mE_{ij}(x_{ij} ; \ji)  = a_i (a_i-1) x_{ij}^{a_i-2}  \mF_{ij}(\xtj) + 2a_ix_{ij}^{a_i-1}  \frac{\partial}{\partial x_{ij}} \mF_{ij}(\xtj) + x_{ij}^{a_i} \frac{\partial^2}{\partial x_{ij}^2} \mF_{ij}(\xtj) \, .
\]
Notice also that $\frac{\partial}{\partial x_{ij}} \mF_{ij}(\xtj) = \frac{\partial}{\partial \xtj} \mF_{ij}(\xtj)$ as well as  $\frac{\partial^2}{\partial x_{ij}^2} \mF_{ij}(\xtj) =\frac{\partial^2}{\partial( \xtj)^2} \mF_{ij}(\xtj)$. 
Moreover,  Assumption~\ref{ass_prob} guarantees that $\frac{\partial^2}{\partial x_{ij}^2} \mF_{ij}(\xtj), \frac{\partial}{\partial x_{ij}} \mF_{ij}(\xtj) <0$ as well as that $a_i-1\le 0$. Since $\mF_{ij}(\xtj)>0$ when 
$\xtj\in (0,\omega_{ij})$, we conclude that 
$\frac{\partial^2}{\partial x_{ij}^2}
\mE_{ij}(x_{ij} ; \ji) <0$ and therefore  
$\mE_{ij}(\,\cdot\, ;\ji)$ is concave in the interval $(0,\omega_{ij}-\ji)$, as desired. 
\end{proof}

In other words, given the choices of all players except 
player $i$,  the utility of 
the $i$-th player from the $j$-th CPR is a  concave function, when restricted on a particular interval. 
The next result shows 
that an analogous statement holds true for the total 
utility of each player in the Fragile multi-CPR Game, namely, $\mV_i(\xb_i;\xb_{-i})$, given by~\eqref{utility}. 

Given $i\in [n]$ and $\xb_{-i}\in \mC_{-i}$, 
let 
\begin{equation}\label{active_cpr}
A(\xb_{-i}):=\{j\in [m]:\ji < \omega_{ij} \}\, , 
\end{equation}
where $\omega_{ij}, j\in [m]$, is provided by  Lemma~\ref{VTP_lemma}. 
We refer to $A(\xb_{-i})$ as the set of \emph{active CPRs} corresponding to $i$ and $\xb_{-i}$. 

\begin{theorem}\label{concave}
Fix $i\in [n]$  and $\xb_{-i}\in\mC_{-i}$. 
Let $A(\xb_{-i})$ be the set of active CPRs corresponding to $i$ and $\xb_{-i}$,  and consider the set $\mR_{A(\xb_{-i})} = \prod_{j\in A(\xb_{-i})} (0,\omega_{ij} - \ji)$. 
Then the function $\mV_{A(\xb_{-i})}:=\sum_{j\in A(\xb_{-i})}\mE_{ij}(x_{ij}; \xt^{j\setminus i})$  is concave in $\mR_{A(\xb_{-i})}$.  
\end{theorem}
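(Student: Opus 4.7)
The plan is to exploit the separable structure of $\mV_{A(\xb_{-i})}$ on the product set $\mR_{A(\xb_{-i})}$. With $\xb_{-i}$ fixed, the quantity $\ji$ is a constant for each $j$, so $\xtj = x_{ij} + \ji$ depends on only one of the coordinates of $\xb_i$. Consequently the summand $\mE_{ij}(x_{ij};\ji)=x_{ij}^{a_i}\,\mF_{ij}(x_{ij}+\ji)$, regarded as a function of $\xb_i\in\mR_{A(\xb_{-i})}$, depends solely on the $j$-th coordinate $x_{ij}$, and $\mV_{A(\xb_{-i})}$ decomposes as a fully separable sum of single-variable functions.

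Next I would invoke Lemma~\ref{VTP_lemma}. By the very definition of the set of active CPRs $A(\xb_{-i})$, we have $\ji<\omega_{ij}$ for every $j\in A(\xb_{-i})$, and the lemma then guarantees that the single-variable function $\mE_{ij}(\,\cdot\,;\ji)$ is concave on the interval $(0,\omega_{ij}-\ji)$. Thus each summand of $\mV_{A(\xb_{-i})}$ is a concave function of its own coordinate on precisely the interval defining the $j$-th factor of the product set $\mR_{A(\xb_{-i})}$.

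Finally, I would conclude by combining these one-dimensional concavities. Because the summands have disjoint dependence on the coordinates, the Hessian of $\mV_{A(\xb_{-i})}$ with respect to $(x_{ij})_{j\in A(\xb_{-i})}$ is diagonal, with $j$-th diagonal entry equal to $\frac{\partial^2}{\partial x_{ij}^2}\mE_{ij}(x_{ij};\ji)\le 0$ (in fact strictly negative) throughout $\mR_{A(\xb_{-i})}$, by Lemma~\ref{VTP_lemma}. A diagonal matrix with non-positive entries is negative semi-definite, and since $\mR_{A(\xb_{-i})}$ is convex (a product of intervals), this yields concavity of $\mV_{A(\xb_{-i})}$ on $\mR_{A(\xb_{-i})}$. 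Equivalently, a sum of concave functions, each defined on a distinct factor of a product of convex sets, is concave on the product.

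There is no serious obstacle here; the content of the statement reduces entirely to the one-dimensional concavity already established in Lemma~\ref{VTP_lemma}, together with the observation that independence of the performances of different CPRs translates into separability of the utility across coordinates.
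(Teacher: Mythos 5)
Your proposal is correct and follows essentially the same route as the paper: both exploit the separability of $\mV_{A(\xb_{-i})}$ across coordinates, observe that the Hessian is diagonal with strictly negative entries by Lemma~\ref{VTP_lemma}, and conclude concavity on the convex product set $\mR_{A(\xb_{-i})}$. The closing remark that a sum of concave functions of distinct coordinates is concave is just a Hessian-free rephrasing of the same argument.
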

\begin{proof} 
If $|A(\xb_{-i})|=1$, then the result follows from Lemma~\ref{VTP_lemma} so we may assume that $|A(\xb_{-i})|\ge 2$. 
The set $\mR_{A(\xb_{-i})}$ is clearly convex. 
Let $j,k\in A(\xb_{-i})$ be such that $j\neq k$ and 
notice that 
\begin{equation}\label{jk}
\frac{\partial^2 \mV_{A(\xb_{-i})}}{\partial x_{ij} \; \partial x_{ik}} = 0 \, .
\end{equation}
Moreover, by Lemma~\ref{VTP_lemma}, we also have   
\begin{equation}\label{jj}
\frac{\partial^2 \mV_{A(\xb_{-i})}}{\partial x_{ij}^2} = \frac{\partial^2 \mE_{ij}(x_{ij};\ji)}{\partial x_{ij}^2} < 0 \, , \text{ for all } x_{ij}\in (0,\omega_{ij}-\ji) \, .
\end{equation}
Given $\xb\in\mR_{A(\xb_{-i})}$,  denote by  $H(\xb) = \left( \frac{\partial^2 \mV_{A(\xb_{-i})}(\xb)}{\partial x_{ij} \; \partial x_{ik}}\right)_{j,k\in A(\xb_{-i})}$ the Hessian matrix of $\mV_{A(\xb_{-i})}$ evaluated at $\xb$,  and let $\Delta_k(\xb)$, for $k\in A(\xb_{-i})$, be the principal minors of $H(\xb)$ (see~\cite[p.~111]{Berkovitz}).  Notice that~\eqref{jk} implies that $H(\xb)$ is a diagonal matrix. Therefore, using~\eqref{jj}, it follows that 
$(-1)^k\cdot\Delta_k(\xb) >0$, when $\xb\in \mR_{A(\xb_{-i})}$. In other words, $H(\,\cdot\,)$ is negative definite on the convex set $\mR_{A(\xb_{-i})}$  and we conclude  (see~\cite[Theorem~3.3, p.~110]{Berkovitz})  that  
$\mV_{A(\xb_{-i})}$ is concave in $\mR_{A(\xb_{-i})}$, as desired.
\end{proof}

\section{Proof of Theorem~\ref{unique_NE}: existence of GNE}\label{sec:3}

In this section we show that the Fragile multi-CPR Game possesses a Generalized Nash equilibrium. 
Recall that the notion of Generalized Nash equilibrium 
depends upon the choice of constraint policies. 
Thus, before presenting the details of the proof, 
we first define the constrained policies under consideration.

Let $i\in [n]$ and 
$\xb_{-i}=(\xb_1,\ldots,\xb_{i-1},\xb_{i+1},\ldots,\xb_{n})\in \mC_{-i}$, where $\xb_j = (x_{j1},\ldots,x_{jm})\in C_m$, for $j\in [n]\setminus \{i\}$, be fixed. 
Recall that 
$\ji = \sum_{\ell\in [n]\setminus  \{i\}} x_{\ell j}$ and consider the set of active indices corresponding to $i$ and $\xb_{-i}$, i.e., consider the set
$A(\xb_{-i})$, defined in~\eqref{active_cpr}.

Define the constraint policy  $\vartheta_i(\cdot)$ 
that maps each element $\xb_{-i}\in\mC_{-i}$ to the set 
\begin{equation}\label{def:constraint_policy}
\vartheta_i(\xb_{-i}) =C_m \bigcap \left\{ \prod_{j\in A(\xb_{-i}) } [0,\omega_{ij} - \ji ]\,\, \times \prod_{j\in [m]\setminus A(\xb_{-i}) } \{0\} \right\}\, ,
\end{equation}
where $\{\omega_{ij}\}_{j\in A(\xb_{-i})}$ is given by Lemma~\ref{VTP_lemma}. 
Notice that, for every $\xb_{-i}\in\mC_{-i}$, the set $\vartheta_i(\xb_{-i})$ is  \emph{non-empty, compact and convex}. Fig. \ref{fig:1} provides a visualization of the aforementioned constraint policy, in the case of $m=2$.

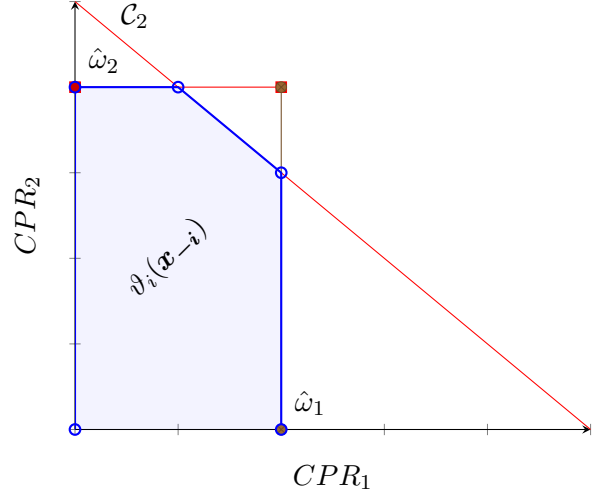
\begin{figure}[t] 
\begin{center}
\begin{tikzpicture}
\begin{axis}[
    axis lines = left,
    xlabel = $CPR_1$,
    ylabel = {$CPR_2$},
    yticklabels={,,},
    xticklabels={,,}
]
\addplot [
    domain=0:1, 
    samples=100, 
    color=red,
]
{-x+1};

\addplot coordinates {(0, 0.8) (0.4, 0.8)};
\addplot coordinates {(0.4, 0.8) (0.4, 0)};

\node[label={$\quad \quad \hat{\omega}_2$},circle,fill,inner sep=1pt] at (axis cs:0,0.8) {};
\node[label={$\quad \quad \hat{\omega}_1$},circle,fill,inner sep=1pt] at (axis cs:0.4,0) {};
\node[label={$\quad \quad \mathcal{C}_2$}] at (axis cs:0.06,0.89) {};
\addplot[thick,color=blue,mark=o,fill=blue, 
                    fill opacity=0.05]coordinates  {
            (0, 0) 
            (0, 0.8)
            (0.2, 0.8)
            (0.4, 0.6)
            (
            (0.4, 0)
            };
            \node [rotate=45] at (axis cs:  .18,  .40) {$\vartheta_i(\bm{x_{-i}})$};
            
\end{axis}
\end{tikzpicture}

\end{center}
\caption{Visualization of an instance of the constraint policy $\vartheta_i(\cdot)$ (blue shaded region) of player $i$ in the case of $m=2$,  where we denote  $\hat{\omega}_1 = \omega_{i1} - \mathbf{x}_T^{1|i} $ and  $\hat{\omega}_2 = \omega_{i2} - \mathbf{x}_T^{2|i} $.}
\vspace{0.7cm}
\label{fig:1}
\end{figure}

We aim to show that the Constrained Fragile multi-CPR Game, with constraint policies given by~\eqref{def:constraint_policy}, admits a Generalized Nash equilibrium. In order to do so, we employ the following theorem.  Recall (see~\cite[p.~32--33]{Ichiishi}) that a set-valued correspondence $\phi:X\to 2^Y$ is \emph{upper semicontinuous} if for every open set $G\subset Y$, it holds that  $\{x\in X : \phi(x) \subset G\}$ is an open set in $X$. A set-valued correspondence $\phi:X\to 2^Y$ is \emph{lower semicontinuous} if every open set $G\subset Y$, it holds that $\{x\in X : \phi(x) \cap G \neq \emptyset \}$ is an open set in $X$. Recall also that, given $S\subset \mathbb{R}^s$, a function $f:S\to\mathbb{R}$ is quasi-concave if $f(\lambda \xb + (1-\lambda)\yb) \ge \min\{f(\xb),f(\yb)\}$, for all $\xb\neq\yb$ in $S$ and $\lambda\in (0,1)$. Clearly, a concave function is also quasi-concave.

\begin{theorem}\label{thm:gne}
Let $n$ players be characterized by strategy spaces $X_i, i\in [n]$,  constraint policies  $\phi_i, i\in [n]$, and  utility functions $\mV_i: \prod_i X_i \to \mathbb{R}, i\in [n]$. Suppose further that the following hold true for every $i\in [n]$:
\begin{enumerate}
    \item $X_i$ is non-empty, compact, convex subset of a Euclidean space. 
    \item $\phi_i(\cdot)$ is both upper semicontinuous and lower semicontinuous in $X_{-i}$. 
    \item For all $\xb_{-i}\in X_{-i}$, $\phi_i(\xb_{-i})$ is  nonempty, closed and convex. 
    \item $\mV_i$ is continuous in $\prod_i X_i$. 
    \item For every $\xb_{-i}\in X_{-i}$, the map $x_i\mapsto \mV_i(x_i,\xb_{-i})$ is quasi-concave on $\phi_i(\xb_{-i})$. 
\end{enumerate}
Then there exists a Generalized Nash equilibrium. 
\end{theorem}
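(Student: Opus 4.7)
The plan is to prove Theorem~\ref{thm:gne} by a standard Kakutani fixed-point argument applied to the joint best-response correspondence of the constrained game, in the spirit of the classical existence results for generalized games (Debreu, Fan, Glicksberg).

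First I would introduce, for each $i\in[n]$, the best-response correspondence
\[
B_i:X_{-i}\to 2^{X_i},\qquad B_i(\xb_{-i}) = \arg\max_{\xb_i\in\phi_i(\xb_{-i})} \mV_i(\xb_i,\xb_{-i}).
\]
Using hypotheses (1)--(4), for each fixed $\xb_{-i}\in X_{-i}$ the set $\phi_i(\xb_{-i})$ is nonempty and compact and $\mV_i(\,\cdot\,,\xb_{-i})$ is continuous on it, so by Weierstrass $B_i(\xb_{-i})$ is nonempty and compact. Hypothesis (5) (quasi-concavity in the own variable over $\phi_i(\xb_{-i})$, which is convex by (3)) yields convexity of $B_i(\xb_{-i})$, since the set of maximisers of a quasi-concave function on a convex set is convex.

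The key step is to establish that $B_i$ is upper semicontinuous with closed graph. This is precisely the content of Berge's maximum theorem: because $\phi_i$ is both upper and lower semicontinuous with nonempty compact values (hypotheses (2)--(3)) and $\mV_i$ is jointly continuous (hypothesis (4)), the argmax correspondence $B_i$ is upper semicontinuous on $X_{-i}$ and has nonempty compact values. This is the part I expect to be the main technical obstacle, since one must use lower semicontinuity of $\phi_i$ crucially: lower semicontinuity prevents the feasible set from ``collapsing'' under perturbations of $\xb_{-i}$, ensuring that optimal values depend continuously on $\xb_{-i}$, while upper semicontinuity provides the closed-graph property of $B_i$.

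Finally, define the joint correspondence
\[
B:\prod_{i\in [n]} X_i \to 2^{\prod_{i\in [n]} X_i},\qquad B(\xb) = \prod_{i\in [n]} B_i(\xb_{-i}).
\]
The Cartesian product $\prod_i X_i$ is nonempty, compact and convex by (1). The correspondence $B$ inherits upper semicontinuity, nonemptiness, closedness and convexity of its values from the individual $B_i$'s (upper semicontinuity is preserved under finite products of correspondences with compact-valued factors). Kakutani's fixed point theorem then yields a point $\xb^{\ast}\in B(\xb^{\ast})$, i.e.\ $\xb_i^{\ast}\in B_i(\xb_{-i}^{\ast})$ for every $i\in [n]$. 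By definition of $B_i$, this means $\xb_i^{\ast}\in \phi_i(\xb_{-i}^{\ast})$ and $\mV_i(\xb_i^{\ast},\xb_{-i}^{\ast})\ge \mV_i(\xb_i,\xb_{-i}^{\ast})$ for all $\xb_i\in \phi_i(\xb_{-i}^{\ast})$, which is exactly the definition of a Generalized Nash equilibrium. This completes the proof.
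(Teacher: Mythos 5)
Your proof is correct: it is the classical Debreu/Arrow--Debreu argument (Berge's maximum theorem to get a nonempty, compact, convex-valued, upper semicontinuous joint best-response correspondence, then Kakutani), and every hypothesis of the theorem is used exactly where it should be --- continuity of $\phi_i$ in both senses for Berge, quasi-concavity plus convexity of $\phi_i(\xb_{-i})$ for convex values, compactness and convexity of $\prod_i X_i$ for Kakutani. The paper itself does not prove this statement but cites it as folklore (Arrow--Debreu, Facchinei--Kanzow, Ichiishi, Aubin, Dutang); your argument is precisely the proof contained in those references, so there is nothing to compare beyond noting that you have filled in what the paper delegates.
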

\begin{proof}
This is a folklore result that can be found in various places. See, for example, \cite{Arrow_Debreu}, \cite[Theorem~6]{Facchinei_Kanzow},   \cite[Theorem~4.3.1]{Ichiishi},  \cite[Theorem~12.3]{Aubin},  or \cite[Theorem~3.1]{Dutang}.  
\end{proof}

We are now ready to establish the existence of a GNE in the Constrained Fragile multi-CPR Game. 
In the following proof, $\|\cdot\|_d$ denotes $d$-dimensional Euclidean distance, and 
$B_d(\eps):= \{\xb\in \mathbb{R}^d: \|\xb\|_d \le \eps\}$ is the closed ball of radius $\eps$ centered at the origin. 
Moreover, given $A\subset \mathbb{R}^d$ and $\eps>0$, we denote by $\{A\}_{\eps}$  the set $A + B_d(\eps):=\{a + b: a\in A \text{ and } b\in B_d(\eps)\}$ and by $(1-\eps)\cdot A$ the set 
$\{(1-\eps)\cdot a : a\in A\}$.

\begin{proof}[Proof of Theorem~\ref{unique_NE}]
We apply Theorem~\ref{thm:gne}. The strategy space of each player is equal to $C_m$, which is non-empty, compact and convex. Hence the first condition of Theorem~\ref{thm:gne} holds true. 
The third condition also holds true, by~\eqref{def:constraint_policy}. Moreover, the fourth condition of Theorem~\ref{thm:gne} is immediate from the definition of utility, given in~\eqref{utility}, while the  fifth condition follows from Theorem~\ref{concave}.

It remains to show that the second condition of Theorem~\ref{thm:gne} holds true, i.e., that for each $i\in [n]$ the constrained policy $\vartheta_i(\cdot)$, given by~\eqref{def:constraint_policy}, is both upper and lower semicontinuous.  Towards this end, 
fix $i\in [n]$ and let $G\subset C_m$ be an open set. 
Consider the sets 
\[
G^{+}:=\{\xb_{-i}\in \mC_{-i}: \vartheta_i(\xb_{-i}) \subset G\} \quad \text{and} \quad 
G^{-}:=\{\xb_{-i}\in \mC_{-i}: \vartheta_i(\xb_{-i}) \cap G \neq \emptyset\} \, .
\]
We have to show that both $G^+$ and $G^-$ are 
open subsets of $\mC_{-i}$. 
We first show that $G^+$ is open. 

If $G^+$ is empty then the result is clearly true, so we may assume that 
$G^+\neq \emptyset$. Let $\yb=(\yb_1,\ldots,\yb_{i-1},\yb_{i+1},\ldots,\yb_n) \in G^+$; hence $\vartheta_i(\yb) \subset G$. We have to show that there exists $\eps>0$ such that for every $\xb\in\mC_{-i}$ with  $\|\xb-\yb\|_{(n-1)m}<\eps$, we have $\vartheta_i(\xb)\subset G$. 
Since $\vartheta_i(\yb)$ is a compact subset of the open set $G$, it follows that there exists $\eps_0>0$ such that 
$\{\vartheta_i(\yb)\}_{\eps_0} \subset G$. 
Since summation is continuous, 
there exists $\eps_1>0$ such that for every 
$\xb\in \mC_{-i}$ with
$\|\xb-\yb\|_{(n-1)m}<\eps_1$ it holds   
$\xb \in \{\vartheta_i(\yb)\}_{\eps_0}$. 
The desired $\eps$ is given by $\eps_1$. 
Hence $G^+$ is an open set, and we proceed with showing 
that $G^-$ is open as well. 

We may assume that $G^-$ is non-empty. 
For each $i\in [n]$, let $g_i: \mC_{-i} \to \mathbb{R}^m_{\geq 0}$ be the continuous function 
whose $j$-th coordinate, for $j\in [m]$, is given by 
\begin{equation*}
    g_{ij}(\xb_{-i}) = \begin{cases}  \omega_{ij} - \ji \, , & \mbox{ if } \omega_{ij} - \ji > 0 \\
    0 \, , & \mbox{ if } \omega_{ij} - \ji \leq 0 \,\, ,
    \end{cases}
\end{equation*}
where $\omega_{ij}$ is given by Lemma~\ref{VTP_lemma}. 
Let $h:\mathbb{R}^m_{\geq 0} \to 2^{C_m}$ be the set-valued function defined by 
$h(z_1,\ldots,z_m)= \prod_{j\in [m]} [0,z_j]$, 
with the convention $[0,0]:=\{0\}$. Clearly, it holds that $\vartheta_i = h \circ g_i$, for all $i\in [n]$. 

We claim that $h$ is lower semicontinuous. 
If the claim holds true then it follows that the set  $H := \{ \zb \in \mathbb{R}^m_{\geq 0}: h(\zb) \cap G \neq \emptyset\}$ is open. Notice that 
$G^-\neq\emptyset$ implies that $H\neq\emptyset$. 
Since $g_i$ is continuous, it follows that the 
preimage of $H$ under $g_i$, i.e., $g^{-1}_i(H)$, 
is open. In other words, the set 
$\{\xb\in\mC_{-i} : h\circ g_i(\xb) \cap G\neq\emptyset \}= \{\xb\in\mC_{-i} : \vartheta_i(\xb) \cap G\neq\emptyset \}$ 
is open and the proof of the theorem is complete. 

It remains to prove the claim, i.e., that 
$h$ is lower semicontinuous. To this end, 
let $G \subset C_m$ be an open set, and let 
$G^{\ast}:=\{\zb \in \mathbb{R}^m_{\geq 0}: h(\zb) \cap G \neq \emptyset \}$. 
We have to show that $G^{\ast}$ is open; that is, 
we have to show that for every $\zb\in G^{\ast}$ there exists $\eps>0$ such that  $\wb\in G^{\ast}$, for 
all $\wb$ with $\|\zb-\wb\|_m<\eps$.  Fix $\zb\in G^{\ast}$. 
Since $h(\zb)$ 
is compact and $G$ is open, it follows that there 
exists $\eps_0>0$ such that $(1-\eps_0)\cdot h(\zb) \cap G\neq \emptyset$. Now choose $\eps>0$ such that 
for every $\wb\in C_m$ for which $\|\zb-\wb\|_m <\eps$ it holds $(1-\eps_0)\cdot h(\zb)\subset h(\wb)$. In other words, for this particular choice of $\eps>0$   it holds $h(\wb)\cap G\neq\emptyset$, for every $\wb$ with $\|\zb-\wb\|_m <\eps$. The claim follows. 
\end{proof}

\section{Best response correspondence}\label{sec:4}

Having established the existence of a GNE for 
a Fragile multi-CPR Game, we now proceed with the proofs of Theorems~\ref{one_CPR} and~\ref{nr_of_GNE}. The proofs will be obtained in two steps.  
In the first step we deduce certain 
``first order conditions" which are satisfied by the best response 
correspondence of each player in the game. 
In the second step we employ the first order conditions in order to define certain 
auxiliary functions, whose monotonicity 
will be employed in the proofs of the aforementioned theorems. In this section we collect some results pertaining to the first step. 
We begin with recalling the notion of the best response correspondence (see~\cite{Laraki_et_al}).

Given $i\in [n]$ and 
$\xb_{-i}\in\mC_{-i}$, let $\vartheta_i(\cdot)$ denote the constraint policy given by~\eqref{def:constraint_policy}, and consider the \emph{best response} of the $i$-th player in the Fragile multi-CPR Game defined as follows:
\begin{equation}\label{def_BR}
    B_i(\xb_{-i}) = \arg\max_{\xb_i\in\vartheta_i(\xb_{-i})}\, \mV_i(\xb_i; \xb_{-i}) \, ,
\end{equation}
where $\mV_i$ is the utility of the $i$-th player,  given by~\eqref{utility}.
Notice that $B_i(\cdot)$ is a correspondence $B_i:\mC_{-i}\to 2^{C_m}$, where 
$2^{C_m}$ denotes the class consisting of all subsets of $C_m$. For $j\in [m]$, we denote by $B_{ij}(\xb_{-i})$ the $j$-th component of $B_i(\xb_{-i})$; hence we have \[
B_i(\xb_{-i}) = (B_{i1}(\xb_{-i}),\ldots, B_{im}(\xb_{-i}))\, .
\]

\begin{remark}\label{br_NE}
Notice that  Definition~\ref{def:gne} implies that if 
$\xb=(\xb_1,\ldots,\xb_n)\in\mC_n$ is a GNE of a Constrained Fragile multi-CPR Game, with constraint policies given by~\eqref{def:constraint_policy}, then for each $i\in [n]$ it holds $\xb_{i}\in B_{i}(\xb_{-i})$. 
\end{remark}

Recall that  
$A(\xb_{-i})$
denotes the set of active CPRs corresponding to 
$\xb_{-i}$, defined in~\eqref{active_cpr}, and notice that $B_{ij}(\xb_{-i})=0$, for all $j\in [m]\setminus A(\xb_{-i})$. 

For $x_{ij}\in [0,1]$, let $\psi_{ij}(x_{ij} ; \ji)$ be the function defined via 
\begin{equation}\label{psi_function}
\psi_{ij}(x_{ij}; \ji) = x_{ij}\cdot \frac{\partial }{\partial x_{ij}}\mF_{ij}(x_{ij} +\ji) + a_i \mF_{ij}(x_{ij} +\ji) \, .
\end{equation}

\begin{lemma}\label{BR_thm}
Fix $i\in [n]$ and $\xb_{-i}\in\mC_{-i}$  and let $\mR_{A(\xb_{-i})} = \prod_{j\in A(\xb_{-i})}(0,\omega_{ij}- \ji)$, where $\omega_{ij}$ is provided by Lemma~\ref{VTP_lemma}. 
Then a global maximum of the  function  
$\mV_{\xb_{-i}} := \sum_{j\in A(\xb_{-i})}\mE_{ij}(x_{ij} ; \ji)$ defined on the set $\mR_{A(\xb_{-i})}$ is given by the unique solution of the following system of equations:
\begin{equation}\label{eq:system}
\psi_{ij}(x_{ij}; \ji) = 0 , \,  \text{ for } \, j\in A(\xb_{-i}) \, ,
\end{equation}
where $\psi_{ij}(x_{ij}; \ji)$ is defined in~\eqref{psi_function}. 
\end{lemma}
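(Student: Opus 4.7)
The plan is to combine the concavity result of Theorem~\ref{concave} with a direct analysis of the first-order conditions. First I would observe that Theorem~\ref{concave} shows $\mV_{\xb_{-i}}$ is concave on the convex open box $\mR_{A(\xb_{-i})}$; in fact, inspecting the argument there, the Hessian is diagonal with strictly negative diagonal entries, so $\mV_{\xb_{-i}}$ is strictly concave on $\mR_{A(\xb_{-i})}$. Hence any interior critical point is automatically the unique global maximum, and the task reduces to solving $\nabla \mV_{\xb_{-i}}=0$ on the open box.

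Next I would differentiate. For each $j\in A(\xb_{-i})$, using the fact that $\mE_{ik}$ is independent of $x_{ij}$ when $k\neq j$, a direct computation from~\eqref{1_dimensional} gives
$$\frac{\partial \mV_{\xb_{-i}}}{\partial x_{ij}} \;=\; \frac{\partial \mE_{ij}}{\partial x_{ij}} \;=\; a_i\, x_{ij}^{a_i-1}\,\mF_{ij}(\xtj) + x_{ij}^{a_i}\,\frac{\partial}{\partial x_{ij}}\mF_{ij}(\xtj) \;=\; x_{ij}^{a_i-1}\,\psi_{ij}(x_{ij};\ji).$$
Since $x_{ij}>0$ throughout $\mR_{A(\xb_{-i})}$, the critical point condition collapses to the system $\psi_{ij}(x_{ij};\ji)=0$ for $j\in A(\xb_{-i})$, which moreover \emph{decouples}: the $j$-th equation depends only on $x_{ij}$ (with $\xb_{-i}$ fixed).

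It then remains to establish that each decoupled equation has a unique solution in $(0,\omega_{ij}-\ji)$. For fixed $j\in A(\xb_{-i})$, I would note that $\psi_{ij}(0;\ji) = a_i\,\mF_{ij}(\ji) > 0$, since $\ji<\omega_{ij}$ forces $\mF_{ij}(\ji)>0$ by Lemma~\ref{VTP_lemma}, while at the right endpoint $\mF_{ij}(\omega_{ij})=0$ combined with Assumption~\ref{ass_prob} yields
$$\psi_{ij}(\omega_{ij}-\ji;\ji) \;=\; (\omega_{ij}-\ji)\,\frac{\partial}{\partial x_{ij}}\mF_{ij}(\omega_{ij}) \;<\; 0.$$
The intermediate value theorem then produces a zero in the open interval. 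For uniqueness, I would differentiate $\psi_{ij}$ once more in $x_{ij}$ to obtain
$$\frac{\partial \psi_{ij}}{\partial x_{ij}} \;=\; (1+a_i)\,\frac{\partial}{\partial x_{ij}}\mF_{ij}(\xtj) + x_{ij}\,\frac{\partial^2}{\partial x_{ij}^2}\mF_{ij}(\xtj),$$
which is strictly negative on $(0,\omega_{ij}-\ji)$ by Assumption~\ref{ass_prob}, so $\psi_{ij}(\,\cdot\,;\ji)$ is strictly decreasing and its zero is unique. Assembling these coordinatewise zeros yields the unique critical point of $\mV_{\xb_{-i}}$ in the interior of $\mR_{A(\xb_{-i})}$, and by the opening remark this is the global maximum. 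The main (essentially only) subtlety is ensuring that the candidate maximum genuinely lies in the open box so that the interior first-order conditions apply, and this is handled precisely by the opposite signs of $\psi_{ij}$ at the two endpoints.
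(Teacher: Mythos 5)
Your proposal is correct and follows essentially the same route as the paper: compute $\partial \mV_{\xb_{-i}}/\partial x_{ij}=x_{ij}^{a_i-1}\psi_{ij}$, use the sign change of $\psi_{ij}$ at the endpoints together with its strict monotonicity (from Assumption~\ref{ass_prob}) to get a unique decoupled zero in each coordinate, and invoke concavity on the open convex box to promote the critical point to a global maximum. Your explicit endpoint evaluations and the formula for $\partial\psi_{ij}/\partial x_{ij}$ merely spell out details the paper leaves as "readily implies."
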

\begin{proof} 
To simplify notation, we write $\psi_{ij}(\cdot)$ instead of $\psi_{ij}(\,\cdot\, ; \ji)$. 
Using~\eqref{1_dimensional} and~\eqref{utility}, it is straightforward to verify that for every $j\in A(\xb_{-i})$ it holds 
\begin{equation}\label{derivative_utility}
\frac{\partial \mV_{\xb_{-i}}}{\partial x_{ij}} =  \frac{\partial \mE_{ij}(x_{ij}; \ji )}{\partial x_{ij}}  = x_{ij}^{a_i-1} \cdot \psi_{ij}(x_{ij}) \, .
\end{equation}
Now notice that $\psi_{ij}(0) >0$ as well as  
$\psi_{ij}(\omega_{ij}-\ji) <0$. Moreover, Assumption~\ref{ass_prob} readily implies that $\psi_{ij}(\cdot)$ is strictly decreasing on the interval  $(0,\omega_{ij}-\ji)$. 
The intermediate value theorem implies that there exists unique $\lambda_{ij}\in (0,\omega_{ij}-\ji)$ such that 
$\psi_{ij}(\lambda_{ij}) = 0$. 
Hence, it follows from~\eqref{derivative_utility} that the points $\lambda_{ij}$, for $j\in A(\xb_{-i})$, are  critical points of the  function $\mV_{\xb_{-i}}$, which is concave on the open and convex set $\mR_{A(\xb_{-i})}$, 
by Theorem~\ref{concave}. 
It follows (see~\cite[Theorem~2.4, p.~132]{Berkovitz}) that 
$\{\lambda_{ij}\}_{j\in A(\xb_{-i})}$
is a global maximum of $\mV_{\xb_{-i}}$ 
on $\mR_{A(\xb_{-i})}$. 
We conclude that $\mV_{\xb_{-i}}$ is maximized when $x_{ij}=\lambda_{ij}$, for $j\in A(\xb_{-i})$, as desired.
\end{proof}

\begin{remark}\label{rem:1}
Let us remark that the solution of the system of equations given by~\eqref{eq:system} may not belong to the set $C_m$. More precisely, it could 
happen that the solution of the system of equations~\eqref{eq:system}, say $\{\lambda_{ij}\}_{j\in A(\xb_{-i})}$, satisfies 
$\sum_{j\in A(\xb_{-i})} \lambda_{ij} >1$. 
This is a crucial difference between the Fragile CPR Game and the Fragile multi-CPR Game. 
\end{remark}

Now notice that, given $\xb_{-i}\in\mC_{-i}$, the best response of player $i$ is a local maximum of the following program:

\begin{equation*}
\begin{aligned}
& \underset{\{x_{ij}\}_{j\in A(\xb_{-i})} }{\text{maximize}}
& & \mV_{\xb_{-i}} := \sum_{j\in A(\xb_{-i}) }\mE_{ij}(x_{ij} ; \ji)\\
& \text{subject to}
& & \sum_{j\in A(\xb_{-i}) } x_{ij}\le 1   \\
& & & 0\le  x_{ij} \le \omega_{ij} - \ji, \text{ for all } \, j \in A(\xb_{-i}) \, .
\end{aligned}
\end{equation*}
Equivalently, the best response of player $i$ is a local minimum of the following program:

\begin{equation}\label{opt_problem2}
\begin{aligned}
& \underset{\{x_{ij}\}_{j\in A(\xb_{-i})} }{\text{minimize}}
& &   -\sum_{j\in A(\xb_{-i}) }\mE_{ij}(x_{ij} ; \ji)\\
& \text{subject to}
& & \sum_{j\in A(\xb_{-i}) } x_{ij}\le 1   \\
& & & 0\le  x_{ij} \le \omega_{ij} - \ji, \text{ for all } \, j \in A(\xb_{-i}) \, .
\end{aligned}
\end{equation}

Notice that since $\mE_{ij}(\,\cdot\, ; \ji)$ is concave on $(0,\omega_{ij} - \ji)$, by Lemma~\ref{VTP_lemma},  it follows that  Problem~\eqref{opt_problem2} is a separable convex knapsack program (see~\cite{Levi_Perakis_Romero, Stefanov}). We are going to describe the  optima of 
Problem~\eqref{opt_problem2}  using the KKT conditions. 
The KKT conditions pertain to the Lagrangian corresponding to Problem~\eqref{opt_problem2}, which is defined as the following quantity:
\[
\mathcal{L} := -\mV_{\xb_{-i}} +\kappa_0 \cdot\left(\sum_{j\in A(\xb_{-i}) } x_{ij}- 1\right) + \sum_{j\in A(\xb_{-i}) }\mu_j\cdot (x_{ij} + \ji -\omega_{ij}) + \sum_{j\in A(\xb_{-i}) }\nu_j\cdot (-x_{ij})  ,
\]
where $\kappa_0, \{\mu_j\}_j, \{\nu_j\}_j$ are real numbers. 
The KKT conditions corresponding to problem~\eqref{opt_problem2} read as follows (see~\cite[Theorem~3.8]{Luptacik}). 

\begin{theorem}[KKT conditions for Problem~\eqref{opt_problem2}]\label{kkt_cond} 
If $\{x_{ij}\}_{j\in A(\xb_{-i}) }$ is a local minimum of Problem~\eqref{opt_problem2}, then there exist \emph{non-negative} real numbers $\kappa_0$, $\{\mu_j\}_{j\in A(\xb_{-i}) }$, and $\{\nu_j\}_{j\in A(\xb_{-i}) }$ such that:
\begin{enumerate}
\item For all $j\in A(\xb_{-i})$ it holds $-x_{ij}^{a_i-1} \cdot \psi_{ij}(x_{ij};\ji) + \kappa_0 + \mu_j - \nu_j =0\,$, where $\psi_{ij}$ is given by~\eqref{psi_function}. 
\item $\kappa_0 \cdot\left(\sum_{j\in A(\xb_{-i})} x_{ij}- 1\right)=0$. 
\item $\mu_j\cdot (x_{ij} + \ji -\omega_{ij}) =0\,$, for all $j\in A(\xb_{-i})$. 
\item $\nu_j \cdot x_{ij}=0\,$, for all $j\in A(\xb_{-i})$. 
\item $0\le  x_{ij} \le \omega_{ij} - \ji\,$, for all   $j \in A(\xb_{-i})$. 
\end{enumerate}
\end{theorem}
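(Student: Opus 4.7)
The plan is to apply the classical Karush--Kuhn--Tucker necessary conditions to Problem~\eqref{opt_problem2}, after checking a constraint qualification and dealing with a mild non-smoothness of the objective on a boundary face.

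First, I would observe that, with $\xb_{-i}$ held fixed, the quantities $\ji$ and $\omega_{ij}$ appearing in Problem~\eqref{opt_problem2} are constants, so every inequality constraint is affine in the decision variables $\{x_{ij}\}_{j\in A(\xb_{-i})}$. Affinity of all constraints triggers the linearity constraint qualification (equivalently, Abadie's constraint qualification) at every feasible point, which is precisely the hypothesis of~\cite[Theorem~3.8]{Luptacik}. Next, I would write the Lagrangian $\mathcal{L}$ exactly as displayed just above the theorem, pair the non-negative multipliers $\kappa_0$, $\{\mu_j\}$, $\{\nu_j\}$ with the three families of inequality constraints in the natural order, and invoke the KKT theorem. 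It yields: stationarity $\frac{\partial \mathcal{L}}{\partial x_{ij}} = 0$ for every $j\in A(\xb_{-i})$, primal feasibility (condition~5), dual feasibility $\kappa_0,\mu_j,\nu_j\ge 0$, and the three complementary slackness relations that are exactly conditions~2, 3 and 4. A direct computation of $\frac{\partial \mathcal{L}}{\partial x_{ij}}$, using the identity $\frac{\partial \mE_{ij}}{\partial x_{ij}} = x_{ij}^{a_i-1}\,\psi_{ij}(x_{ij};\ji)$ from~\eqref{derivative_utility}, then translates stationarity into condition~1.

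The only delicate point, which I view as the main obstacle worth isolating, is that when $a_i<1$ the objective fails to be differentiable at boundary points with $x_{ij}=0$, so a priori one cannot apply a smooth KKT theorem at such a point. I would handle this by ruling out local minima on that face: since $j\in A(\xb_{-i})$ gives $\ji<\omega_{ij}$ and hence $\mF_{ij}(\ji)>0$, the one-sided derivative $\frac{\partial \mE_{ij}}{\partial x_{ij}}$ diverges to $+\infty$ as $x_{ij}\to 0^{+}$. Consequently, starting from any feasible point with $x_{ij}=0$, one can strictly increase $\sum_{j}\mE_{ij}$ by shifting an infinitesimal amount of budget onto the $j$-th CPR, compensating, if the budget constraint is tight, by decreasing some $x_{ik}$ with $k\in A(\xb_{-i})\setminus\{j\}$ and $x_{ik}>0$ (which must exist when the budget is saturated and $x_{ij}=0$). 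Hence every local minimum of Problem~\eqref{opt_problem2} has $x_{ij}>0$ for all $j\in A(\xb_{-i})$, the objective is $C^{2}$ in a neighbourhood of such a point, and the KKT machinery applies verbatim; the multipliers $\{\nu_j\}$ are then forced to be zero by complementary slackness. The remainder is a direct invocation of the standard result, with no additional subtleties.
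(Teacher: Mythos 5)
Your proposal is correct and follows the same basic route as the paper, which in fact offers no proof at all: Theorem~\ref{kkt_cond} is stated with a bare citation to~\cite[Theorem~3.8]{Luptacik}. What you add is genuinely substantive. First, the verification of a constraint qualification (all constraints of Problem~\eqref{opt_problem2} are affine once $\xb_{-i}$ is fixed) is exactly the hypothesis one must check before invoking the standard KKT theorem, and the paper never mentions it. Second, and more importantly, you correctly isolate the real obstruction: when $a_i<1$ the term $x_{ij}^{a_i}$ makes the objective non-differentiable at $x_{ij}=0$, and the stationarity identity in condition~1 cannot even be read literally there, since $x_{ij}^{a_i-1}\psi_{ij}(x_{ij};\ji)\to+\infty$ as $x_{ij}\to 0^{+}$ (because $\psi_{ij}(0;\ji)=a_i\mF_{ij}(\ji)>0$ for $j\in A(\xb_{-i})$). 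Your resolution --- that no local minimum can have $x_{ij}=0$ for $j\in A(\xb_{-i})$ when $a_i<1$, because a transfer of $\eps$ onto the $j$-th CPR gains $\sim\eps^{a_i}\mF_{ij}(\ji)$ while any compensating withdrawal from a coordinate $x_{ik}>0$ costs only $O(\eps)$ --- is sound, and it renders the theorem's conclusion consistent (with $\nu_j=0$ forced by complementary slackness); when $a_i=1$ the objective is smooth and the issue does not arise. Note that your observation is in fact slightly stronger than what the paper later exploits: it shows that for $a_i<1$ the set $J_{\xb_{-i}}$ of Theorem~\ref{kkt_thm} must equal $A(\xb_{-i})$ at any best response, not only in the Type~I case treated by Lemma~\ref{lem:antichain1}. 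In short, your write-up supplies the justification the paper delegates to a reference, and it patches a genuine degeneracy that the reference alone does not cover.
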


We aim to employ Theorem~\ref{kkt_cond} in order to 
describe a local maximum of  Problem~\eqref{opt_problem2} via the solution of a system of equations. 
This will require the following result, which is presumably reported somewhere in the literature but, lacking a reference, we include a proof for the sake of completeness. 

\begin{lemma}\label{opt_lemma}
Fix a positive integer $s$ and, for each $j\in [s]$, let  $f_j:\mathbb{R}\to \mathbb{R}$ be a strictly decreasing function. Then there exists at most one vector $(c,x_1,\ldots,x_s)\in \mathbb{R}^{s+1}$ such that 
\[
f_j(x_j) = c,\, \text{ for all } \, j\in [s], \, \text{ and } \, \sum_{j\in [s]} x_j =1 \, .
\]
\end{lemma}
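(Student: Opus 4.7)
The plan is to argue by contradiction, exploiting the strict monotonicity of each $f_j$ together with the single scalar constraint $\sum_j x_j = 1$. Suppose, towards a contradiction, that there are two distinct vectors $(c,x_1,\ldots,x_s)$ and $(c',x_1',\ldots,x_s')$ both satisfying the hypotheses. I would first reduce to the case $c \neq c'$: indeed, if $c = c'$, then for each $j \in [s]$ the equality $f_j(x_j) = c = c' = f_j(x_j')$ combined with the strict monotonicity (hence injectivity) of $f_j$ forces $x_j = x_j'$, so the two vectors coincide, contradicting distinctness.

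Without loss of generality, therefore, assume $c < c'$. For each $j \in [s]$ we have $f_j(x_j) = c < c' = f_j(x_j')$. Since $f_j$ is strictly decreasing, this yields $x_j > x_j'$ for every $j \in [s]$. Summing these strict inequalities over $j \in [s]$ gives
\[
1 = \sum_{j\in [s]} x_j > \sum_{j\in [s]} x_j' = 1,
\]
which is the desired contradiction. Thus at most one such vector exists.

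There is essentially no obstacle here: the argument uses only injectivity and order-reversal of strictly decreasing functions, together with the fact that the single normalization $\sum_j x_j = 1$ rules out simultaneous strict inequalities in the same direction. The only mild care required is to split into the cases $c = c'$ and $c \neq c'$ so as to handle both equality of the scalars and equality of the coordinates; both are immediate. Note that no continuity, differentiability, or structural assumption on the $f_j$ beyond strict monotonicity is needed, which is why the lemma is stated in this generality and can later be applied to the family of functions arising from the KKT analysis of Problem~\eqref{opt_problem2}.
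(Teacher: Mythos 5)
Your proposal is correct and follows essentially the same argument as the paper: rule out $c = c'$ via injectivity, then use order-reversal of the strictly decreasing $f_j$ together with the constraint $\sum_j x_j = 1$ to derive the contradiction $1 > 1$. The only cosmetic difference is that the paper passes through the inverses $f_j^{-1}$ explicitly, whereas you apply the order-reversal directly, which is slightly cleaner.
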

\begin{proof}
Suppose that there exist two distinct  vectors, say 
$(c,x_1,\ldots,x_s)$ and $(d,y_1,\ldots, y_s)$. 
If $c=d$, then there exists $j\in [s]$ such that $x_j\neq y_j$ and  
\[
f_j(x_j) = c = d = f_j(y_j) \, ,
\]
contrariwise to the assumption that the function  $f_i(\cdot)$ is strictly decreasing. Hence $c\neq d$. 

Since $f_j(\cdot), j\in [s]$, is strictly decreasing, it is injective and therefore it follows that it is invertible. Let us denote its inverse by $f_j^{-1}(\cdot)$. We then have 
\[
x_j = f_j^{-1}(c) \, \text{ and } y_j = f_j^{-1}(d), \, \text{ for all } \, j\in [s], 
\]
which in turn implies that $x_j \neq y_j$, for all $j\in [s]$. Assume, without loss of generality, that 
$c < d$. The assumption that $f_j$ is strictly decreasing then implies  
$x_j > y_j$,  for all  $j\in [s]$,
and therefore 
$1 = \sum_{j\in [s]} x_j > \sum_{j\in [s]} y_j =1$,
a contradiction. The result follows. 
\end{proof}

We may now proceed with describing the 
best responses of each player in the 
Fragile multi-CPR Game via a system of ``first order conditions". 

\begin{theorem}\label{kkt_thm} 
Let $i\in [n]$ and $\xb_{-i}\in\mC_{-i}$ be fixed. Suppose that $\{x_{ij}\}_{j\in A(\xb_{-i})}$ is a best response of player $i$ in the Fragile multi-CPR Game. Then 
$\{x_{ij}\}_{j\in A(\xb_{-i})}$ is either  of the following two types:
\begin{itemize}
\item \textbf{Type~I: } There exists $J_{\xb_{-i}}\subset A(\xb_{-i})$ such that 
$x_{ij}=0$, when $j\in A(\xb_{-i})\setminus J_{\xb_{-i}}$, and  $\{x_{ij}\}_{j\in J_{\xb_{-i}}}$ satisfy the following inequality, and are given by the unique solution of 
the following system of equations: 
\[
\sum_{j\in J_{\xb_{-i}}}x_{ij} < 1 \quad \text{ and } \quad  \psi_{ij}(x_{ij}; \ji) = 0, \, \text{ for }\,  j\in J_{\xb_{-i}}\, ,
\]
where $\psi_{ij}(\,\cdot\, ; \ji)$ is defined in~\eqref{psi_function}.
\item \textbf{Type~II: } There exists $J_{\xb_{-i}}\subset A(\xb_{-i})$ and a real number $\kappa_0 \ge 0$ such that 
$x_{ij}=0$, when $j\in A(\xb_{-i})\setminus J_{\xb_{-i}}$, and   $\{x_{ij}\}_{j\in J}$ are given by the unique solution of 
the following system of equations: 
\[
\sum_{j\in J_{\xb_{-i}}} x_{ij} =1 \quad \text{and} \quad 
x_{ij}^{a_i-1} \cdot \psi_{ij}(x_{ij}; \ji) = \kappa_0, \, \text{ for } \,  j\in J_{\xb_{-i}} \, , 
\]
where $\psi_{ij}(\,\cdot\, ; \ji)$ is defined in~\eqref{psi_function}.
\end{itemize}
\end{theorem}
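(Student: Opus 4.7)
The plan is to start from the KKT conditions provided by Theorem~\ref{kkt_cond} and eliminate the multipliers $\mu_j$ and $\nu_j$, so that the only surviving parameter is $\kappa_0$; this parameter will then split into the two advertised cases. Fix a best response $\{x_{ij}\}_{j\in A(\xb_{-i})}$ and define $J_{\xb_{-i}} := \{j \in A(\xb_{-i}) : x_{ij} > 0\}$. For $j \notin J_{\xb_{-i}}$ we have $x_{ij} = 0$ by definition, while for $j \in J_{\xb_{-i}}$, condition~(4) of Theorem~\ref{kkt_cond} immediately gives $\nu_j = 0$. So the entire analysis reduces to controlling $\mu_j$ and $\kappa_0$ on $J_{\xb_{-i}}$.

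The key preliminary step is to rule out $\mu_j > 0$ on $J_{\xb_{-i}}$: were this to hold, condition~(3) would force $x_{ij} = \omega_{ij} - \ji$, at which point $\mF_{ij}(\xtj) = 0$ by Lemma~\ref{VTP_lemma}. Combining this with the strict negativity of $\partial \mF_{ij}/\partial \xtj$ guaranteed by Assumption~\ref{ass_prob}, the definition~\eqref{psi_function} of $\psi_{ij}$ then yields $\psi_{ij}(x_{ij}; \ji) < 0$, so that condition~(1) would demand $\kappa_0 + \mu_j = x_{ij}^{a_i-1}\psi_{ij}(x_{ij};\ji) < 0$, contradicting the non-negativity of the multipliers. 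Hence $\mu_j = 0$ throughout $J_{\xb_{-i}}$, and condition~(1) simplifies to the clean identity $x_{ij}^{a_i-1}\psi_{ij}(x_{ij}; \ji) = \kappa_0$ for every $j \in J_{\xb_{-i}}$.

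Next I would perform the case split on $\kappa_0$. If $\sum_{j \in J_{\xb_{-i}}} x_{ij} < 1$, then condition~(2) forces $\kappa_0 = 0$, giving $\psi_{ij}(x_{ij}; \ji) = 0$ for all $j \in J_{\xb_{-i}}$, and uniqueness of each $x_{ij}$ inside $(0, \omega_{ij} - \ji)$ follows exactly as in the proof of Lemma~\ref{BR_thm} (strict monotone decrease of $\psi_{ij}$ via Assumption~\ref{ass_prob}); this is Type~I. If instead $\sum_{j \in J_{\xb_{-i}}} x_{ij} = 1$, I obtain Type~II directly. The uniqueness in this case is the more delicate point, and this is where Lemma~\ref{opt_lemma} enters: I would apply it to the functions $f_j(x) := x^{a_i-1}\psi_{ij}(x; \ji)$, which by~\eqref{derivative_utility} coincide with $\partial \mE_{ij}(x; \ji)/\partial x$ and are therefore strictly decreasing on $(0, \omega_{ij} - \ji)$ by the strict concavity of $\mE_{ij}(\,\cdot\,; \ji)$ established in the proof of Lemma~\ref{VTP_lemma}. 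Lemma~\ref{opt_lemma} then delivers uniqueness of the pair $(\kappa_0, \{x_{ij}\}_{j \in J_{\xb_{-i}}})$ satisfying the simultaneous equations together with the knapsack equality.

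The main obstacle I expect is the elimination of the upper-bound multiplier $\mu_j$: one has to argue that any best response is, in fact, strictly interior with respect to the constraints $x_{ij} \le \omega_{ij} - \ji$, and the cleanest route is the sign analysis of $\psi_{ij}$ at the right endpoint sketched above, which fundamentally relies on the first-order information in Assumption~\ref{ass_prob}. Everything else is bookkeeping: verifying that Lemma~\ref{opt_lemma} applies on the correct interval and confirming that the simultaneous-solution reformulation exhausts every case admitted by the KKT system.
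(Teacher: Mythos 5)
Your proposal is correct and follows essentially the same route as the paper: the KKT conditions of Theorem~\ref{kkt_cond} for Problem~\eqref{opt_problem2}, elimination of the multipliers $\mu_j,\nu_j$, the case split on whether $\sum_{j}x_{ij}<1$ or $=1$ via complementary slackness for $\kappa_0$, and Lemma~\ref{opt_lemma} applied to the strictly decreasing functions $x\mapsto x^{a_i-1}\psi_{ij}(x;\ji)$ for Type~II uniqueness. The only (minor, and arguably cleaner) difference is how the upper-bound constraints are discarded: the paper notes that a coordinate with $x_{ij}=\omega_{ij}-\ji$ gives zero utility and may be replaced by $0$, whereas you rule that case out directly by the sign of $\psi_{ij}$ at the right endpoint against the non-negativity of $\kappa_0+\mu_j$.
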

\begin{proof} 
Let $\{x_{ij}\}_{j\in A(\xb_{-i}) }$ be
a best response of player $i\in [n]$. Then 
$\{x_{ij}\}_{j\in A(\xb_{-i}) }$ is a local minimum
of Problem~\eqref{opt_problem2}; hence it   satisfies the KKT Conditions of Theorem~\ref{kkt_cond}.

If $x_{ij} = \omega_{ij} - \ji$, for some $j\in A(\xb_{-i})$, then  Lemma~\ref{VTP_lemma} 
and~\eqref{1_dimensional} imply that 
$\mE_{ij}(x_{ij};\ji)=0$. Hence player $i$ could achieve the 
same utility from the $j$-th CPR by choosing $x_{ij}=0$. Thus we may assume that $x_{ij} <\omega_{ij}- \ji$, for all 
$j\in A(\xb_{-i})$ and therefore  Theorem~\ref{kkt_cond}.(3) implies that $\mu_j =0$, for all 
$j\in A(\xb_{-i})$. 
Now let 
\begin{equation}\label{J_set}
J_{\xb_{-i}}=\{j\in A(\xb_{-i}): x_{ij}\neq 0\} \, ,
\end{equation}
and notice that Theorem~\ref{kkt_cond}.(4) implies that $\nu_j=0$ for $j\in J_{\xb_{-i}}$.
We distinguish two cases. 

Suppose first that $\sum_{j\in J_{\xb_{-i}}}x_{ij} < 1$. 
Then Theorem~\ref{kkt_cond}.(2) yields  $\kappa_0=0$, and therefore   Theorem~\ref{kkt_cond}.(1) implies   
that  $\{x_{ij}\}_{i\in J_{\xb_{-i}}}$ is given by  the unique solution of the following  system of equations:
\[
 \psi_{ij}(x_{ij}; \ji) = 0, \, \text{ for } \,  j\in J_{\xb_{-i}} \, .
\]
In other words, if $\sum_{j\in J_{\xb_{-i}}}x_{ij} < 1$ then $\{x_{ij}\}_{j\in A(\xb_{-i})}$ is of Type~I. 

Now assume that $\sum_{j\in J_{\xb_{-i}}}x_{ij} = 1$. 
Then  Theorems~\ref{kkt_cond}.(1) and~\ref{kkt_cond} .(2) imply that there exists $\kappa_0 \ge0$ such that $-x_{ij}^{a_i-1} \cdot \psi_{ij}(x_{ij};\ji) = -\kappa_0$, for all $j\in J_{\xb_{-i}}$. In other words,  
$\{x_{ij}\}_{j\in J_{\xb_{-i}}}$  and $\kappa_0$ are given by the solution of the following system of equations:
\begin{equation}\label{equal_one}
\sum_{j\in J_{\xb_{-i}}} x_{ij} =1\, \text{ and } \, 
x_{ij}^{a_i-1} \cdot \psi_{ij}(x_{ij}; \ji) = \kappa_0, \, \text{ for all } \,  j\in J_{\xb_{-i}} \, .
\end{equation}
Since the functions $f_{ij}(x_{ij}):= x_{ij}^{a_i-1}\cdot\psi_{ij}(x_{ij} ; \ji)$, for $j\in J_{\xb_{-i}}$, are 
strictly decreasing, Lemma~\ref{opt_lemma} implies that the 
system of equations  in~\eqref{equal_one} has a
unique solution. 
Hence $\{x_{ij}\}_{j\in A(\xb_{-i})}$ is of Type~II and 
the result follows. 
\end{proof}

We refer to the set $J_{\xb_{-i}}$ provided by 
Theorem~\ref{kkt_thm}, defined in~\eqref{J_set}, as the set of \emph{effective} CPRs corresponding to $i\in [n]$ and $\xb_{-i}\in\mC_{-i}$. 
In the next section we employ Theorem~\ref{kkt_thm} 
in order to define auxiliary functions (i.e.,~\eqref{g_funct} and~\eqref{h_funct} below) whose monotonicity will play a key role in the proof of 
Theorem~\ref{nr_of_GNE}.

\section{Auxiliary functions}\label{sec:5}

In this section  we define and state 
basic properties of certain auxiliary functions, whose monotonicity will be used in the proofs of Theorems~\ref{one_CPR} and~\ref{nr_of_GNE}, and whose definition depends upon the ``first order conditions" provided by Theorem~\ref{kkt_thm}.

Let us begin with some notation and remarks.  
Fix $i\in [n]$ and $\xb_{-i}\in\mC_{-i}$, and recall from~\eqref{def_BR} that $B_i(\xb_{-i})$ denotes a best response of player $i$ and that  $B_{ij}(\xb_{-i})$ is its $j$-th component. 
To simplify notation, 
let us denote $b_{ij}:= B_{ij}(\xb_{-i})$. 
From Theorem~\ref{kkt_thm} we know that  there exists $J_{\xb_{-i}}\subset A(\xb_{-i})$ such that $b_{ij}=0$, for $j\in A(\xb_{-i})\setminus J_{\xb_{-i}}$, and either 
\begin{equation}\label{psi1}
\sum_{j\in J_{\xb_{-i}}} b_{ij} < 1 \quad \text{and} \quad
\psi_{ij}(b_{ij};\xb_{-i}) = 0, \,\text{ for all } \, j\in J_{\xb_{-i}}, 
\end{equation} 
or    
\begin{equation}\label{psi2}
\sum_{j\in J_{\xb_{-i}}}b_{ij}=1 \quad \text{ and } \quad b_{ij}^{a_i-1}\cdot \psi_{ij}(b_{ij};\xb_{-i}) = \kappa_0, \,\text{ for all } \, j\in J_{\xb_{-i}}\, \text{ and some } \, \kappa_0\ge 0.
\end{equation}
In particular, it holds $b_{ij}>0$, for all $j\in J_{\xb_{-i}}$. 
Using~\eqref{psi_function}, it follows that the second statement of~\eqref{psi1} is equivalent to 
\begin{equation}\label{psi3}
 b_{ij}\cdot \frac{\partial}{\partial x_{ij}} \mF_{ij}(b_{ij} +\ji) + a_i \mF_{ij}(b_{ij} +\ji) = 0 , \,\text{ for all } \, j\in J_{\xb_{-i}},
\end{equation}
and that the second statement of~\eqref{psi2} is equivalent to
\begin{equation}\label{psi4}
b_{ij}^{a_i-1}\cdot \left( b_{ij}\cdot \frac{\partial}{\partial x_{ij}} \mF_{ij}(b_{ij} +\ji) + a_i \mF_{ij}(b_{ij} +\ji) \right) = \kappa_0, \,\text{ for all } \, j\in J_{\xb_{-i}}\, . 
\end{equation} 
Now, given $\xb_{-i}\in\mC_{-i}$, $j\in J_{\xb_{-i}}$ and $\kappa_0\ge 0$, define 
for each $i\in [n]$  the   functions 
\begin{equation}\label{g_funct}
\mG_{ij}(x_{ij}+\ji) := -\frac{ a_i \mF_{ij}(x_{ij} +\ji)}{\frac{\partial}{\partial x_{ij} } \mF_{ij}(x_{ij} +\ji)} \, , \text{ for } \, x_{ij}\in (0,\omega_{ij}-\ji)
\end{equation}
and 
\begin{equation}\label{h_funct}
\mH_{ij}(x_{ij}+\ji ; \kappa_0) := -\frac{ a_i \mF_{ij}(x_{ij} +\ji)} { \frac{-\kappa_0}{x_{ij}^{a_i}} +\frac{\partial}{\partial x_{ij}} \mF_{ij}(x_{ij} +\ji)} , \text{ for } \, x_{ij}\in (0,\omega_{ij}-\ji)\, .
\end{equation}
Notice that~\eqref{psi3} implies that when $b_{ij}$ is of Type~I it holds 

\begin{equation}\label{fixed_g}
\mG_{ij}( b_{ij}+\ji) = b_{ij} \, ,
\end{equation}
while~\eqref{psi4} implies that when $b_{ij}$ is of Type~II it holds   
\begin{equation}\label{fixed_h}
\mH_{ij}( b_{ij} +\ji; \kappa_0) = b_{ij}, \, .
\end{equation}
Observe also that it holds 
$\mG_{ij}(x_{ij}+\ji) \ge \mH_{ij}(x_{ij}+\ji; \kappa_0)$, for all $x_{ij}\in [0,\omega_{ij}-\ji]$. 
Let us, for future reference, collect a couple of observations about the functions $\mG_{ij},\mH_{ij}$. 

\begin{lemma}\label{gh_functions}
Let $i\in [n]$ and $j\in [m]$ be fixed. 
Then the functions $\mG_{ij}(\cdot)$ and $\mH_{ij}(\,\cdot\, ; \kappa_0)$, defined in~\eqref{g_funct} and~\eqref{h_funct} respectively, are strictly decreasing in the interval $[0,\omega_{ij}]$.  
\end{lemma}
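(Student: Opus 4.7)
The plan is to establish both monotonicity claims by direct computation of the relevant first derivative, and then to read off the sign from Assumption~\ref{ass_prob} together with Lemma~\ref{VTP_lemma}.

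First I would handle $\mG_{ij}$, which depends on $x_{ij}$ and $\ji$ only through the combination $s:=x_{ij}+\ji$, and so is genuinely a one-variable function. The quotient rule yields an expression of the shape
$$\mG'_{ij}(s) \;=\; -a_i\cdot\frac{\mF'_{ij}(s)^2 - \mF_{ij}(s)\,\mF''_{ij}(s)}{\mF'_{ij}(s)^2}.$$
Lemma~\ref{VTP_lemma} gives $\mF_{ij}(s)\ge 0$ on $[0,\omega_{ij}]$, while Assumption~\ref{ass_prob} forces $\mF'_{ij}<0$ and $\mF''_{ij}<0$; the numerator of the displayed fraction is therefore a sum of non-negative terms with $\mF'_{ij}(s)^2>0$, hence strictly positive, the $-a_i$ prefactor is strictly negative, and $\mG'_{ij}(s)<0$ throughout $[0,\omega_{ij}]$.

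For $\mH_{ij}(\,\cdot\,;\kappa_0)$ the same strategy applies, but a brief interpretive step is needed beforehand: the variable $x_{ij}$ enters the formula twice, implicitly inside $s=x_{ij}+\ji$ and explicitly through $-\kappa_0/x_{ij}^{a_i}$ in the denominator. The natural reading, also the one consistent with the later application to the fixed-point relation $\mH_{ij}(b_{ij}+\ji;\kappa_0)=b_{ij}$, is to freeze the explicit $x_{ij}$ as a strictly positive parameter and view $\mH_{ij}$ as a function of $s$ alone. Setting $\beta := -\kappa_0/x_{ij}^{a_i}\le 0$, the quotient-rule computation produces
$$\frac{\partial\mH_{ij}}{\partial s} \;=\; -a_i\cdot\frac{\bigl(\beta+\mF'_{ij}(s)\bigr)\mF'_{ij}(s) - \mF_{ij}(s)\,\mF''_{ij}(s)}{\bigl(\beta+\mF'_{ij}(s)\bigr)^2}.$$
The extra contribution $\beta\,\mF'_{ij}(s)$ in the numerator is non-negative (product of two non-positive quantities), so the same sign tally as for $\mG_{ij}$ closes the argument.

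The computations themselves are routine; what genuinely deserves attention is (i) the interpretation of $\mH_{ij}$ as a one-variable function, described above, which is the only conceptual subtlety, and (ii) checking that the denominators do not vanish on $[0,\omega_{ij}]$, which is immediate since $\mF'_{ij}$ is strictly negative throughout and the added summand $\beta$ is non-positive. I anticipate no substantive obstacle beyond clarifying this interpretation of $\mH_{ij}$.
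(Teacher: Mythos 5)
Your treatment of $\mG_{ij}$ is correct and is essentially the paper's argument (the paper obtains it as the $\kappa_0=0$ case of its computation for $\mH_{ij}$). The difficulty is with $\mH_{ij}$: you resolve the notational ambiguity by freezing the explicit $x_{ij}$ in $-\kappa_0/x_{ij}^{a_i}$ as a constant $\beta\le 0$ and differentiating only through $s=x_{ij}+\ji$. The paper does the opposite: it fixes $\ji$ and differentiates with respect to $x_{ij}$, so the term $-\kappa_0/x_{ij}^{a_i}$ in the denominator varies as well. Differentiating that term contributes the summand $a_i\mF\cdot\bigl(a_i\kappa_0/x_{ij}^{a_i+1}\bigr)\ge 0$ to the numerator, which has the \emph{wrong} sign, and controlling it is the entire substance of the paper's proof: after clearing powers of $x_{ij}$ the offending contribution becomes $\kappa_0 x_{ij}^{a_i-1}\bigl(x_{ij}\mF'+a_i\mF\bigr)$, which is positive wherever $\psi_{ij}>0$, and the paper must introduce the auxiliary function $H(x_{ij})=\kappa_0 x_{ij}^{a_i-1}\bigl(x_{ij}\mF'+a_i\mF\bigr)-x_{ij}^{2a_i}(\mF')^2$, show $\frac{\partial}{\partial x_{ij}}H<0$ and $H(0)=0$, and conclude $H\le 0$. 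By freezing $\beta$ you make this term vanish by fiat, so your sign count is trivial; you have proved a genuinely different (and easier) statement, not the one the paper's proof establishes. The paper's closing sentence, equating $\frac{\partial}{\partial\xtj}\mH_{ij}(\xtj;\kappa_0)$ with $\frac{\partial}{\partial x_{ij}}\mH_{ij}(x_{ij}+\ji;\kappa_0)$, confirms that the intended variable is $\xtj$ with $\ji$ held fixed, i.e.\ the explicit $x_{ij}$ in the $\kappa_0$-term is meant to vary.

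To be fair to your reading: it is internally consistent, the computation under it is correct (the check that the denominator $\beta+\mF'$ never vanishes is right), and one can verify that the frozen-$\beta$ monotonicity, applied with $\beta=-\kappa_0/y_{ij}^{a_i}$ together with the pointwise bound $\mG_{ij}\ge\mH_{ij}$, would still support the one place the lemma is invoked (Lemma~\ref{lem:antichain2}). But as a proof of Lemma~\ref{gh_functions} in the sense the authors intend it, your argument omits the only nontrivial step, namely the sign analysis of the $\kappa_0$-dependent term in the derivative.
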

\begin{proof}
To simplify notation, let $\mF:=\mF_{ij}(x_{ij} + \ji)$,   $\mF^{\prime}:=\frac{\partial}{\partial x_{ij}} \mF$ and $\mF^{\prime\prime}:=\frac{\partial^2}{\partial x_{ij}^2} \mF$. For $x_{ij}\in (0,\omega_{ij}-\xtj)$, 
we compute 
\begin{eqnarray*}
\frac{\partial}{\partial x_{ij}}\mH_{ij}(x_{ij}+\ji; \kappa_0) &=&  \frac{ -a_i \mF^{\prime} \cdot (\frac{-\kappa_0}{x_{ij}^{a_i}} +\mF^{\prime}) +  a_i \mF \cdot (-a_i \frac{-\kappa_0}{x_{ij}^{a_i+1}} + \mF^{\prime\prime}) }{(\frac{-\kappa_0}{x_{ij}^{a_i}} +\mF^{\prime})^2}  \\
&=& a_i\cdot \frac{ \kappa_0 x_{ij}^{a_i-1}  \left( x_{ij}\mF^{\prime} + a_i  \mF \right) - x_{ij}^{2a_i} (\mF^{\prime})^2 +  x_{ij}^{2a_i}\mF \cdot \mF^{\prime\prime} }{( -\kappa_0   +x_{ij}^{a_i}\mF^{\prime})^2} \\
&<& a_i\cdot  \frac{\kappa_0  x_{ij}^{a_i-1} \left( x_{ij}\mF^{\prime} + a_i  \mF \right) - x_{ij}^{2a_i} (\mF^{\prime})^2  }{( -\kappa_0   +x_{ij}^{a_i}\mF^{\prime})^2} \, ,
\end{eqnarray*}
where the last estimate follows from the fact that, by  Assumption~\ref{ass_prob}, it holds $\mF^{\prime\prime}<0$. 
If $\kappa_0 =0$, then it readily follows that $\frac{\partial}{\partial x_{ij}}\mH_{ij}(x_{ij}+\ji; \kappa_0)<0$ and therefore 
$\mH_{ij}$ is strictly decreasing; thus $\mG_{ij}$ is strictly decreasing as well.  
So we may assume that $\kappa_0>0$. 
If $ x_{ij}\mF^{\prime} + a_i  \mF<0$, then it also follows that $\mH_{ij}$ is strictly decreasing; thus we may also assume that $A:= x_{ij}\mF^{\prime} + a_i  \mF\ge 0$. Now notice that $\frac{\partial A}{\partial x_{ij}} = \mF^{\prime} + x_{ij} \mF^{\prime\prime} + a_i \mF^{\prime} <0$,  
and define the function
\[
H(x_{ij}) := \kappa_0  x_{ij}^{a_i-1} \cdot A - x_{ij}^{2a_i} (\mF^{\prime})^2  \, ;
\]
hence it holds $\frac{\partial}{\partial x_{ij}}\mH_{ij}(x_{ij}+\ji; \kappa_0) < a_i\cdot \frac{H(x_{ij})}{(-\kappa_0 + x_{ij}^{a_i} \mF^{\prime})^2}$.
Moreover, it holds 
\[
\frac{\partial}{\partial x_{ij}}H(x_{ij}) = (a_i-1)\kappa_0 x_{ij}^{a_i-2}\cdot A + \kappa_0x_{ij}^{a_i-1} \cdot\frac{\partial A}{\partial x_{ij}} - 2a_ix_{ij}^{2a_i-1} (\mF^{\prime})^2 - 2x_{ij}^{2a_i} \mF^{\prime} \mF^{\prime\prime} \, .
\]
Since $a_i\le 1$, $A\ge 0$ and $\mF^{\prime},\mF^{\prime\prime},\frac{\partial A}{\partial x_{ij}}<0$, it readily follows that all addends in the previous equation are negative, and therefore   
$\frac{\partial}{\partial x_{ij}}H(x_{ij}) <0$. In other words, $H(\cdot)$ is strictly decreasing in $[0,\omega_{ij}-\ji]$ and, since $H(0)=0$, $H(\omega_{ij}-\ji)<0$, we conclude that $H(x_{ij})\le 0$, for all $x_{ij} \in [0,\omega_{ij}-\ji]$. This implies that $\frac{\partial}{\partial x_{ij}}\mH_{ij}(x_{ij}+\ji; \kappa_0)<0$ for $x_{ij}\in [0,\omega_{ij}-\ji]$.  Since 
$\frac{\partial}{\partial \xtj}\mH_{ij}(\xtj; \kappa_0)=\frac{\partial}{\partial x_{ij}}\mH_{ij}(x_{ij}+\ji; \kappa_0)$, and similarly for $\mG_{ij}$, we conclude that both $\mG_{ij}$ and $\mH_{ij}$ are strictly decreasing in the interval $[0,\omega_{ij}]$, as desired. 
\end{proof}

\section{Proof of Theorem~\ref{one_CPR}}\label{sec:6}

In this section we prove Theorem~\ref{one_CPR}. 
We first introduce some  notation. 
Consider a GNE, say $\xb=(\xb_1,\ldots,\xb_n)\in\mC_n$, where $\xb_i =(x_{i1},\ldots,x_{im})\in C_m$, of a Fragile multi-CPR 
Game satisfying Assumption~\ref{ass_prob}. 
Given $j\in [m]$, let  
\begin{equation}\label{def:support}
\mS(\xtj) = \{i\in [n] : \xtj < \omega_{ij} \text{ and } x_{ij}>0 \}
\end{equation}
be the \emph{support} of the $j$-th CPR and let 
\begin{equation}\label{def:support1}
\mS_I(\xtj)= \{i\in \mS(\xtj) : \xb_i \text{ is of Type~I}\} 
\end{equation}
be the \emph{support of Type~I}, consisting of those players in the support of the $j$-th CPR  whose best response is of Type~I, and 
\begin{equation}\label{def:support2}
\mS_{II}(\xtj)= \{i\in \mS(\xtj) : \xb_i \text{ is of Type~II}\} 
\end{equation}
be the \emph{support of Type~II}, consisting of those players in the support of the $j$-th CPR whose best response is of Type~II. 
Clearly, in view of Theorem~\ref{kkt_thm}, it holds $\mS(\xtj) = \mS_I(\xtj)\cup \mS_{II}(\xtj)$. 

We employ 
the properties of the auxiliary 
functions in the proof of 
Theorem~\ref{one_CPR}, 
a basic ingredient of which is the fact 
that in the setting of Theorem~\ref{one_CPR} the support 
of Type~II is empty. 
The proof is similar to the proof of Theorem~\ref{Hota}, given in~\cite[p.~155]{Hota}. In fact, we prove a bit more. We show that 
Theorem~\ref{one_CPR} is a consequence of  the following result. 

\begin{theorem}\label{gen_one_CPR}
Consider a Fragile multi-CPR Game with $n\ge 1$ 
players and $m\ge 1$ CPRs satisfying Assumption~\ref{ass_prob}. 
Then there exists at most one 
GNE $\xb=(\xb_1,\ldots,\xb_n)$ for which  $\xb_i$ is of Type~I, for all $i\in [n]$.
\end{theorem}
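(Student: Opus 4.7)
The plan is to show that every GNE $\xb=(\xb_1,\ldots,\xb_n)$ whose coordinates are all of Type~I is uniquely determined by a one-dimensional fixed-point equation for each total investment $\xtj$. Introducing the truncated function $\widetilde{\mG}_{ij}(\xi):=\max\{0,\mG_{ij}(\xi)\}$, I would first prove the pointwise identity
\[
x_{ij}=\widetilde{\mG}_{ij}(\xtj),\qquad i\in[n],\; j\in[m],
\]
at any Type~I GNE. Summing over $i$ yields $\xtj=\sum_{i=1}^n \widetilde{\mG}_{ij}(\xtj)$, so uniqueness of the GNE reduces to showing that for each $j\in[m]$ this scalar equation has at most one solution, which I would deduce from the strict monotonicity of $\mG_{ij}$ established in Lemma~\ref{gh_functions}.

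The identity $x_{ij}=\widetilde{\mG}_{ij}(\xtj)$ I would verify by splitting into three cases. If $x_{ij}>0$, then $j\in J_{\xb_{-i}}$ and the Type~I first-order condition $\psi_{ij}(x_{ij};\ji)=0$ is, by~\eqref{fixed_g}, equivalent to $\mG_{ij}(\xtj)=x_{ij}>0$, hence $\widetilde{\mG}_{ij}(\xtj)=x_{ij}$. If $x_{ij}=0$ and $\xtj\ge\omega_{ij}$, then Lemma~\ref{VTP_lemma} gives $\mF_{ij}(\xtj)\le 0$; combined with $\frac{\partial}{\partial x_{ij}}\mF_{ij}<0$ from Assumption~\ref{ass_prob}, this forces $\mG_{ij}(\xtj)\le 0$, so $\widetilde{\mG}_{ij}(\xtj)=0=x_{ij}$. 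The remaining case, $x_{ij}=0$ with $\xtj<\omega_{ij}$, I would rule out by a deviation argument: the strict inequality $\sum_{k\in[m]} x_{ik}<1$ guaranteed by Type~I, together with $\ji=\xtj<\omega_{ij}$, implies that for all sufficiently small $\epsilon>0$ the strategy obtained from $\xb_i$ by replacing $x_{ij}=0$ with $\epsilon$ lies in $\vartheta_i(\xb_{-i})$; by continuity and $\mF_{ij}(\ji)>0$ it changes player $i$'s utility by $\epsilon^{a_i}\mF_{ij}(\epsilon+\ji)>0$, contradicting $\xb_i\in B_i(\xb_{-i})$ (Remark~\ref{br_NE}).

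With the identity in hand, define $F_j(\xi):=\xi-\sum_{i=1}^n \widetilde{\mG}_{ij}(\xi)$. By Lemma~\ref{gh_functions} each $\mG_{ij}$ is strictly decreasing on $[0,\omega_{ij}]$ with $\mG_{ij}(\omega_{ij})=0$, so every $\widetilde{\mG}_{ij}$ is continuous and weakly decreasing on $[0,\infty)$. Hence $F_j$ is strictly increasing and the equation $F_j(\xi)=0$ admits at most one solution, which pins down $\xtj$ uniquely, after which $x_{ij}=\widetilde{\mG}_{ij}(\xtj)$ determines every coordinate of $\xb$. The main technical obstacle I expect is the deviation argument of the third case, which depends essentially on the Type~I strict inequality $\sum_k x_{ik}<1$ to ensure that the upward perturbation of $x_{ij}$ remains feasible within the constraint policy $\vartheta_i(\xb_{-i})$; all other steps reduce to mechanical applications of the results already developed in Sections~\ref{sec:4} and~\ref{sec:5}.
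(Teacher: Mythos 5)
Your proposal is correct and follows essentially the same route as the paper: the deviation argument ruling out $x_{ij}=0$ with $\xtj<\omega_{ij}$, the identity $\mG_{ij}(\xtj)=x_{ij}$ from the Type~I first-order condition, and the strict monotonicity of $\mG_{ij}$ from Lemma~\ref{gh_functions} are exactly the ingredients of the paper's proof. Your packaging via the truncated map $\widetilde{\mG}_{ij}$ and the strictly increasing scalar function $F_j$ is a tidy reformulation of the paper's contradiction argument over the supports $\mS_I(\xtj)\supset\mS_I(\ytj)$, but it is not a genuinely different method.
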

\begin{proof}
Let $\xb=(\xb_1,\ldots,\xb_n)$ be a GNE
such that $\xb_i$ is of Type~I, for all $i\in [n]$ and note that, 
since $\xb_i$ is of Type~I, it holds  
$\sum_j x_{ij} < 1$, for all $i\in [n]$. 
For each $j\in [m]$, let 
$\mS_0(\xtj) :=\{i\in [n]: \xtj<\omega_{ij}\}$. 
We  claim that $\mS_0(\xtj)=\mS(\xtj)$. 
Indeed, if there exists $i\in \mS_0(\xtj)\setminus\mS(\xtj)$ then $x_{ij}=0$ 
and since it holds  
$\xtj<\omega_{ij}$ and $\sum_j x_{ij} < 1$, it 
follows that player $i$ could 
increase her utility by investing a suitably 
small amount, say $\eps>0$, in the $j$-th CPR. But then this implies  that $\xb$ cannot be a GNE, a contradiction. Hence $\mS_0(\xtj)=\mS(\xtj)$.

We  claim that for any two distinct GNEs, 
say $\xb=(\xb_1,\ldots,\xb_n)$ and $\yb=(\yb_1,\ldots,\yb_n)$, for which $\xb_i,\yb_i$ are of Type~I for all $i\in [n]$, it holds $\xtj = \ytj$, for all $j\in [m]$. Indeed, if the claim is not true, then there 
exists $j\in [m]$ such that $\xtj\neq\ytj$. 
Suppose, without loss of generality, that 
$\xtj<\ytj$. 

Since $\xb_i$ is of Type~I, for all $i\in [n]$, 
it follows that $\mS(\xtj)=\mS_I(\xtj)$ and 
$\mS(\ytj)=\mS_I(\ytj)$. Moreover,  since  
$\xtj<\ytj$ it follows that $\mS_I(\ytj) =\mS_0(\ytj)\subset \mS_0(\xtj) =\mS_I(\xtj)$.

Now notice that~\eqref{fixed_g} implies that  
$\mG_{ij}(\xtj)=x_{ij}$, for all $i\in \mS_I(\xtj)$, and $\mG_{ij}(\ytj)=y_{ij}$, for all $i\in \mS_I(\ytj)$. Since $\mS_I(\ytj)\subset \mS_I(\xtj)$ it holds that 
\begin{equation}\label{wrong:0}
\sum_{i \in \mS_I(\ytj)} \mG_{ij}(\xtj)  \le \xtj 
< \ytj  = \sum_{i \in \mS_I(\ytj)} \mG_{ij}(\ytj) \, .
\end{equation}
However, since $\mG_{ij}$ is strictly decreasing, it follows that 
$\mG_{ij}(\xtj) > \mG_{ij}(\ytj)$, for all $i\in \mS_I(\ytj)$, which contradicts~\eqref{wrong:0}. 
We conclude that $\xtj =\ytj$ and 
$\mS_I(\xtj) = \mS_I(\ytj)$. Finally, given a total investment $\xb$ of the players at a GNE, we claim that the optimal investment of every player on any CPR is unique. Indeed, if a player, say $i\in [n]$, has two optimal investments, say $x<z$, on the $j$-th CPR,  then it holds  
$\mG_{ij}(\xb) = x < z = \mG_{ij}(\xb)$, 
a contradiction. The result follows. 
\end{proof}

Theorem~\ref{one_CPR} is a direct consequence of 
Theorem~\ref{gen_one_CPR}, as we now show.

\begin{proof}[Proof of Theorem~\ref{one_CPR}] 
We know from Theorem~\ref{unique_NE} that the game admits a GNE, and it is therefore enough to show that it is unique. 
Since $m=1$, the first condition in Assumption~\ref{ass_prob} implies that no player 
invests an amount of $1$ in the CPR which in turn  implies
that all coordinates of any GNE are  of Type~I. 
The result follows from Theorem~\ref{gen_one_CPR}. 
\end{proof}

Observe that a basic ingredient in the proof of 
Theorem~\ref{gen_one_CPR} is the fact that 
$\mS(\xt) = \mS_I(\xt), \mS(\yt)=\mS_I(\yt)$ and 
$\mS_I(\yt)\subset \mS_I(\xt)$. 
Moreover, observe that the proof of Theorem~\ref{gen_one_CPR} proceeds in two steps: in the first step it is shown that 
any two GNEs admit the same total investment in the CPR, 
and in the second step it is shown that, given an optimal total investment, every player has a unique optimal investment in the CPR. 
In the following section we are going to improve 
upon the aforementioned observations. 
A bit more concretely, we are going to prove 
that the set consisting of all GNEs of a Fragile multi-CPR Game is ``small" via showing that the set 
consisting of all total investments at the GNEs is ``small".

\section{Proof of Theorem~\ref{nr_of_GNE}}\label{sec:7}

Throughout this section, we denote by $G^{(2)}$ a 
Fragile multi-CPR Game satisfying Assumption~\ref{ass_prob}. 
Moreover, given a finite set, $F$, we denote 
by $|F|$ its cardinality. 
Now consider the set 
\[
\mN(G^{(2)}) := \{\xb \in \mC_n : \xb \text{ is a GNE of } G^{(2)}\} 
\]
and, given $\xb=(\xb_1,\ldots,\xb_n)\in \mN(G^{(2)})$, let 
\[
\mT_I(\xb) = \{i\in [n] : \xb_i \text{ is of Type~I}\}
\]
and 
\[
\mT_{II}(\xb) = \{i\in [n] : \xb_i \text{ is of Type~II}\} \, .
\]
Recall the definition of 
active CPRs corresponding 
to $\xb_{-i}$, which is denoted $A(\xb_{-i})$
and is defined in~\eqref{active_cpr}, as well as 
the definition of 
effective 
CPRs corresponding to 
$\xb_{-i}$, which is denoted  $J_{\xb_{-i}}$ and is 
defined in~\eqref{J_set}.

\begin{lemma}\label{lem:antichain1}
Let $\xb=(\xb_1,\ldots,\xb_n)\in\mN(G^{(2)})$ and suppose that $i\in\mT_I(\xb)$, for some $i\in [n]$. 
Then it holds $J_{\xb_{-i}}=A(\xb_{-i})$. 
\end{lemma}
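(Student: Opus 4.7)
The plan is to prove the lemma by contradiction, exploiting the slack in the budget constraint available to Type~I players. Since $J_{\xb_{-i}} \subset A(\xb_{-i})$ by definition~\eqref{J_set}, it suffices to establish the reverse inclusion $A(\xb_{-i}) \subset J_{\xb_{-i}}$.

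Suppose, towards a contradiction, there exists $j \in A(\xb_{-i}) \setminus J_{\xb_{-i}}$. By the definitions of $A(\xb_{-i})$ and $J_{\xb_{-i}}$, this means $\ji < \omega_{ij}$ and $x_{ij} = 0$. Since $\xb_i$ is of Type~I, Theorem~\ref{kkt_thm} gives $\sum_{k \in J_{\xb_{-i}}} x_{ik} < 1$, so there is slack in the budget constraint. I would then consider the perturbed strategy $\xb_i^{\eps} := \xb_i + \eps \cdot e_j$, where $e_j$ is the $j$-th standard basis vector of $\mathbb{R}^m$. For all sufficiently small $\eps > 0$, we have $\eps < \omega_{ij} - \ji$ and $\sum_{k \in [m]} x_{ik} + \eps \le 1$, and therefore $\xb_i^{\eps} \in \vartheta_i(\xb_{-i})$ in view of~\eqref{def:constraint_policy}.

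Next I would compare utilities. Because $\mV_i(\xb_i; \xb_{-i}) = \sum_{k \in [m]} \mE_{ik}(x_{ik}; \xb_T^{k|i})$ and the perturbation only alters the $j$-th coordinate, the difference in utility reduces to
\[
\mV_i(\xb_i^{\eps}; \xb_{-i}) - \mV_i(\xb_i; \xb_{-i}) = \mE_{ij}(\eps; \ji) - \mE_{ij}(0; \ji) = \eps^{a_i} \cdot \mF_{ij}(\eps + \ji) \, .
\]
Since $\ji < \omega_{ij}$, Lemma~\ref{VTP_lemma} guarantees $\mF_{ij}(\ji) > 0$, and by continuity $\mF_{ij}(\eps + \ji) > 0$ for all sufficiently small $\eps > 0$. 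Hence the right-hand side is strictly positive, so $\mV_i(\xb_i^{\eps}; \xb_{-i}) > \mV_i(\xb_i; \xb_{-i})$. This contradicts the assumption that $\xb$ is a GNE (via Remark~\ref{br_NE}, $\xb_i \in B_i(\xb_{-i})$), and the lemma follows.

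The proof is essentially straightforward once one notices the key interplay between the Type~I slack condition $\sum x_{ik} < 1$ and the activity condition $\ji < \omega_{ij}$; there is no serious obstacle. The only subtlety worth highlighting is the need to simultaneously verify both feasibility constraints defining $\vartheta_i(\xb_{-i})$ (the box constraint $\eps < \omega_{ij} - \ji$ and the simplex constraint $\sum x_{ik} + \eps \le 1$), both of which are guaranteed by the strict inequalities available.
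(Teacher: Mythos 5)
Your proof is correct and follows the same route as the paper: a contradiction argument in which the Type~I slack $\sum_k x_{ik}<1$ together with $\ji<\omega_{ij}$ lets player $i$ profitably shift a small $\eps>0$ into the $j$-th CPR, contradicting the GNE property. The paper states this in one sentence; you have simply filled in the feasibility check against $\vartheta_i(\xb_{-i})$ and the explicit utility gain $\eps^{a_i}\mF_{ij}(\eps+\ji)>0$, both of which are accurate.
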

\begin{proof}
Recall from Theorem~\ref{kkt_thm} that $J_{\xb_{-i}}$ is such that $x_{ij}>0$ if and only if $j\in J_{\xb_{-i}}$.
Suppose, towards arriving at a contradiction,
that there exists $j\in A(\xb_{-i})\setminus J_{\xb_{-i}}$. Since $i\in\mT_I(\xb)$, it follows that  $\sum_{j\in J_{\xb_{-i}}} x_{ij}<1$ and thus player $i$ can increase her utility 
by investing a suitably small amount $\eps>0$ in the $j$-th CPR. This  contradicts the fact that $\xb$ is a GNE, and the lemma follows. 
\end{proof}

\begin{lemma}\label{lem:antichain2}
Let $\xb=(\xb_1,\ldots,\xb_n)$ and $\yb=(\yb_1,\ldots,\yb_n)$ be two elements from 
$\mN(G^{(2)})$ such that $\xtj\le \ytj$, for all $j\in [m]$. Then the following hold true:
\begin{enumerate}  
\item If $i\in \mT_I(\xb)$, then  $J_{\yb_{-i}}\subset J_{\xb_{-i}}$.   
\item It holds  $\mT_I(\xb) \subset \mT_{I}(\yb)$. 
\end{enumerate}
\end{lemma}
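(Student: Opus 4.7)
The plan is to tackle the two parts separately and sequentially, using Lemma~\ref{lem:antichain1} to handle the first part and then leveraging the monotonicity of the auxiliary functions $\mG_{ij}$ and $\mH_{ij}$ from Section~\ref{sec:5} to settle the second.

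For part (1), I fix $j\in J_{\yb_{-i}}$ and aim to show $j\in J_{\xb_{-i}}$. Since $y_{ij}>0$ and the proof of Theorem~\ref{kkt_thm} permits the assumption $y_{ij}<\omega_{ij}-\yji$, I conclude $\ytj<\omega_{ij}$. The hypothesis $\xtj\le\ytj$ then forces $\ji=\xtj-x_{ij}\le\xtj<\omega_{ij}$, so $j\in A(\xb_{-i})$. Because $i\in\mT_I(\xb)$, Lemma~\ref{lem:antichain1} yields $A(\xb_{-i})=J_{\xb_{-i}}$, and therefore $j\in J_{\xb_{-i}}$, as claimed.

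For part (2), I argue by contradiction: suppose $i\in\mT_I(\xb)$ but $i\notin\mT_I(\yb)$, so $\yb_i$ is of Type~II. By Theorem~\ref{kkt_thm} there exists $\kappa_0\ge 0$ with $y_{ij}=\mH_{ij}(\ytj;\kappa_0)$ for every $j\in J_{\yb_{-i}}$, and $\sum_{j\in J_{\yb_{-i}}}y_{ij}=1$. On the other hand, Theorem~\ref{kkt_thm} applied at $\xb$ gives $x_{ij}=\mG_{ij}(\xtj)$ for $j\in J_{\xb_{-i}}$ together with $\sum_{j\in J_{\xb_{-i}}}x_{ij}<1$. Part (1) provides the inclusion $J_{\yb_{-i}}\subset J_{\xb_{-i}}$, so for each $j\in J_{\yb_{-i}}$ I can chain $y_{ij}=\mH_{ij}(\ytj;\kappa_0)\le \mG_{ij}(\ytj)\le \mG_{ij}(\xtj)=x_{ij}$, where the first inequality is the pointwise bound $\mH_{ij}\le\mG_{ij}$ noted just after~\eqref{fixed_h}, and the second is the strict monotonicity of $\mG_{ij}$ from Lemma~\ref{gh_functions} applied to $\xtj\le\ytj$. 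Summing over $J_{\yb_{-i}}\subset J_{\xb_{-i}}$ yields the contradiction $1=\sum_{j\in J_{\yb_{-i}}}y_{ij}\le\sum_{j\in J_{\xb_{-i}}}x_{ij}<1$. The only subtle point worth flagging is that the monotonicity of $\mG_{ij}$ and $\mH_{ij}(\,\cdot\,;\kappa_0)$ in Lemma~\ref{gh_functions} is stated on the interval $[0,\omega_{ij}]$, so one must know that $\xtj\le\ytj<\omega_{ij}$ for each relevant $j$; but that bound is precisely what part (1) delivers, so no extra work is needed.
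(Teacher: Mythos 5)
Your proposal is correct and follows essentially the same route as the paper: part (1) is the contrapositive phrasing of the paper's argument via Lemma~\ref{lem:antichain1}, and part (2) is the paper's contradiction argument, with the chain $y_{ij}=\mH_{ij}(\ytj;\kappa_0)\le\mG_{ij}(\ytj)\le\mG_{ij}(\xtj)=x_{ij}$ merely applying the pointwise bound $\mH_{ij}\le\mG_{ij}$ and the monotonicity from Lemma~\ref{gh_functions} in the opposite order from the paper's $x_{ij}=\mG_{ij}(\xtj)\ge\mH_{ij}(\xtj;\kappa_0)\ge\mH_{ij}(\ytj;\kappa_0)=y_{ij}$. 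Your remark about checking that the relevant arguments lie in $[0,\omega_{ij}]$ is a welcome clarification that the paper leaves implicit.
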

\begin{proof} 
Fix $i\in [n]$ such that $i\in\mT_I(\xb)$ and  
notice that Lemma~\ref{lem:antichain1} implies that $\xtj\ge \omega_{ij}$, for all $j\in [m]\setminus J_{\xb_{-i}}$. 
Since $\xtj\le \ytj$, for all $j\in [m]$,  it  
holds $\ytj\ge \omega_{ij}$, for all $j\in [m]\setminus J_{\yb_{-i}}$, and 
we conclude that $J_{\yb_{-i}}\subset J_{\xb_{-i}}$. The first statement follows. 

We proceed with the second statement.
Let $i\in [n]$ be such that $\xb_i$ is of Type~I. 
We have to show that $\yb_i$ is also of Type~I.
Suppose that this is not true; hence $\yb_i$ is of Type~II, and thus it holds 
$\sum_{j\in J_{\yb_{-i}}}y_{ij} =1$. 
Since $\yb_i$ is of Type~II, it follows from~\eqref{fixed_h} that $\mH_{ij}(\ytj; \kappa_0)=y_{ij}$, for all $j\in J_{\yb_{-i}}$ and some $\kappa_0\ge 0$. 
Since $\xb_i$ is of Type~I and $\mH_{ij}$ is decreasing, we may apply~\eqref{fixed_g} and conclude 
\[
x_{ij} = \mG_{ij}(\xtj) \ge \mH_{ij}(\xtj;\kappa_0) \ge \mH_{ij}(\ytj;\kappa_0)  = y_{ij},\, \text{ for all }\, j\in J_{\yb_{-i}} \, .
\]
Hence $1 > \sum_{j\in J_{\yb_{-i}}} x_{ij}\ge \sum_{j\in J_{\yb_{-i}}} y_{ij} = 1$, a contradiction. 
The result follows. 
\end{proof}

\begin{lemma}\label{lem:antichain3}
Assume that $m\le n$. Then it holds $\mT_I(\xb)\neq \emptyset$, for every $\xb\in \mN(G^{(2)})$.
\end{lemma}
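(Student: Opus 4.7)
The plan is to argue by contradiction. Suppose $\mT_I(\xb) = \emptyset$ for some GNE $\xb \in \mN(G^{(2)})$, so every coordinate $\xb_i$ is of Type~II. By Theorem~\ref{kkt_thm}, each player $i \in [n]$ satisfies $\sum_{j \in J_{\xb_{-i}}} x_{ij} = 1$, and $x_{ij} = 0$ for $j \in [m]\setminus J_{\xb_{-i}}$. Consequently $\sum_{j\in [m]} x_{ij} = 1$ for every $i \in [n]$, and summing over $i$ yields
\begin{equation*}
\sum_{j\in [m]} \xtj \;=\; \sum_{i \in [n]} \sum_{j\in [m]} x_{ij} \;=\; n.
\end{equation*}

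The crux is then to establish an upper bound $\xtj < 1$ for each individual $j\in [m]$. If $\xtj = 0$ this is trivial. If $\xtj > 0$, then there exists at least one player $i$ with $x_{ij} > 0$; since $\xb$ is a GNE we have $\xb_i \in \vartheta_i(\xb_{-i})$, so by the definition of the constraint policy in~\eqref{def:constraint_policy}, either $j \in A(\xb_{-i})$ and $x_{ij} \le \omega_{ij} - \ji$, or else $j \notin A(\xb_{-i})$ and the policy forces $x_{ij} = 0$, contradicting $x_{ij} > 0$. Thus $j \in A(\xb_{-i})$ and
\begin{equation*}
\xtj \;=\; x_{ij} + \ji \;\le\; \omega_{ij} \;<\; 1,
\end{equation*}
where the strict inequality $\omega_{ij} < 1$ is the content of Lemma~\ref{VTP_lemma}.

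Combining the bound $\xtj < 1$ for each $j \in [m]$ with the equality $\sum_j \xtj = n$ obtained above gives
\begin{equation*}
n \;=\; \sum_{j\in [m]} \xtj \;<\; m,
\end{equation*}
which contradicts the hypothesis $m \le n$. Hence $\mT_I(\xb)$ must be non-empty, as claimed. The main conceptual step is simply recognizing that the Type~II characterization forces \emph{every} player to use their full unit endowment, while the constraint policy (via $\omega_{ij} < 1$) caps every per-CPR total strictly below $1$; the hypothesis $m \le n$ then creates a pigeonhole-style clash. No technical obstacle is anticipated beyond invoking the already-established constraint-policy bound and Lemma~\ref{VTP_lemma}.
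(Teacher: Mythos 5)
Your proof is correct and follows essentially the same route as the paper: assume every player is of Type~II, deduce $\sum_{j\in[m]}\xtj = n$, and clash this with the fact that no CPR can carry a total investment of $1$ or more when $m\le n$. The only real difference is in how that last fact is obtained: the paper argues by pigeonhole that some CPR $k$ would have $\xb_T^{(k)}\ge 1$ and then derives a contradiction from the optimality of the best responses in that CPR, whereas you bound every $\xtj$ by $\omega_{ij}<1$ directly from the feasibility requirement $\xb_i\in\vartheta_i(\xb_{-i})$ in Definition~\ref{def:gne} together with~\eqref{def:constraint_policy} and Lemma~\ref{VTP_lemma} --- a slightly more streamlined version of the same step.
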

\begin{proof}
Suppose that the conclusion is not true; hence there 
exists $\xb=(\xb_1,\ldots,\xb_n)\in \mN(G^{(2)})$ such that 
$\mT_{II}(\xb)= [n]$, which in turn implies that
$\sum_{j\in [m]} \xtj = n \ge m$. Hence there exists 
$k\in [m]$ such that $\xb_T^{(k)}\ge 1$. 
We now claim that $\xb_T^{k|i}\ge \omega_{ik}$, for all $i\in\mS_{II}(\xb_T^{(k)})$, where $\mS_{II}(\cdot)$ is defined in~\eqref{def:support2} and $\omega_{ik}$ is given by Lemma~\ref{VTP_lemma}. To prove the claim, 
notice that if there exists $i\in\mS_{II}(\xb_T^{(k)})$ 
such that $\xb_T^{k|i}< \omega_{ik}$ then, since 
$x_{ik}$ is a best response of player $i$ in the $k$-th CPR, by Remark~\ref{br_NE}, it would assume a value 
for which $x_{ik}+ \xb_T^{k|i}<\omega_{ij}$, which contradicts the fact that $\xb_T^{(k)}\ge 1$. 
The claim follows. 

However, since $x_{ik}$ is a best response and $\xb_T^{k|i}\ge \omega_{ik}$, for all $i\in\mS_{II}(\xb_T^{(k)})$, it follows that 
$x_{ik}=0$, for all $i\in\mS_{II}(\xb_T^{(k)})$. 
This contradicts the fact that $\xb_T^{(k)}\ge 1$, and the result follows. 
\end{proof}

\begin{lemma}\label{lem:antichain5}
Assume that $m\le n$. Then 
there do not exist distinct elements  $\xb=(\xb_1,\ldots,\xb_n)$ and $\yb=(\yb_1,\ldots,\yb_n)$ in 
$\mN(G^{(2)})$ for which it holds  
$\xtj\le \ytj$, for all $j\in [m]$, and $\sum_{j\in [m]}\xtj < \sum_{j\in [m]}\ytj$. 
\end{lemma}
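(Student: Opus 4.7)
The plan is to derive a contradiction by showing that, player by player, the total investment in $\yb$ cannot exceed that in $\xb$, which would contradict $\sum_j \xtj < \sum_j \ytj$. The proof naturally splits according to whether a player's strategy at $\xb$ is of Type~I or Type~II.

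First, consider any player $i \in \mT_I(\xb)$. By Lemma~\ref{lem:antichain2}(2), we also have $i \in \mT_I(\yb)$, and by Lemma~\ref{lem:antichain2}(1) it holds $J_{\yb_{-i}} \subset J_{\xb_{-i}}$. I would combine this with~\eqref{fixed_g} and the strict monotonicity of $\mG_{ij}$ (Lemma~\ref{gh_functions}) as follows: for $j \in J_{\yb_{-i}}$ both $x_{ij} = \mG_{ij}(\xtj)$ and $y_{ij} = \mG_{ij}(\ytj)$ are given by the same decreasing function, so $\xtj \le \ytj$ forces $x_{ij} \ge y_{ij}$; for $j \in J_{\xb_{-i}} \setminus J_{\yb_{-i}}$ we get $y_{ij} = 0 \le x_{ij}$; and for $j \notin J_{\xb_{-i}}$ (hence $j \notin J_{\yb_{-i}}$) both coordinates vanish. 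Summing over $j$ yields $\sum_{j\in [m]} y_{ij} \le \sum_{j\in [m]} x_{ij}$ for every $i \in \mT_I(\xb)$.

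Next, for any player $i \in \mT_{II}(\xb)$, by the definition of Type~II we have $\sum_{j\in [m]} x_{ij} = 1$, while the constraint $\yb_i \in C_m$ gives $\sum_{j\in [m]} y_{ij} \le 1$. Thus the same inequality $\sum_{j\in [m]} y_{ij} \le \sum_{j\in [m]} x_{ij}$ holds for these players as well, and it therefore holds uniformly over $i \in [n]$.

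Finally, summing over all players and interchanging the order of summation gives
\[
\sum_{j \in [m]} \ytj \;=\; \sum_{i \in [n]} \sum_{j \in [m]} y_{ij} \;\le\; \sum_{i \in [n]} \sum_{j \in [m]} x_{ij} \;=\; \sum_{j \in [m]} \xtj,
\]
which directly contradicts the hypothesis $\sum_{j\in [m]} \xtj < \sum_{j\in [m]} \ytj$. I expect no significant obstacle: the slightly delicate part is matching up the effective-CPR sets $J_{\xb_{-i}}$ and $J_{\yb_{-i}}$ in the Type~I case, but this is precisely what Lemmas~\ref{lem:antichain1} and~\ref{lem:antichain2} are designed to handle; the Type~II case is immediate from the simplex constraint. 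Note also that the hypothesis $m \le n$ is used only indirectly, through Lemma~\ref{lem:antichain2}(2) and the framework that ensures $\mT_I(\xb)$ behaves well via Lemma~\ref{lem:antichain3}.
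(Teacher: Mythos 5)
Your proof is correct, and it rests on exactly the same ingredients as the paper's: Lemma~\ref{lem:antichain2} (both parts), the strict monotonicity of $\mG_{ij}$ from Lemma~\ref{gh_functions} together with~\eqref{fixed_g}, and the fact that a Type~II response sums to $1$. The difference is in how the final contradiction is organized. The paper works with the aggregate identity $\sum_j \xtj = \sum_{i\in\mT_I(\xb)}\sum_{j\in J_{\xb_{-i}}}\mG_{ij}(\xtj) + |\mT_{II}(\xb)|$ (and its analogue for $\yb$) and then juggles the cardinality terms $|\mT_{II}(\xb)\setminus\mT_{II}(\yb)| = |\mT_I(\yb)\setminus\mT_I(\xb)|$ in~\eqref{eq:xxyy}--\eqref{eq:xy}; you instead prove the per-player inequality $\sum_{j}y_{ij}\le\sum_{j}x_{ij}$ for every $i\in[n]$ and sum. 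Your bookkeeping is cleaner: for $i\in\mT_{II}(\xb)$ you only need the simplex constraint $\yb_i\in C_m$, with no case analysis on the type of $\yb_i$, and you avoid the set-difference manipulations entirely. One consequence worth noting, which you half-observe at the end: your argument never actually uses $m\le n$. The paper invokes Lemma~\ref{lem:antichain3} to guarantee $\mT_I(\xb)\neq\emptyset$, but neither your argument nor, on inspection, the paper's really needs this (if $\mT_I(\xb)=\emptyset$ then $\sum_j\xtj=n$ is already maximal), and Lemma~\ref{lem:antichain2} is proved without the hypothesis $m\le n$. So your version in fact establishes the statement without that assumption, which is a mild strengthening rather than a gap.
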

\begin{proof}
Suppose that such GNEs do exist. 
Since $m\le n$, it follows from Lemma~\ref{lem:antichain3} that $\mT_I(\xb)\neq \emptyset$, for every $\xb\in \mN(G^{(2)})$. 

Notice that Lemma~\ref{lem:antichain2} implies that  $\mT_I(\xb)\subset\mT_I(\yb)$ and $J_{\yb_{-i}}\subset J_{\xb_{-i}}$, and~\eqref{fixed_g} 
implies that $\mG_{ij}(\xtj)=x_{ij}$, for all $i\in\mT_I(\xb)$ and all $j\in J_{\xb_{-i}}$. 
Similarly, it holds 
$\mG_{ij}(\ytj)=y_{ij}$, for all $i\in\mT_I(\yb)$ and all $j\in J_{\yb_{-i}}$. 
Hence we may write  
\[
\sum_{j\in [m]} \xtj = \sum_{i\in\mT_I(\xb)} \sum_{j\in J_{\xb_{-i}}} \mG_{ij}(\xtj) + |\mT_{II}(\xb)| 
\]
as well as  
\[
\sum_{j\in [m]} \ytj = \sum_{i\in\mT_I(\yb)} \sum_{j\in J_{\yb_{-i}}} \mG_{ij}(\ytj) + |\mT_{II}(\yb)|  \, .
\]
Since $\sum_j\xtj < \sum_j\ytj$, $|\mT_I(\xb)|\le |\mT_I(\yb)|$ and  $|\mT_{II}(\xb)|\ge |\mT_{II}(\yb)|$ hold true, it follows that 
\begin{eqnarray}\label{eq:xxyy}
\sum_{i\in\mT_I(\xb)}\, \sum_{j\in J_{\xb_{-i}}} \mG_{ij}(\xtj) + |\mT_{II}(\xb)\setminus \mT_{II}(\yb)| &<& \sum_{i\in\mT_I(\xb)}\, \sum_{j\in J_{\yb_{-i}}} \mG_{ij}(\ytj) \nonumber \\
&+& \sum_{i\in\mT_I(\yb)\setminus \mT_I(\xb)}\,\, \sum_{j\in J_{\yb_{-i}}} \mG_{ij}(\ytj) \, .
\end{eqnarray}
However, the fact that $\mG_{ij}$ is decreasing 
implies that $\mG_{ij}(\xtj)\ge \mG_{ij}(\ytj)$, for 
all $i\in \mT_I(\xb)$ and all $j\in J_{\yb_{-i}}$; 
hence it holds 
\begin{equation}\label{eq:xx}
\sum_{i\in\mT_I(\xb)}\, \sum_{j\in J_{\xb_{-i}}} \mG_{ij}(\xtj) \ge \sum_{i\in\mT_I(\xb)}\, \sum_{j\in J_{\yb_{-i}}} \mG_{ij}(\ytj) \, .
\end{equation}
Moreover, since $\sum_{j\in J_{\yb_{-i}}} \mG_{ij}(\ytj)<1$, for all $i\in\mT_I(\yb)\setminus \mT_I(\xb)$, it holds 
\begin{equation}\label{eq:xy}
\sum_{i\in\mT_I(\yb)\setminus \mT_I(\xb)}\,\, \sum_{j\in J_{\yb_{-i}}} \mG_{ij}(\ytj) < |\mT_I(\yb)\setminus \mT_I(\xb)| = |\mT_{II}(\xb)\setminus \mT_{II}(\yb)| \, .
\end{equation}
Now notice that~\eqref{eq:xx} and~\eqref{eq:xy} contradict~\eqref{eq:xxyy}. The result follows. 
\end{proof}

Finally, the proof of Theorem~\ref{nr_of_GNE} 
requires the following measure-theoretic results. 
Here and later, given 
a positive integer $k\ge 1$, $\mL^k$ denotes $k$-dimensional Lebesgue measure. Moreover, 
given a function $f:\mathbb{R}^k\to\mathbb{R}^m$ 
and a set $B\subset \mathbb{R}^m$, we denote 
$f^{-1}(B) := \{\xb\in\mathbb{R}^k: f(\xb)\in B\}$ 
the preimage of $B$ under $f$. 

\begin{lemma}\label{0_property}
Let $f:\mathbb{R}^d\to\mathbb{R}^m$ be a continuously differentiable function for which 
$\mL^d(\{\xb\in\mathbb{R}^d : \nabla f(\xb)=0\})=0$. 
Then it holds $\mL^d(f^{-1}(A))=0$, for every $A\subset\mathbb{R}^m$  for which $\mL^m(A)=0$. 
\end{lemma}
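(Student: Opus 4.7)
The plan is to combine a countable covering of the set of ``regular'' points with a local change-of-variables argument (via the implicit / inverse function theorem) and Fubini's theorem.

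First, set $E := \{\xb \in \mathbb{R}^d : \nabla f(\xb) = 0\}$. By hypothesis $\mL^d(E) = 0$, and since $f^{-1}(A) \subset E \cup (f^{-1}(A) \setminus E)$, by countable subadditivity of $\mL^d$ it suffices to show that $\mL^d(f^{-1}(A) \setminus E) = 0$. Every $\xb_0 \in \mathbb{R}^d \setminus E$ admits some pair $(i,j) \in [m]\times [d]$ with $\tfrac{\partial f_i}{\partial x_j}(\xb_0) \neq 0$; by continuity this partial derivative remains nonzero on an open neighborhood $V_{\xb_0}$ of $\xb_0$. Applying the implicit (inverse) function theorem, one may shrink $V_{\xb_0}$ so as to obtain a $C^1$-diffeomorphism $\Phi_{\xb_0} : V_{\xb_0} \to W_{\xb_0}$ onto an open subset of $\mathbb{R}^d$ under which $f \circ \Phi_{\xb_0}^{-1}$ takes a local normal form, namely a coordinate projection onto $\mathbb{R}^m$ in its non-degenerate components. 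By Lindel\"of's property, one extracts a countable subcover $\{V_k\}_{k\geq 1}$ of the open set $\mathbb{R}^d \setminus E$.

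On each piece $V_k$, write $f^{-1}(A) \cap V_k = \Phi_k^{-1}(\pi^{-1}(A) \cap W_k)$, where $\pi : \mathbb{R}^d \to \mathbb{R}^m$ denotes the projection appearing in the local normal form. Since $\mL^m(A) = 0$, Fubini's theorem yields $\mL^d(\pi^{-1}(A) \cap W_k) = 0$. Because $\Phi_k^{-1}$ is $C^1$, it is locally Lipschitz on relatively compact subsets of $W_k$; as such Lipschitz maps preserve $\mL^d$-null sets, it follows that $\mL^d(f^{-1}(A) \cap V_k) = 0$ for every $k \geq 1$. Summing over $k$ and adjoining the null set $E$ gives $\mL^d(f^{-1}(A)) = 0$, as desired.

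The main subtlety lies in the local normal form: one needs the rank of $Df$ to be maximal (equal to $m$) in order for $f$ to be locally a coordinate projection and for the Fubini step to apply cleanly. In the regime relevant to the paper (where the lemma will be used with $d \geq m$ and the hypothesis forces $Df$ to have full rank outside a null set), this follows directly from the implicit function theorem. If some points in $\mathbb{R}^d \setminus E$ have intermediate rank strictly between $1$ and $m$, one addresses them separately by stratifying according to the rank of $Df$ and invoking the constant rank theorem on each stratum; each low-rank stratum is a $C^1$-submanifold of dimension less than $d$, hence itself of $\mL^d$-measure zero, and the argument above applies without modification on the full-rank stratum. This stratification step is the only delicate point; everything else is routine covering/Fubini bookkeeping.
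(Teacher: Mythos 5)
The paper does not actually prove this lemma: it is quoted verbatim from Ponomarev (Theorem~1 of the cited reference), whose genuine hypothesis is that the Jacobian $Df$ has full rank $m$ almost everywhere. Your write-up is a real proof attempt of that result, and its core --- local normal form at full-rank points, Fubini applied to the coordinate projection, Lipschitz invariance of null sets, and a Lindel\"of covering --- is correct on the open set where $\operatorname{rank} Df(\xb)=m$. That is the only case the paper needs, since there $f(\xb)=(\xb_T^{(1)},\ldots,\xb_T^{(m)})$ is linear of full rank $m$.

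There is, however, a genuine gap at points of intermediate rank, and your proposed repair does not work. A single nonvanishing partial derivative $\tfrac{\partial f_i}{\partial x_j}(\xb_0)\neq 0$ only gives $\operatorname{rank} Df(\xb_0)\ge 1$; it does not produce a local normal form in which $f$ is a coordinate projection onto $\mathbb{R}^m$, which is exactly what your Fubini step requires. You acknowledge this, but your claim that ``each low-rank stratum is a $C^1$-submanifold of dimension less than $d$, hence of $\mL^d$-measure zero'' is false: the set where $\operatorname{rank} Df=k$ with $1\le k<m$ can have full measure. Take $d=m=2$ and $f(x,y)=(x,0)$: then $\nabla f$ never vanishes, the rank equals $1$ everywhere, and $f^{-1}(\mathbb{R}\times\{0\})=\mathbb{R}^2$ even though $\mathbb{R}\times\{0\}$ is $\mL^2$-null. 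This example also shows that the lemma as literally stated (with a hypothesis only on the vanishing of the full gradient) is false for $m\ge 2$; the correct hypothesis --- the one in Ponomarev's theorem and the one satisfied in the paper's application --- is that $\operatorname{rank} Df(\xb)=m$ outside an $\mL^d$-null set. Under that hypothesis your argument closes cleanly: discard the rank-deficient set (null by hypothesis), note that the remaining full-rank set is open (some $m\times m$ minor of $Df$ is nonzero there), and run your covering/Fubini argument on it without any stratification.
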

\begin{proof}
See~\cite[Theorem~1]{Ponomarev}. 
\end{proof}

Let $m\ge 1$ be an integer. 
A set $A\subset [0,1]^m$ is called an 
\emph{antichain} if it does not contain two distinct elements 
$\xb=(x_1,\ldots,x_m)$ and $\yb=(y_1,\ldots,y_m)$ 
such that $x_j\le y_j$, for all $j\in [m]$.

\begin{lemma}\label{sperner}
Let $A\subset [0,1]^m$ be an antichain. 
Then $\mL^m(A)=0$.
\end{lemma}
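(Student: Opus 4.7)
My plan is to reduce to a one-dimensional observation via Tonelli's theorem.

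First I would fix the last coordinate direction and, for each $\zb=(x_1,\ldots,x_{m-1})\in[0,1]^{m-1}$, consider the vertical slice $A_{\zb}:=\{t\in[0,1]:(\zb,t)\in A\}$. The key step is to verify that each such slice contains at most one point: if $a<b$ both belonged to $A_{\zb}$, then the distinct elements $(\zb,a)$ and $(\zb,b)$ of $A$ would satisfy $(\zb,a)\le(\zb,b)$ coordinatewise, violating the antichain property. Consequently $\mL^1(A_{\zb})=0$ for every $\zb\in[0,1]^{m-1}$.

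The conclusion then follows from a single application of Tonelli's theorem, which gives $\mL^m(A)=\int_{[0,1]^{m-1}}\mL^1(A_{\zb})\,d\zb=0$. The only technical point requiring attention is the measurability of $A$, needed in order to legitimately invoke Tonelli. This is harmless in the intended application of the lemma to the set in Lemma~\ref{lem:antichain5}, where the relevant antichain admits a Borel description coming from the KKT-based characterisation of GNEs. In any case, one may sidestep the matter entirely by proceeding by induction on $m$: the base case $m=1$ is immediate, since $[0,1]$ is totally ordered by $\le$ and hence any antichain in it is a singleton, and the inductive step reapplies the slice argument to $A_t:=\{\yb\in[0,1]^{m-1}:(\yb,t)\in A\}$, which is itself an antichain in $[0,1]^{m-1}$ and hence has $(m-1)$-dimensional measure zero by the inductive hypothesis. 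I do not expect any substantive obstacle here; the result is essentially a one-line consequence of the antichain property combined with Tonelli.
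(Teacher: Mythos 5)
Your one-point-per-slice observation is correct and is the natural elementary route, but the argument as written has a genuine gap: Tonelli's theorem requires $A$ to be Lebesgue measurable, and an arbitrary antichain is not obviously measurable. This is not a formality one can wave away, because ``every slice is null'' does not by itself force a set to be null: under the continuum hypothesis there are non-measurable subsets of $[0,1]^2$ of positive outer measure all of whose horizontal sections are countable (Sierpi\'nski's example). Your proposed induction on $m$ does \emph{not} sidestep the issue, since the inductive step must still integrate the $(m-1)$-dimensional slice measures, i.e.\ it invokes Tonelli again and therefore needs measurability of $A$ at every stage. The fallback that the antichain arising in the application of the lemma is Borel is also not something the paper verifies, and the lemma is stated for arbitrary antichains, so the statement itself demands an argument that is insensitive to measurability.

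The gap is repairable, and the repair stays close to your slicing idea. Let $U:=\{\yb\in[0,1]^m:\yb\ge \mathbf{a}\ \text{coordinatewise for some}\ \mathbf{a}\in A\}$, an upper set containing $A$. If some $\mathbf{a}\in A$ were interior to $U$, then $\mathbf{a}-\eps\mathbf{1}\in U$ for small $\eps>0$, producing $\mathbf{a}'\in A$ with $\mathbf{a}'\le \mathbf{a}$ and $\mathbf{a}'\neq\mathbf{a}$; hence $A\subset\partial U$. Along any line in the diagonal direction $\mathbf{1}=(1,\dots,1)$ the trace of $U$ is an upper ray in the line parameter, and since a point of $U$ forces the open positive orthant above it into $\mathrm{int}(U)$ while a point outside $U$ forces the open negative orthant below it into the exterior, $\partial U$ meets each such line in at most one point. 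As $\partial U$ is closed, Fubini in rotated coordinates gives $\mL^m(\partial U)=0$, hence $\mL^m(A)=0$ with no measurability hypothesis. (Note that your coordinate-direction slices would not work for $\partial U$, which can meet a coordinate line in more than one point; the diagonal direction is what makes the closed-superset argument go through.) For comparison, the paper dispatches the lemma in one line via Lebesgue's density theorem --- a density point of $A$ (or of a measurable hull of $A$) would have to be dominated by nearby points of $A$ --- with \cite{Engel} and \cite{EMPR} as alternatives; that route likewise handles outer measure directly. So your approach is genuinely different from the paper's, but it needs the measurability repair above before it is complete.
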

\begin{proof}
The result is an immediate consequence of  Lebesgue's density theorem. Alternatively, it follows from the main result in~\cite{Engel}, and  from~\cite[Theorem~1.3]{EMPR}.
\end{proof}
 
Now given $\xb=(\xb_1,\ldots,\xb_n)\in\mC_n$, let 
$\mathbf{v}_{\xb}$ denote the vector 
\begin{equation}\label{v_x}
\mathbf{v}_{\mathbf{x}} := (\xb_T^{(1)},\ldots,\xb_T^{(m)})\in [0,1]^m,  
\end{equation}
where 
$\xtj, j\in [m]$, is defined in~\eqref{sum_xs}. 
Finally, given $N\subset \mC_n$, define the set 
\begin{equation}\label{Wn_set}
W_N := \bigcup_{\xb\in N} \mathbf{v}_{\xb} \, .
\end{equation}

The proof of Theorem~\ref{nr_of_GNE} is almost complete. 

\begin{proof}[Proof of Theorem~\ref{nr_of_GNE}]
To simplify notation, let us set $N:=\mN(G^{(2)})$.  We have to show that 
$\mL^{nm}(N)=0$. 

Now let $f$ denote the map $f:\mC_n \to [0,1]^m$  
given by $f(\xb) = \mathbf{v}_{\xb}$, where 
$\mathbf{v}_{\xb}$ is defined in~\eqref{v_x}. 
It is straightforward to verify that 
$\{\xb\in\mC_n :\nabla f(\xb)=0 \}=\emptyset$. 

Now consider the set $W_N$, defined in~\eqref{Wn_set}, 
and notice that 
Lemma~\ref{lem:antichain5} implies that 
$W_N$ is an antichain; hence it follows 
from Lemma~\ref{sperner} that 
$\mL^{m}(W_N)=0$. Therefore, Lemma~\ref{0_property} yields
\[
\mL^{nm}(N) = \mL^{nm}(f^{-1}(W_N)) = 0 \, , 
\]
as desired. 
\end{proof}

\section{A restricted version of the  game}\label{sec:7.5}

Let $G^{(2)}$ denote a 
Fragile multi-CPR Game satisfying Assumption~\ref{ass_prob}. 
In this section we show that $G^{(2)}$ admits finitely many 
GNEs, subject to the constraint that 
the total investment in each CPR is fixed. We then use this result, in the next section, in order to formulate 
a conjecture which is equivalent to Conjecture~\ref{conj:1}. 
Before being more precise, we need some extra 
piece of notation.

Given a  set $F\subset [m]$ and 
real numbers $\{r_j\}_{j\in F}\subset [0,1]$, 
indexed by $F$, 
we denote by $W(\{r_j\}_{j\in F})$ the set  
\[
W(\{r_j\}_{j\in F}) := \{\xb =(\xb_1,\ldots,\xb_n)\in\mC_n : \xtj = r_j, \text{ for } j \in F \} \, ,
\]
where $\xtj$ is defined in~\eqref{sum_xs}. 
In other words, $W(\{r_j\}_{j\in F})$ consists 
of those strategy profiles for which the total 
investment in the 
CPRs corresponding to elements in $F$ is fixed, 
and equal to the given numbers $\{r_j\}_{j\in F}$.

In this section we prove the following. 

\begin{theorem}\label{w_set}
Fix real numbers $r_1,\ldots,r_m\in [0,1]$. Then the set 
$W:=W(r_1,\ldots,r_m)$ contains at most $2^{n\cdot (m+1)}$ GNEs of $G^{(2)}$.  
\end{theorem}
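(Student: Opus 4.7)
The plan is to attach to each GNE in $W$ a per-player combinatorial label, show that the total number of such labels is at most $2^{n(m+1)}$, and verify that each label determines at most one GNE once $r_1,\ldots,r_m$ are prescribed. Specifically, for any GNE $\xb=(\xb_1,\ldots,\xb_n)\in W$, Theorem~\ref{kkt_thm} supplies two discrete pieces of data per player $i\in [n]$: the type $t_i\in\{\mathrm{I},\mathrm{II}\}$ of $\xb_i$, and the set of effective CPRs $J_i:=J_{\xb_{-i}}\subset [m]$. Each player thus admits at most $2\cdot 2^m=2^{m+1}$ labels, so the number of joint labels across all $n$ players is at most $(2^{m+1})^n=2^{n(m+1)}$. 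The useful reduction is that, once $\xtj=r_j$ is frozen, one has $\ji=r_j-x_{ij}$, so $\psi_{ij}(x_{ij};\ji)$ collapses to the affine-in-$x_{ij}$ expression $x_{ij}\cdot\tfrac{\partial}{\partial\xtj}\mF_{ij}(r_j)+a_i\mF_{ij}(r_j)$, decoupling the first-order conditions across the $n$ players.

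Now fix a joint label and suppose $\xb\in W$ realizes it. For $j\notin J_i$, Theorem~\ref{kkt_thm} forces $x_{ij}=0$. If $t_i=\mathrm{I}$, each equation $\psi_{ij}(x_{ij};\ji)=0$ is affine in $x_{ij}$ with nonzero slope (by Assumption~\ref{ass_prob}), which pins down $x_{ij}=\mG_{ij}(r_j)$ uniquely. If $t_i=\mathrm{II}$, I would apply Lemma~\ref{opt_lemma} to the system consisting of $\sum_{j\in J_i}x_{ij}=1$ together with $f_{ij}(x_{ij})=\kappa_0$ for $j\in J_i$, where
\[
f_{ij}(x):=x^{a_i-1}\Bigl(x\cdot\tfrac{\partial}{\partial\xtj}\mF_{ij}(r_j)+a_i\mF_{ij}(r_j)\Bigr).
\]
The required strict monotonicity of $f_{ij}$ on $(0,1]$ follows from $\tfrac{\partial}{\partial\xtj}\mF_{ij}(r_j)<0$, $a_i-1\le 0$, and $\mF_{ij}(r_j)>0$ (the last because $j\in J_i$ forces $r_j<\omega_{ij}$, as recorded in the proof of Theorem~\ref{kkt_thm}); a brief differentiation makes this explicit. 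Lemma~\ref{opt_lemma} then yields a unique tuple $(\kappa_0,\{x_{ij}\}_{j\in J_i})$.

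The main step requiring care is the strict-monotonicity check for $f_{ij}$, since Lemma~\ref{opt_lemma} is what delivers Type~II uniqueness; everything else amounts to combinatorial bookkeeping. Combining the Type~I and Type~II uniqueness across all players shows that each joint label corresponds to at most one GNE in $W$, which yields the claimed bound $|W\cap\mN(G^{(2)})|\le 2^{n(m+1)}$.
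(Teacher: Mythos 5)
Your proof is correct and takes essentially the same approach as the paper's: both arguments bound the GNEs in $W$ by $2^{m+1}$ discrete choices per player (type together with the support $J$) and then show that, once the totals $r_j$ are frozen so that $\psi_{ij}$ decouples and becomes monotone in $x_{ij}$ alone, each choice pins down the player's strategy uniquely. The only cosmetic difference is that you obtain Type~II uniqueness directly from Lemma~\ref{opt_lemma}, whereas the paper derives the same fact by contradiction from the monotonicity of $x^{a_i-1}\psi_{ij}$ in Lemmas~\ref{w00_set} and~\ref{w0_set}.
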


The proof requires a couple of observations which we collect 
in the following lemmata. 

\begin{lemma}\label{w00_set}
Suppose that $\xb=(\xb_1,\ldots,\xb_n)$ and $\yb=(\yb_1,\ldots,\yb_n)$ are two GNEs of $G^{(2)}$ such that $\xb,\yb \in W:=W(r_j)$ and   
$0<x_{ij}<y_{ij}$, for
some $i\in [n]$, $j\in [m]$ and $r_j\in [0,1]$. 
Then either $\xb_i$ is of Type~II or $\yb_i$ is of Type~II. 
\end{lemma}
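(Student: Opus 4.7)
The plan is to argue by contradiction, using the characterisation of Type~I best responses via the auxiliary function $\mG_{ij}$ given in~\eqref{fixed_g}.

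First I would unpack the hypothesis $\xb,\yb\in W(r_j)$: by definition this forces $\xtj=\ytj=r_j$ for every $j\in [m]$. In particular, for the specific index $j$ from the hypothesis, the total investment in the $j$-th CPR is the same at both equilibria. Next I would note that the assumption $0<x_{ij}<y_{ij}$ implies $x_{ij},y_{ij}>0$, so by the definition of the set of effective CPRs (see~\eqref{J_set}), we have $j\in J_{\xb_{-i}}$ and $j\in J_{\yb_{-i}}$.

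Now suppose, towards a contradiction, that \emph{both} $\xb_i$ and $\yb_i$ are of Type~I. By Remark~\ref{br_NE}, $\xb_i$ is a best response of player $i$ to $\xb_{-i}$ and $\yb_i$ is a best response of player $i$ to $\yb_{-i}$. Since both best responses are of Type~I, the identity~\eqref{fixed_g} (applied with the respective strategy profiles) yields
\begin{equation*}
x_{ij}=\mG_{ij}(x_{ij}+\xb_T^{j|i})=\mG_{ij}(\xtj)=\mG_{ij}(r_j)
\quad\text{and}\quad
y_{ij}=\mG_{ij}(y_{ij}+\yb_T^{j|i})=\mG_{ij}(\ytj)=\mG_{ij}(r_j).
\end{equation*}
Consequently $x_{ij}=y_{ij}$, which contradicts the strict inequality $x_{ij}<y_{ij}$. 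Hence at least one of $\xb_i,\yb_i$ must fail to be of Type~I, and since Theorem~\ref{kkt_thm} classifies every best response as either Type~I or Type~II, the conclusion follows.

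There is essentially no serious obstacle here: the whole argument rests on the observation that, once the \emph{total} investment in each CPR is frozen at the value $r_j$, the identity~\eqref{fixed_g} determines a Type~I player's investment in the $j$-th CPR uniquely as $\mG_{ij}(r_j)$, independently of which equilibrium we look at. The only mild subtlety is making sure that $j$ does belong to the effective sets $J_{\xb_{-i}}$ and $J_{\yb_{-i}}$ so that~\eqref{fixed_g} is applicable, but this is immediate from $x_{ij},y_{ij}>0$ together with the definition in~\eqref{J_set}.
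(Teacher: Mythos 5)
Your proof is correct and follows exactly the paper's argument: assume both $\xb_i$ and $\yb_i$ are of Type~I, apply~\eqref{fixed_g} with the common total investment $r_j$ to get $x_{ij}=\mG_{ij}(r_j)=y_{ij}$, and contradict $x_{ij}<y_{ij}$. Your added check that $j$ lies in the effective sets $J_{\xb_{-i}}$ and $J_{\yb_{-i}}$ is a small but welcome extra precision that the paper leaves implicit.
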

\begin{proof}
Suppose, towards arriving at a contradiction, that the conclusion is not true. 
Then both $\xb_i$ and $\yb_i$ are of Type I, and 
thus~\eqref{fixed_g} implies that 
$\mG_{ij}(r)=x_{ij}$ and $\mG_{ij}(r)=y_{ij}$.  Hence it holds $\mG_{ij}(r_j)=x_{ij}< y_{ij}=\mG_{ij}(r_j)$, a contradiction. 
The result follows. 
\end{proof}

\begin{lemma}\label{w0_set}
Suppose that $\xb=(\xb_1,\ldots,\xb_n)$ and $\yb=(\yb_1,\ldots,\yb_n)$ are two GNEs of $G^{(2)}$ for which it holds $\xb,\yb \in W:=W(r_l,r_{\ell})$  
and $0<x_{ij}<y_{ij}$ and $x_{i\ell}>y_{i\ell}>0$, for some $i\in [n]$ and $\{j, \ell\}\subset  [m]$. 
Then either 
$\xb_i$ is of Type~I or $\yb_i$ is of Type~I.
\end{lemma}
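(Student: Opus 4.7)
The plan is to argue by contradiction, assuming that both $\xb_i$ and $\yb_i$ are of Type~II, and then to exploit the fact that two players with Type~II best responses satisfying the same totals $r_j, r_\ell$ cannot have their components move in opposite directions on CPRs $j$ and $\ell$ without forcing their common Lagrange multipliers $\kappa_0$ to move both ways at once.

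First, I would invoke Theorem~\ref{kkt_thm} to extract constants $\kappa_0^{\xb}, \kappa_0^{\yb}\geq 0$ and effective sets $J_{\xb_{-i}}, J_{\yb_{-i}}$ such that $x_{iq}^{a_i-1}\psi_{iq}(x_{iq};\xb_T^{q|i}) = \kappa_0^{\xb}$ for $q\in J_{\xb_{-i}}$ and analogously for $\yb$. From $x_{ij},x_{i\ell},y_{ij},y_{i\ell}>0$ we get $\{j,\ell\}\subset J_{\xb_{-i}}\cap J_{\yb_{-i}}$, and since $\xb,\yb\in W(r_j,r_\ell)$ one has $\xb_T^{(j)}=\yb_T^{(j)}=r_j$ and $\xb_T^{(\ell)}=\yb_T^{(\ell)}=r_\ell$. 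In particular, $\mF_{iq}(r_q)$ and $\tfrac{\partial}{\partial x_{iq}}\mF_{iq}(r_q)$ depend only on the totals $r_q$, not on which of $\xb$ or $\yb$ we use.

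Next, for each $q\in\{j,\ell\}$, I would introduce the one-variable map
\[
F_q(t) := t^{a_i}\,\frac{\partial}{\partial t}\mF_{iq}(r_q) + a_i\,t^{a_i-1}\,\mF_{iq}(r_q),\qquad t\in(0,r_q],
\]
so that the Type~II equations become $F_j(x_{ij})=F_j(y_{ij})\text{ (no)}\ldots$ wait, rather $F_j(x_{ij})=\kappa_0^{\xb}$, $F_j(y_{ij})=\kappa_0^{\yb}$, $F_\ell(x_{i\ell})=\kappa_0^{\xb}$, $F_\ell(y_{i\ell})=\kappa_0^{\yb}$. The main computation is to show that $F_q$ is strictly decreasing on $(0,\omega_{iq})$; differentiating gives $F_q'(t)=a_i t^{a_i-2}\bigl[t\,\tfrac{\partial}{\partial t}\mF_{iq}(r_q)+(a_i-1)\mF_{iq}(r_q)\bigr]$, and since $r_q<\omega_{iq}$ (because player $i$ has a positive best response on CPR $q$, so Theorem~\ref{kkt_thm} rules out $x_{iq}=\omega_{iq}-\xb_T^{q|i}$), Assumption~\ref{ass_prob} and Lemma~\ref{VTP_lemma} yield $\mF_{iq}(r_q)>0$ and $\tfrac{\partial}{\partial t}\mF_{iq}(r_q)<0$, making both bracketed terms non-positive and at least one strictly negative.

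Finally, I would combine the monotonicity with the hypotheses: $x_{ij}<y_{ij}$ forces $\kappa_0^{\xb}=F_j(x_{ij})>F_j(y_{ij})=\kappa_0^{\yb}$, while $x_{i\ell}>y_{i\ell}$ forces $\kappa_0^{\xb}=F_\ell(x_{i\ell})<F_\ell(y_{i\ell})=\kappa_0^{\yb}$, a contradiction. Hence at least one of $\xb_i,\yb_i$ must be of Type~I. The only delicate point of the plan is verifying $r_q<\omega_{iq}$ so that $F_q$ actually is strictly decreasing; everything else is a direct calculation from the KKT characterisation.
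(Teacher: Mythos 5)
Your proposal is correct and follows essentially the same route as the paper: assume both $\xb_i$ and $\yb_i$ are of Type~II, use Theorem~\ref{kkt_thm} to equate the multipliers across the two effective CPRs, and derive a contradiction from the strict monotonicity of $x\mapsto x^{a_i-1}\psi_{iq}(x;r_q-x)$ (your $F_q$ is exactly the paper's $\Psi_q$, with the totals fixed at $r_j,r_\ell$). The paper merely asserts this monotonicity, whereas you supply the derivative computation and correctly flag that $r_q<\omega_{iq}$ (hence $\mF_{iq}(r_q)>0$) is needed and follows from $x_{iq}>0$ at a best response; this is a welcome extra detail rather than a deviation.
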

\begin{proof}
Suppose, towards arriving at a contradiction, that both $\xb_{-i}$ and $\yb_{-i}$ are of Type~II. 
Recall the definition of $\psi_{ij}(\,\cdot\, ;\, \cdot\,)$, given in~\eqref{psi_function}, and notice that, since both $\xb_i,\yb_i$ are of Type~II,  Theorem~\ref{kkt_thm} implies the existence of $\kappa_x,\kappa_y\ge 0$ such that 
\[
\kappa_x =x_{ij}^{a_i-1}\cdot\psi_{ij}(x_{ij}; r_j-x_{ij}) = x_{i\ell}^{a_i-1}\cdot\psi_{i\ell}(x_{i\ell}; r_{\ell}-x_{i\ell})
\]
and 
\[
\kappa_y = y_{ij}^{a_i-1}\cdot\psi_{ij}(y_{ij}; r_j-y_{ij}) = y_{i\ell}^{a_i-1}\cdot \psi_{i\ell}(y_{i\ell}; r_{\ell}-y_{i\ell}) \, .
\]
Now notice that, for all $k\in [m]$, the function $\Psi_k(x):=x^{a-1}\cdot\psi_{ik}(x; r-x)$ is decreasing in $x$, for fixed $r>0$ and $a\in (0,1]$. 
Hence,  $x_{ij}<y_{ij}$ implies that $\kappa_x > \kappa_y$, and $x_{i\ell}>y_{i\ell}$ implies that 
$\kappa_x < \kappa_y$, a contradiction. 
The result follows. 
\end{proof}

We may now proceed with the proof of the main result of this section. 

\begin{proof}[Proof of Theorem~\ref{w_set}]
For every $i\in [n]$, define the set 
\[
N_i := \{\xb_i\in C_m : (\xb_i, \xb_{-i})\in W, \text{ for some } \xb_{-i}\in\mC_{-i}\} \, .
\]
We first show that the cardinality of $N_i$, denoted $|N_i|$, is at most $2^{m+1}$. 

Let $\xb_i\in N_i$, and 
recall from Theorem~\ref{kkt_thm}, and~\eqref{J_set}, that 
there exists $J\subset [m]$ such that 
$x_{ij}>0$, when $j\in J$, and $x_{ij}=0$ when $j\in [m]\setminus J$. 
In other words, to every $\xb_i\in N_i$ there corresponds a set $J\subset [m]$ such that 
$x_{ij}>0$ if and only if $j\in J$. 
Now, given $J\subset [m]$, let 
\[
N_J := \{\xb_i\in N_i : x_{ij}>0 \text{ if and only if } j\in J\} \, .
\]
Assume first that $|J|\ge 2$. In this case we 
claim that $|N_J|\le 2$.  Indeed, if $|N_J|\ge 3$, then there are two elements,  
say $\xb^{(1)},\xb^{(2)}\in N_J$, which are either both of Type~I, or both of Type~II. 
If both $\xb^{(1)}$ and $\xb^{(2)}$ are of Type~I, then there exists $j\in J$ such that, without loss of generality, it holds $x_{ij}^{(1)} < x_{ij}^{(2)}$; which contradicts Lemma~\ref{w00_set}. 
If both $\xb^{(1)}$ and $\xb^{(2)}$ are of Type~II, then there exist $j,\ell\in J$ such that 
$x_{ij}<y_{ij}$ and $x_{i\ell}> y_{i\ell}$; 
which contradicts Lemma~\ref{w0_set}.
The claim follows. 

If $|J|=1$, say $J=\{j\}$, we claim that $|N_J|\le 1$. 
Indeed, suppose that $|N_J|\ge 2$ holds true and notice that every element of $N_J$ is of Type~I. 
However, the assumption that $|N_J|\ge 2$ implies that there exist $\xb^{(1)},\xb^{(2)}\in N_J$ such that $0<x_{ij}^{(1)}<y_{ij}^{(1)}$; which contradicts Lemma~\ref{w00_set}. The second claim follows.  

Since there are $2^m$ subsets $J\subset [m]$, and for each $J$ it holds $|N_J|\le 2$,  it follows that there are at most $2^{m+1}$ elements in $N_i$. 
Since there are $n$ players  in the game, the result follows. 
\end{proof}

\section{Concluding remarks and conjectures}\label{sec:8}

Let $G^{(2)}$ denote a Fragile multi-CPR Game satisfying 
Assumption~\ref{ass_prob}, and let 
$\mN(G^{(2)})$ be the set consisting of all GNEs 
of $G^{(2)}$. 
So far we have proven that the $(n\cdot m)$-dimensional 
Lebesgue measure of $\mN(G^{(2)})$ equals zero, 
but there are several problems and questions 
that remain open. 
First and foremost, we believe that the following  holds true.

\begin{conjecture}\label{conj:2}
Let $N:=\mN(G^{(2)})$. Then the 
antichain $W_N$, defined in~\eqref{Wn_set}, is finite. 
\end{conjecture}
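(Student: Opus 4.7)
The plan is to stratify $\mN(G^{(2)})$ by a discrete combinatorial invariant, prove finiteness within each stratum, and then sum. To each GNE $\xb$ I would attach the signature $\sigma(\xb) = (\tau_i, S_i)_{i\in[n]}$, where $\tau_i \in \{I, II\}$ is the type of $\xb_i$ provided by Theorem~\ref{kkt_thm} and $S_i = \{j : x_{ij} > 0\} \subset [m]$. There are at most $2^{n(m+1)}$ such signatures, so writing $W_{\sigma} = \{\mathbf{v}_{\xb} : \xb \in \mN(G^{(2)}),\ \sigma(\xb) = \sigma\}$ it suffices to show that every $W_\sigma$ is finite. Fix $\sigma$, set $T_I = \{i : \tau_i = I\}$ and $T_{II} = \{i : \tau_i = II\}$, and recall from~\eqref{fixed_g}--\eqref{fixed_h} that every $\xb$ with $\sigma(\xb)=\sigma$ is completely reconstructed from the tuple $((\xtj)_{j\in [m]}, (\kappa_0^i)_{i\in T_{II}})$ via $x_{ij} = \mG_{ij}(\xtj)$ for $i \in T_I,\ j \in S_i$, and $x_{ij} = \mH_{ij}(\xtj;\kappa_0^i)$ for $i \in T_{II},\ j \in S_i$, with $x_{ij}=0$ otherwise.

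With this reduction, each such $\xb$ corresponds to a solution of the $m + |T_{II}|$ equations
\begin{align*}
\xtj &= \sum_{i \in T_I,\ j \in S_i} \mG_{ij}(\xtj) \;+\; \sum_{i \in T_{II},\ j \in S_i} \mH_{ij}(\xtj; \kappa_0^i), \qquad j \in [m], \\
1 &= \sum_{j \in S_i} \mH_{ij}(\xtj; \kappa_0^i), \qquad i \in T_{II},
\end{align*}
in the $m + |T_{II}|$ unknowns $(\xtj)_j$ and $(\kappa_0^i)_{i \in T_{II}}$. First I would eliminate $(\xtj)_j$: by Lemma~\ref{gh_functions} the right-hand side of the $j$-th equation in the top batch is strictly decreasing in $\xtj$ while the left-hand side is strictly increasing, so that equation determines $\xtj = \Lambda_j((\kappa_0^i)_i)$ uniquely, with $\Lambda_j$ of class $C^1$ via the implicit function theorem. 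The problem then reduces to locating the zero set of the map $\Phi : K \to \mathbb{R}^{|T_{II}|}$, where $K \subset \mathbb{R}_{\ge 0}^{|T_{II}|}$ is compact (compactness follows because $\mH_{ij}(\xtj;\kappa_0^i) \to 0$ as $\kappa_0^i \to \infty$ for fixed $\xtj$, which together with the budget identity forces an a priori upper bound on each $\kappa_0^i$), given by
\[
\Phi_i\bigl((\kappa_0^\ell)_\ell\bigr) = \sum_{j \in S_i} \mH_{ij}\bigl(\Lambda_j((\kappa_0^\ell)_\ell);\, \kappa_0^i\bigr) - 1, \qquad i \in T_{II}.
\]

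The main obstacle is showing that $\Phi$ has only finitely many zeros. I would pursue this by proving that the Jacobian $D\Phi$ is non-singular at every zero, since then the zero set is discrete and, being contained in the compact set $K$, finite. The diagonal entry $\partial \Phi_i / \partial \kappa_0^i$ has two contributions: the direct dependence $\partial \mH_{ij}/\partial \kappa_0^i$, which is strictly negative by the fact already exploited in Lemma~\ref{w0_set} that $x \mapsto x^{a_i-1}\psi_{ij}(x;\,r-x)$ is strictly decreasing, together with the feedback through $\Lambda_j$, which is also monotone with the right sign by Lemma~\ref{gh_functions}. The off-diagonal entries $\partial \Phi_i / \partial \kappa_0^{\ell}$ for $\ell \neq i$ enter only through $(\Lambda_j)_j$ and are controlled by the same monotonicity machinery, together with the inequality $\mH_{ij} \leq \mG_{ij}$. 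The expected outcome is strict diagonal dominance of $D\Phi$, yielding non-singularity and hence finiteness of $W_\sigma$; summing over the at most $2^{n(m+1)}$ signatures then yields $|W_N| < \infty$. Establishing this diagonal dominance is the principal technical step and, in my view, the main difficulty; it requires a careful quantitative analysis of the partial derivatives of $\mH_{ij}$ in both arguments, extending the qualitative monotonicity results of Section~\ref{sec:5} used in the proofs of Theorems~\ref{one_CPR} and~\ref{nr_of_GNE}.
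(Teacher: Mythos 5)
This statement is left as an open conjecture in the paper; there is no proof of it anywhere in the text, so your proposal cannot be matched against an ``official'' argument. Your stratification by the signature $(\tau_i,S_i)_{i\in[n]}$ is natural and essentially reproduces the combinatorial bookkeeping already present in the proof of Theorem~\ref{w_set} (whence the familiar $2^{n(m+1)}$ count), and the reduction of each stratum to a square system in the unknowns $(\xtj)_j,(\kappa_0^i)_{i\in T_{II}}$ is a sensible way to organize the problem. But the proposal has a genuine gap at exactly the point where the conjecture lives: you do not prove that $D\Phi$ is nonsingular at every zero, you only state that strict diagonal dominance is ``the expected outcome.'' No estimate is offered comparing the off-diagonal entries $\partial\Phi_i/\partial\kappa_0^{\ell}$ (which couple players through the shared CPRs in $S_i\cap S_\ell$ via the implicit functions $\Lambda_j$) against the diagonal ones; the qualitative monotonicity facts of Section~\ref{sec:5} give signs but no magnitudes, and nothing in Assumption~\ref{ass_prob} visibly rules out a degenerate game in which the zero set of $\Phi$ is positive-dimensional. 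Since everything else in your argument is routine, the unproven nonsingularity claim \emph{is} the conjecture, merely rephrased; as written, the proposal establishes nothing beyond what Theorem~\ref{w_set} already provides.

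A secondary, repairable issue: the ``reconstruction'' $x_{ij}=\mH_{ij}(\xtj;\kappa_0^i)$ for $i\in T_{II}$ is not an explicit formula, because the denominator of $\mH_{ij}$ in~\eqref{h_funct} contains $x_{ij}^{-a_i}$ and not merely $\xtj$; equation~\eqref{fixed_h} is a fixed-point relation. One should instead define $x_{ij}=\beta_{ij}(\xtj,\kappa_0^i)$ as the unique solution of $x^{a_i-1}\psi_{ij}(x;\xtj-x)=\kappa_0^i$, whose existence, uniqueness and monotonicity in both arguments do follow from the computation underlying Lemma~\ref{w0_set}; with that substitution the definition of $\Lambda_j$ and the compactness of $K$ go through. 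This does not rescue the main step, but it is needed for the system you write down to be well posed.
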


Notice that if Conjecture~\ref{conj:2} holds true then, 
in view of Theorem~\ref{w_set}, 
Conjecture~\ref{conj:1} holds true as well. 
Since the converse is clearly true, it follows that 
Conjecture~\ref{conj:1} and Conjecture~\ref{conj:2} 
are equivalent. 
The exact number of GNEs in a Fragile multi-CPR Game appears to depend on
the relation between the number of players, $n$, 
and the number of CPRs, $m$. 
When $n\ge m$ we conjecture that that for every GNE
the players choose best responses of Type~I and therefore, provided this is indeed the case,  
Theorem~\ref{gen_one_CPR} would imply that 
the game admits a unique GNE. 

\begin{conjecture}\label{conj:3}
If $n\ge m$, then $|\mN(G^{(2)})| =1$.  
\end{conjecture}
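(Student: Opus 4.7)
The plan is to reduce Conjecture~\ref{conj:3} to Theorem~\ref{gen_one_CPR} by proving the stronger claim that, when $n\ge m$, every GNE $\xb$ of $G^{(2)}$ satisfies $\mT_{II}(\xb)=\emptyset$; that is, at every GNE every player's best response is of Type~I. Combined with the existence guaranteed by Theorem~\ref{unique_NE}, Theorem~\ref{gen_one_CPR} would then immediately yield $|\mN(G^{(2)})|=1$.

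First I would argue by contradiction. Fix a GNE $\xb=(\xb_1,\ldots,\xb_n)$ with $\mT_{II}(\xb)\neq \emptyset$. By Lemma~\ref{lem:antichain3} we also have $\mT_I(\xb)\neq \emptyset$, so the player set partitions non-trivially into Type~I and Type~II players. For each $i\in \mT_{II}(\xb)$, Theorem~\ref{kkt_thm} produces $\kappa_0\ge 0$ with $x_{ij}=\mH_{ij}(\xb_T^{(j)};\kappa_0)$ on the effective set $J_{\xb_{-i}}$ and $\sum_{j\in J_{\xb_{-i}}}x_{ij}=1$; since $\mH_{ij}\le \mG_{ij}$ pointwise, this forces
\[
\sum_{j\in J_{\xb_{-i}}}\mG_{ij}(\xb_T^{(j)})\ge 1 \qquad\text{for every } i\in \mT_{II}(\xb).
\]
For each $i^\ast\in \mT_I(\xb)$, Theorem~\ref{kkt_thm} yields the pointwise identity $x_{i^\ast j}=\mG_{i^\ast j}(\xb_T^{(j)})$ on $J_{\xb_{-i^\ast}}$ together with the strict inequality $\sum_{j\in J_{\xb_{-i^\ast}}}\mG_{i^\ast j}(\xb_T^{(j)})<1$.

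Second, I would exploit the hypothesis $n\ge m$ via a pigeonhole argument sharpening the proof of Lemma~\ref{lem:antichain3}. Each Type~II player contributes exactly $1$ to $\sum_{j\in[m]}\xb_T^{(j)}$ and each Type~I player contributes strictly less; the plan is to show that the mere presence of a Type~II player pushes at least one aggregate load $\xb_T^{(k)}$ close enough to the relevant threshold $\omega_{ik}$ that many players must have $x_{ik}=0$, which together with the monotonicity of $\mG_{ij}$ from Lemma~\ref{gh_functions} and the lower bound displayed above would contradict the Type~II player's budget-saturating allocation. An alternative route would be a monotone-comparison argument in the spirit of Lemmata~\ref{lem:antichain2} and~\ref{lem:antichain5}: produce an auxiliary all-Type~I GNE (via Theorem~\ref{gen_one_CPR}), compare it coordinate-wise with the hypothetical $\xb$, and show that the Type~II player cannot remain locally optimal with respect to the comparison.

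The hard part is to turn either intuition into a quantitatively tight contradiction. Lemma~\ref{lem:antichain3} rules out only the extreme case $\mT_{II}(\xb)=[n]$, where $\sum_j\xb_T^{(j)}=n\ge m$ triggers a direct pigeonhole overload; in the mixed regime $1\le |\mT_{II}(\xb)|<n$ the total investment can sit comfortably below $m$, and the $n\ge m$ hypothesis does not enter any single obvious inequality. In the one-CPR setting of Theorem~\ref{one_CPR}, the first clause of Assumption~\ref{ass_prob} ($p_j(\xb_T^{(j)})=1$ when $\xb_T^{(j)}\ge 1$) mechanically forbids Type~II best responses, but in the genuine multi-CPR regime no such direct mechanism is available. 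Bridging this gap is the real analytical obstacle, and I expect it will require either a delicate perturbation showing that every Type~II best response ceases to be individually rational when $n\ge m$, or a global degree-theoretic argument on the best-response map that exploits the concavity from Theorem~\ref{concave} to certify uniqueness of the fixed point.
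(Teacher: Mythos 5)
The statement you are trying to prove is Conjecture~\ref{conj:3}, which the paper itself leaves open: Section~\ref{sec:8} offers no proof, only the observation that \emph{if} one could show that every GNE consists solely of Type~I best responses when $n\ge m$, then Theorem~\ref{gen_one_CPR} (uniqueness among all-Type-I GNEs) together with Theorem~\ref{unique_NE} (existence) would yield $|\mN(G^{(2)})|=1$. Your reduction is exactly this one, so you have correctly identified the route the authors themselves envision. But the reduction is the easy half, and your proposal does not close the other half: you never actually prove that $\mT_{II}(\xb)=\emptyset$ for every GNE $\xb$ when $n\ge m$. You candidly say as much, and your diagnosis of why the obvious arguments fail is accurate --- Lemma~\ref{lem:antichain3} only excludes the all-Type-II case, because only then does the pigeonhole bound $\sum_{j}\xb_T^{(j)}=n\ge m$ force some CPR's load to reach $1$; in the mixed regime $1\le|\mT_{II}(\xb)|<n$ the aggregate load can stay strictly below $m$ on every coordinate and no single inequality in the paper's toolkit uses the hypothesis $n\ge m$.

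Concretely, the gap is the claim in your first paragraph that the presence of a Type~II player leads to a contradiction. The inequality $\sum_{j\in J_{\xb_{-i}}}\mG_{ij}(\xb_T^{(j)})\ge 1$ that you derive from $\mH_{ij}\le\mG_{ij}$ is correct but is not by itself contradictory: it merely says that an unconstrained (Type~I) optimizer facing the same aggregate loads would want to invest a total exceeding the budget, which is precisely the situation Remark~\ref{rem:1} flags as possible. Your two suggested escape routes (a perturbation argument showing Type~II responses are never individually rational when $n\ge m$, or a degree-theoretic fixed-point count) are plausible directions but are not carried out, and the comparison argument with an auxiliary all-Type-I GNE runs into the problem that Theorem~\ref{gen_one_CPR} guarantees at most one such GNE, not that one exists. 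As it stands, the proposal is a correct reduction plus an honest statement of the open problem, not a proof.
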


Another line of research is to investigate the \emph{best response 
dynamics} of a Fragile multi-CPR Game,  
which may be seen as a behavioral rule along  
which  players fix an initial investment in the  CPRs and proceed with updating their investment,  over rounds, in such a way that in the $t$-th round player $i\in [n]$ 
invests $\bb_i^{(t)} :=B_i(\xb_{-i}^{(t)})$, where 
$B_i(\cdot)$ is defined in~\eqref{def_BR} and 
$\xb_{-i}^{(t)}\in\mC_{-i}$ is the strategy profile of all  players except player $i$ in the $t$-th round. A natural question to ask is whether the best response dynamics converge, i.e., whether there exists a round $t_0$ such that $\bb_i^{(t)} = \bb_i^{(t_0)}$, for all $t\ge t_0$ and all $i\in [n]$. 

\begin{conjecture}\label{conj:4}
The best response dynamics of $G^{(2)}$ converge. 
\end{conjecture}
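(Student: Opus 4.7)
The plan is to exploit the strategic-substitutes structure latent in the monotonicity of the auxiliary functions $\mG_{ij}$ and $\mH_{ij}$ established in Lemma~\ref{gh_functions}. First I would show that the best-response map $B_i : \mC_{-i} \to C_m$ is order-reversing in the componentwise sense: if $\yb_{-i}$ weakly dominates $\xb_{-i}$ coordinatewise, so that the quantity $\ji$ computed from $\yb_{-i}$ is at least as large as that computed from $\xb_{-i}$ for every $j\in [m]$, then every coordinate of $B_i(\yb_{-i})$ is no larger than the corresponding coordinate of $B_i(\xb_{-i})$. For Type~I responses this follows from~\eqref{fixed_g} and the strict monotonicity of $\mG_{ij}$ on the active set of CPRs. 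For Type~II responses and for transitions between the two types the argument mirrors the proof of Lemma~\ref{lem:antichain2}, with the Lagrange multiplier $\kappa_0$ from Theorem~\ref{kkt_thm} serving as the comparison quantity across the budget constraint.

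Once $B_i$ is known to be order-reversing on the compact lattice $\mC_n$, I would apply the classical even--odd subsequence argument for simultaneous best-response dynamics. Starting from $\xb^{(0)}=\mathbf{0}$ we have $\xb^{(1)}\ge \xb^{(0)}$ coordinatewise, and inductively the sequences of even and odd iterates are monotone and bounded, hence convergent in $\mC_n$ to limits $\xb^{\mathrm{even}},\xb^{\mathrm{odd}}$ that are fixed points of the two-step map $B\circ B$. A symmetric argument starting from a maximal initial profile yields an upper orbit, and a squeeze argument would then show that every trajectory lies eventually between the two monotone envelopes. To conclude that the full sequence converges, I would invoke the strict monotonicity of $\mG_{ij}$ and $\mH_{ij}$ to show $\xb^{\mathrm{even}}=\xb^{\mathrm{odd}}$: if the two limits differed at some coordinate, applying $B$ once more would have to reverse the inequality strictly, contradicting both being fixed points of $B\circ B$.

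The hard part will be the discontinuity and possible non-monotonicity of $B_i$ across the boundary separating Type~I from Type~II responses. At this boundary the budget constraint $\sum_j x_{ij}\le 1$ begins or ceases to bind, the multiplier $\kappa_0$ switches from zero to a positive value, and the effective set $J_{\xb_{-i}}$ can change abruptly; consequently the coordinatewise order-reversal of $B_i$ is not automatic and must be verified by a careful case analysis more intricate than the arguments in Lemmas~\ref{lem:antichain1} and~\ref{lem:antichain2}. A secondary difficulty is ruling out higher-order cycles of period greater than two, which the even--odd argument alone does not exclude; resolving this would likely require either an auxiliary Lyapunov function built from the aggregate prospect-theoretic utility $\sum_{i,j}\mE_{ij}(x_{ij};\ji)$ or an appeal to the measure-zero structure of $\mN(G^{(2)})$ established in Theorem~\ref{nr_of_GNE}, combined with the conjectured finiteness in Conjecture~\ref{conj:1}, to exclude pathological limit cycles.
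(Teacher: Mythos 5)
This statement is a \emph{conjecture} in the paper, not a theorem: the authors give no proof, and in Section~\ref{sec:8} they explicitly state that their analysis ``does not provide sufficient evidence for the holistic validity of Conjecture~\ref{conj:4}'', offering only numerical evidence. So there is no paper proof to compare against, and your proposal must stand on its own. It does not: its load-bearing first step is precisely the property that the paper identifies as the obstruction. You propose to show that $B_i$ is order-reversing coordinatewise, deducing this from the monotonicity of $\mG_{ij}$ for Type~I responses and handling Type~II and the transitions ``by a careful case analysis.'' But the paper's own concluding discussion points out that the best-response correspondence is decreasing only for players whose best response is of Type~I, that this monotonicity ``may no longer be true'' when a player moves from a Type~II response to a Type~I response or between Type~II responses, and moreover that Theorem~\ref{kkt_thm} does not guarantee uniqueness of the effective set $J_{\xb_{-i}}$, so $B_i$ may not even be single-valued. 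A set-valued, possibly non-monotone map cannot be fed into the classical even--odd (tâtonnement) argument for strategic substitutes without substantial additional structure. Labelling this ``the hard part'' and deferring it is not a proof step; it is a restatement of the open problem.

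There are two further gaps. First, even granting order-reversal, the even--odd argument only yields convergence of the two-step iterates to fixed points of $B\circ B$; to conclude $\xb^{\mathrm{even}}=\xb^{\mathrm{odd}}$ you invoke strict monotonicity of $\mG_{ij}$ and $\mH_{ij}$, but these functions describe the response through the aggregates $\xtj$, not the full profile, and a strict decrease in one coordinate of $\ji$ can be offset by changes in another CPR via the budget constraint, so no contradiction is forthcoming without more work. Second, your fallback for excluding longer cycles appeals to Conjecture~\ref{conj:1}, which is itself open (and, even if true, finiteness of $\mN(G^{(2)})$ does not by itself rule out limit cycles of the dynamics, since points on a cycle need not be equilibria). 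In short, the proposal is a reasonable research programme whose critical steps coincide exactly with the known difficulties, and it does not establish the conjecture.
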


When $m=1$, it is shown in~\cite{Hota} that the best 
response dynamics of the Fragile CPR Game converge to its  Nash Equilibrium. This is obtained as a 
consequence of the fact that the best response correspondence is single-valued and 
decreasing  in the total investment in the CPR  
(see the remarks following~\cite[Proposition~7]{Hota}). 
Moreover, it is not difficult to verify that the 
Nash equilibrium of the Fragile CPR Game is also 
the Generalized Nash equilibrium. Hence, 
the best response dynamics of a Fragile CPR Game converge 
to the Generalized Nash equilibrium. 
When $m\ge 2$, the best response correspondence need 
no longer be decreasing in each CPR.  
It is decreasing for those players whose best response is 
of Type~I, as can be easily seen using the fact that 
the auxiliary function $\mG_{ij}$ is decreasing. 
This monotonicity may no longer be true when a  
player moves from a best response of Type~II to a best response 
of Type~I, or from a best response of Type~II 
to a best response of the same type. Furthermore, Theorem~\ref{kkt_thm} does not guarantee that the set of effective CPR, defined in~\eqref{J_set}, is unique. Hence, the best response correspondence may not be single-valued. 
So far our theoretical analysis does not provide sufficient evidence 
for the holistic validity of Conjecture~\ref{conj:4}. However,  
our numerical experiments suggest that 
Conjecture~\ref{conj:4} holds true, and we expect that we will be able 
to report on that matter in the future.

\end{document}